\patchcmd{\@maketitle}{\newpage}{}{}{} 
\numberwithin{equation}{section}
\newtheoremstyle{fancy}{}{}{\itshape}{}{\textsc\bgroup}{.\egroup}{ }{}
\newtheoremstyle{fancy2}{}{}{\rm}{}{\textsc\bgroup}{.\egroup}{ }{}
\theoremstyle{fancy}
\newcounter{intro}
\numberwithin{equation}{section}    
\newtheorem{cor}[intro]{Corollary}
\newtheorem{lem}[intro]{Lemma}
\newtheorem{prop}[intro]{Proposition}
\newtheorem{thm}[intro]{Theorem}
\newtheorem{named}{\name}
\newcommand{\name}{Proof of}
\theoremstyle{fancy2}
\newtheorem{dfn}[intro]{Definition}
\newcounter{axa}
\newtheorem{rem}[intro]{Remark}
\newcommand{\cref}[1]{Corollary~\ref{#1}}
\renewcommand{\div}{\operatorname{div}}
\newcommand{\supp}{\operatorname{supp}}
\newcommand{\vol}{\operatorname{vol}}
\newcommand{\riem}{\mathbf{Riem}}
\newcommand{\Chrt}[3]{\tilde{\Gamma}^{#1}_{#2 #3}}
\newcommand{\Chrtg}[3]{\tilde{\Gamma}^{#1}_{#2 #3}(g)}
\newcommand{\Chr}[3]{\Gamma^{#1}_{#2 #3}}
\newcommand{\hChr}[3]{\widehat{\Gamma}^{#1}_{#2 #3}}
\newcommand{\Chrn}[3]{\boldsymbol{\Gamma}^{#1}_{#2 #3}}
\newcommand{\Chrtn}[4]{\/^{(#4)}\tilde{\Gamma}^{#1}_{#2 #3}}
\newcounter{subequation} 
 \newlength\mtabskip\mtabskip=-1.25cm
	 \def\mtabLong{long} 
\newcommand{\eq}[1]{\begin{equation}#1\end{equation}}
\newcommand{\alg}[1]{\begin{aligned}#1\end{aligned}}
\newcommand{\bc}{\begin{cases}}
\newcommand{\ec}{\end{cases}}
\newcommand{\p}[1]{\partial_{#1}}
\newcommand{\pp}[1]{\partial_{p^{#1}}}
\newcommand{\hp}[1]{\widehat{\partial}_{#1}}
\newcommand{\Ab}[1]{\|#1\|}
\newcommand{\ab}[1]{|#1|}
\newcommand{\Abi}[1]{\|#1\|_{\infty}}
\newcommand{\Abk}[2]{\|#1\|_{H^{#2}}}
\newcommand{\abg}[1]{|#1|_g}
\newcommand{\al}{\alpha}
\newcommand{\be}{\beta}
\newcommand{\ga}{\gamma}
\newcommand{\la}{\lambda}
\newcommand{\Si}{\Sigma}
\newcommand{\De}{\Delta}
\newcommand{\na}{\nabla}
\newcommand{\Ip}[1]{\int #1 dp}
\newcommand{\sg}{\sqrt{g}}
\newcommand{\br}[1]{\boldsymbol{\varrho}_{#1}}
\newcommand{\ev}[2]{\mathscr E_{#1,#2} }
\newcommand{\mcl}[1]{\mathcal{#1}}
\newcommand{\mcr}[1]{\mathscr{#1}}
\newcommand{\mbf}[1]{\mathbf{#1}}
\newcommand{\A}{\mbf{A}}
\newcommand{\B}{\mbf{B}}
\newcommand{\et}{\mbf E_{\mathrm{tot}}}
\def\div{ \mbox{div}}
\begin{document}


\title[Nonlinear stability of the Milne model with matter]{ Nonlinear stability of the Milne model with matter}
\author[L.~Andersson, D.~Fajman]{ Lars Andersson, David Fajman}


\date{\today}

\subjclass[2010]{53Z05, 83C05, 35Q75}
\keywords{Nonvacuum Einstein flow, Einstein-Vlasov system, Nonlinear Stability, Milne model}

\address{
Lars Andersson\newline
Max-Planck Institute for Gravitational Physics\newline
Am M\"uhlenberg 1\newline
D-14476 Potsdam, Germany\newline
laan@aei.mpg.de
}

\address{
David Fajman\newline
Faculty of Physics, 
University of Vienna\\ \newline
Boltzmanngasse 5\\ \newline
1090 Vienna, Austria\\ \newline
David.Fajman@univie.ac.at
}

\maketitle
\begin{abstract}
We show that any 3+1-dimensional Milne model is future nonlinearly, asymptotically stable in the set of solutions to the Einstein-Vlasov system. For the analysis of the Einstein equations we use the constant-mean-curvature-spatial-harmonic gauge. For the distribution function the proof makes use of geometric $L^2$-estimates based on the Sasaki-metric. The resulting estimates on the energy momentum tensor are then upgraded by employing the natural continuity equation for the energy density. 
\end{abstract}


\section{Introduction}

\subsection{Cosmological spacetimes and stability}
We consider the following class of cosmological vacuum spacetimes. Let the $M$ be a closed 3-manifold admitting an Einstein metric $\ga$ with negative Einstein constant $\mu=-\frac{2}{9}$, i.e.
\eq{
R_{ij}[\ga]=-\frac29\ga_{ij},
}
where the specific value of $\mu$ is chosen for convenience. A spacetime of the form $((0,\infty)\times M,\overline g)$ with
\eq{\label{background}
\overline g=-dt^2+\frac{t^2}9 \cdot \gamma
}
is known as a \emph{Milne model} and is a solution to the vacuum Einstein equations. Its future nonlinear stability under the vacuum Einstein flow has been shown in \cite{AnMo11}  and constitutes the second stability result for the vacuum Einstein equations without symmetry assumptions beside the corresponding one for Minkowski space \cite{CK92}. While the stability of the Minkowski spacetime under the vacuum Einstein flow has been generalized to several Einstein-matter systems \cite{BZ,LR,LM,Ta16,FJS17-2,LT17} this is not the case for the Milne model. We address this problem for the Einstein-Vlasov system.

\subsection{The stability problem for the Einstein-Vlasov system}
The Einstein-Vlasov system (EVS) reads
\eq{\alg{
\overline R_{\mu\nu}-\frac12\overline R \overline g_{\mu\nu}&= \int_{\mathscr P_x} f p_{\mu}p_\nu \mu_{\mathscr P_x}\\
X_{\overline g}f&=0,
}}
where $X_{\overline g}$ denotes the geodesic spray and $f$ a distribution function with domain $\mathscr P_x\subset T\overline M$, the mass-shell of future directed particles for a fixed mass $m$. It models spacetimes containing ensembles of self-gravitating, collisionless particles and constitutes an almost accurate model for spacetime on large scales, where collisions are negligible and galaxies and galaxy clusters indeed interact solely be their mutual self-gravitation. Its mathematical study in the context of the Cauchy problem dates back to the first works by Rein and Rendall on the evolution of spherically-symmetric perturbations of Minkowski space \cite{ReRe92} and the construction of static nonvacuum solutions \cite{ReRe93}. Substantial progress in the study of the EVS happened since then. For a complete overview we refer to the review article by Andr\'easson \cite{An11}. Regarding the nonlinear stability problem, in particular without symmetry assumptions, first results have appeared recently
considering different geometric scenarios. Ringstr\"om's monumental work, which in particular contains a detailed local-existence theory, addresses the stability problem for exponentially expanding cosmological models \cite{Ri13}. These correspond to the presence of a positive cosmological constant in the Einstein equations, which in his case is realized by a scalar field with suitable potential. This has later been extended by Andr\'easson and Ringstr\"om to prove stability of $T^3$ Gowdy symmetric solutions (in the class of all solutions without symmetry assumptions) \cite{AnRi13}. Furthermore, the stability of Minkowski space for the Einstein-Vlasov system for massless particles has been proven by Taylor \cite{Ta16}. The stability of 2+1-dimensional cosmological spacetimes for the Einstein-Vlasov system has been proven by the second author \cite{Fa16, Fa16-2}.
We remark that in the physically interesting case of 3+1 dimensions, nonlinear stability results until very recently either required a positive cosmological constant or a restriction to the massless case. A recent series of works then established the stability of Minkowski space for the Einstein-Vlasov system by a vector-field-method approach \cite{FJS15,FJS17, FJS17-2} and also independently \cite{LT17}.\\
In the present paper is we establish the first stability result for the Einstein-Vlasov system in 3+1 dimensions in the cosmological case with vanishing cosmological constant. Moreover, to our knowledge, the present work presents the first stability result to an Einstein-matter system with vanishing cosmological constant in the cosmological case.\\ 
Further stability results for cosmological spacetimes with matter models exist but to our knowledge consider the case of a positive cosmological constant. We refer here to the works of Rodnianski-Speck and Speck on the Einstein-Euler system \cite{RoSp13,S12}, Had$\check{\mathrm{z}}$i$\acute{\mathrm{c}}$-Speck on the Einstein-dust system \cite{HS15}, Friedrich on the Einstein-dust system \cite{Fr17} and Olyniyk on the Einstein-fluid system \cite{Ol16}. 

\subsection{Nonvacuum stability of the Milne model -- Main theorem}
To prove nonlinear stability of any Milne model within the class of solutions to the Einstein-Vlasov system we 
first extend the rescaling of the geometry by the mean curvature function as done in \cite{AnMo11} to the nonvacuum case by rescaling the momentum variables $\tilde p$ accordingly. The choice of rescaling here is motivated by the behavior of the momentum support for solutions to the transport equation on the background \eqref{background}, which decreases as $\tilde p\approx t^{-2}$. The mass-shell relation of massive particles, however, prevents from obtaining a system of autonomous equations, as it occurs for the vacuum system. In the present case, some explicit time functions remain in the rescaled equations, which appear in conjunction with the energy-momentum tensor.
We then combine the technique of corrected energies to control the perturbation of the geometry as developed for the vacuum case in \cite{AnMo11} with the technique of $L^2$-Sobolev-energies for the distribution function based on the Sasaki metric on the spatial tangent bundle derived in \cite{Fa16}.
\subsubsection{Rescaling}
As in \cite{AnMo11} we use here a rescaling of the geometric variables (and in addition of the matter quantities) in terms of the mean curvature $\tau$. This rescaling is introduced in \eqref{rescaling}. Moreover, a logarithmic time variable $T$ is then introduced in \eqref{log-time}. The following discussion and statement of the main theorem is conducted with respect to these variables.
\subsubsection{Difficulties in 3+1 dimensions}
 A fundamental difference to the 2+1-dimensional case considered in \cite{Fa16} is the different structure of the matter quantity appearing in the elliptic equation for the lapse function (cf.~$\tau\eta$ in equation \eqref{lapse}). In dimension 3+1, after the appropriate rescaling, we find that this quantity does not decay faster than $e^{-T}$. This occurs already on the level of the unperturbed background geometry and implies that Sobolev norms of the gradient of the lapse function only decay as $e^{-T}$. In view of this slow decay a critical problem arises when the $L^2$-Sobolev estimates for the distribution function are considered. In the transport equation the critical term reads \eq{e^{T}\cdot N\nabla^aNp^0\frac{\partial f}{\partial p^a},}
  written in rescaled variables. Roughly analyzed\footnote{For details we refer to the $L^2$-estimates for the distribution function, which immediately clarify this conclusion.}, the decay of the lapse, of the form $\nabla N\approx\varepsilon e^{-T}$, where $\varepsilon$ denotes the smallness of the initial perturbation, then leads to a small growth of the $L^2$-Sobolev energy of the distribution function as $e^{\varepsilon T}$. The problem is then apparent if this growth of the matter perturbation couples back into the lapse equation, where it reduces the decay of the gradient of the lapse to $\varepsilon e^{(-1+\varepsilon)T}$. This cannot be closed in the sense of a suitable bootstrap argument or by an appropriate energy estimate. A correction mechanism for the $L^2$-Sobolev energy of the distribution function as used to deal with problematic shift vector terms in the 2+1-dimensional case in \cite{Fa16} seems unavailable as the critical terms here do not necessarily appear as an explicit time derivative, which allowed for the correction in \cite{Fa16}.

\subsubsection{A new estimate for the energy density}
We resolve the problem of slow decay of the lapse gradient  by a different idea. A crucial observation therefore is the fact that the matter term in the lapse equation decomposes as
\eq{
N\tau\eta = N\tau(\rho +\tau^2 \underline{\eta})
}
in rescaled variables, where $\rho$ is the rescaled energy density (cf.~\eqref{matter-rescaled}). For Vlasov matter, the remaining term $\tau^2\underline{\eta}$ has stronger decay properties due to the explicit $\tau$ variable, which can be used to compensate a growth of the $L^2$-Sobolev energy of the distribution function. This implies that accepting a small growth for the $L^2$-Sobolev energy still yields a decay of $\tau^2\underline \eta\leq \varepsilon e^{(-3+\varepsilon)T}$, which is sufficiently fast. The problematic term is in fact the rescaled energy density $\rho$. The crucial idea is not to estimate the energy density by the $L^2$-Sobolev energy of the distribution function but to use an explicit evolution equation for $\rho$, which originates from the divergence identity of the energy momentum tensor, $\nabla_\mu T^{\mu\nu}=0$. One obtains the evolution equation for the energy density or \emph{continuity equation}, which in rescaled form (cf.~Appendix B) reads
\eq{\label{ev-eq-intro}
\p T\rho=(3-N) \rho-X^a\na_a\rho+\tau N^{-1}\na_a(N^2\jmath^a)-\tau^2\frac N3g_{ab}T^{ab}-\tau^2N\Si_{ab}T^{ab},
}
where in the setting we consider the last three terms have improved decay from the additional $\tau$ factors. This seems to be a particular feature of massive collisionless matter but this structure may also be relevant for other massive matter models. If those terms are estimated by the $L^2$-Sobolev energies this additional decay can be used to compensate for the small growth and yields a uniform estimate for the standard Sobolev norm of $\rho$ without the problematic loss. This mechanism allows to close the estimates. It is important to remark that the regularity loss of the evolution equation \eqref{ev-eq-intro} for $\rho$ is compensated by the elliptic regularity of the lapse equation which requires the energy density only at one order of regularity below the top order.

\subsubsection{Structure of the proof}
The small growth of the $L^2$-Sobolev energy of the distribution function, which results from the lapse term, implies that we do not correct this energy as done in \cite{Fa16}, where we required uniform boundedness. The corresponding energy estimates here are done with respect to the rescaled variables and require higher orders of regularity but are except for these aspects similar to the ones in \cite{Fa16}. Also similarly to \cite{Fa16} we consider initial data with compact momentum support. We expect that considering non-compact momentum support results in similar decay properties of the system. However, to analyze this issue in detail another additional structural estimate for the transport equation is necessary which is subject to future works on the topic. Regarding the estimates for the perturbation of the geometry we use energy estimates and elliptic estimates according to the vacuum case \cite{AnMo11}, where in the present case additional terms due to the matter quanitities appear. 
For the sake of brevity we derive most estimates under smallness assumptions on the perturbation, which allows us to suppress higher order terms in the perturbation in the estimates and absorb them into uniform constants.
Global existence is eventually shown by a bootstrap argument, which implies that for a sufficiently small initial perturbation the smallness assumptions persist throughout the evolution and almost optimal decay holds, if we compare with the vacuum case.

\subsubsection*{Main theorem}
We formulate the main theorem using the terminology of the remainder of the manuscript. The theorem is formulated with respect to the rescaled metric and second fundamental form. After the theorem we clarify the notation used therein.

\begin{thm}\label{thm-1}
Let $(M,\ga)$ be a 3-dimensional, compact, Einstein manifold without boundary with Einstein constant $\mu=-\frac29$ and $\boldsymbol\varepsilon_{\mathrm{decay}}>0$.
Then there exists an $\varepsilon>0$ such that the future development of the rescaled initial data $(g_0,k_0,f_0)\in H^{6}(M)\times H^5(M)\times H_{\mathrm{Vl},5,3,\mathrm{c}}(TM)$ at $t=t_0$ with
\eq{
 (g_0,k_0,f_0)\in \mcr B^{\,6,5,5}_{\varepsilon}\left(\gamma, \frac{1}3 \gamma,0\right)
}
under the Einstein-Vlasov system is future complete and the rescaled metric and tracefree fundamental form $(g,\Si)$ converge as
\eq{
(g,\Si){\longrightarrow} \left( \gamma,0\right)\quad \mbox{ for }\quad{\tau\nearrow\,0}
} 
with decay rates determined by $\boldsymbol\varepsilon_{\mathrm{decay}}$ as in \eqref{final decay rates}. In particular, any 3+1-dimensional Milne model is future asymptotically stable for the Einstein-Vlasov system in the class of initial data given above.
\end{thm}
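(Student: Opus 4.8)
The plan is to establish the theorem by a continuous-induction (bootstrap) argument in the constant-mean-curvature spatial-harmonic (CMCSH) gauge of \cite{AnMo11}, carried out for the variables rescaled by the mean curvature $\tau$ introduced in \eqref{rescaling} and in the logarithmic time $T$ of \eqref{log-time}. In these variables the Milne background \eqref{background} is the stationary solution $(g,\Sigma)=(\gamma,0)$ with constant rescaled lapse $N\equiv 3$, vanishing shift $X\equiv 0$ and vanishing rescaled matter, so the theorem asserts that this fixed point attracts nearby rescaled flows as $T\to\infty$ (equivalently $\tau\nearrow 0$). I would begin by assembling the rescaled system: the Hamiltonian and momentum constraints; the evolution equations for the perturbation $(g-\gamma,\Sigma)$; the elliptic lapse equation \eqref{lapse} together with the elliptic shift equation dictated by spatial harmonicity; the rescaled Vlasov equation for $f$ on the spatial tangent bundle; and the rescaled continuity equation \eqref{ev-eq-intro} for the rescaled energy density $\rho$, which comes from $\nabla_\mu T^{\mu\nu}=0$. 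Local well-posedness in this gauge and the associated continuation criterion at the stated regularity follow as in \cite{Ri13,AnMo11,Fa16}.

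The next step is to introduce three coupled energy hierarchies. For the geometry, a corrected energy $\mathcal E$ for $(g-\gamma,\Sigma)$ at the level $H^6\times H^5$, modelled on \cite{AnMo11}, whose time derivative along the flow contains a coercive decay term coming from the spectral bound for the linearized operator at the Einstein metric $\gamma$ (with the deformations of the Einstein structure quotiented out exactly as in \cite{AnMo11}). For the distribution function, the geometric $L^2$-Sobolev energy $\mathbf E_{\mathrm{Vl}}$ of \cite{Fa16}, built from the Sasaki metric on $T\Sigma$, at the order prescribed by the data space $H_{\mathrm{Vl},5,3,\mathrm c}(TM)$; as explained in the introduction, one does not expect this energy to remain bounded, only to grow no faster than $e^{\varepsilon T}$. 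For the matter back-reaction, and this is the crux, the energy density $\rho$ is not estimated through $\mathbf E_{\mathrm{Vl}}$ but through an ordinary Sobolev norm propagated by the continuity equation \eqref{ev-eq-intro}, at one regularity order below the top — which is precisely the order at which the elliptic lapse equation needs $\rho$. The bootstrap assumptions would then posit, on the maximal existence interval $[T_0,T_\ast)$ in the gauge, smallness of all three quantities with the expected $T$-dependence: $\mathcal E$ decaying at the rate fixed by $\boldsymbol\varepsilon_{\mathrm{decay}}$, $\mathbf E_{\mathrm{Vl}}\le C\varepsilon^2 e^{\varepsilon T}$, and $\|\rho\|\le C\varepsilon$ at the admissible order.

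Under these assumptions I would close the following chain. (i) Elliptic estimates: feeding the bootstrap bounds into \eqref{lapse} and the shift equation and using the decomposition $N\tau\eta=N\tau(\rho+\tau^2\underline\eta)$, the $\rho$-contribution is $O(\varepsilon)\,|\tau|$ while the remainder carries extra powers of $|\tau|\sim e^{-T}$, so the perturbations of $N$ and $X$, and $\nabla N$, decay like $\varepsilon e^{-T}$ at the relevant orders — crucially with no loss to $e^{(-1+\varepsilon)T}$, precisely because $\rho$ did not grow. (ii) Geometric energy estimate: differentiating $\mathcal E$, the coercive term provides decay, the borderline lapse and shift terms are absorbed using (i), and the matter source terms are controlled by $\mathbf E_{\mathrm{Vl}}$ and $\|\rho\|$; this yields decay of $\mathcal E$, hence $(g,\Sigma)\to(\gamma,0)$ with the rates of \eqref{final decay rates}. (iii) Distribution-function estimate: the Sasaki-metric energy identity reproduces the estimates of \cite{Fa16}, every term decaying except the transport term $e^{T}N\nabla^aN\,p^0\,\partial f/\partial p^a$, which by (i) has coefficient $\sim\varepsilon$ and therefore produces only the admissible growth $e^{\varepsilon T}$. (iv) Energy-density estimate: in \eqref{ev-eq-intro} the first two terms are handled by a Gr\"onwall argument (the zeroth-order coefficient $3-N$ being a decaying perturbation of its vanishing background value, and the transport term $X^a\nabla_a\rho$ controlled by the smallness of the shift), while the last three terms each carry an explicit factor $\tau$ or $\tau^2$; estimating them by $\mathbf E_{\mathrm{Vl}}$ and the geometric quantities from (i)--(ii), the growing factor $e^{\varepsilon T}$ is defeated by $|\tau|\sim e^{-T}$, so $\|\rho\|$ stays $\le C\varepsilon$. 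Choosing $\varepsilon$ small in terms of $\boldsymbol\varepsilon_{\mathrm{decay}}$ and $(M,\gamma)$ then strictly improves every bootstrap bound, the continuation criterion forces $T_\ast=\infty$, future completeness follows from the uniform lapse bounds together with $\tau\nearrow 0$, and the decay rates are read off from (ii)--(iii).

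The main obstacle is precisely the feedback loop lapse $\to$ distribution function $\to$ lapse described in the introduction: the slow $e^{-T}$ decay of $\nabla N$ produces $e^{\varepsilon T}$ growth of $\mathbf E_{\mathrm{Vl}}$, which, if inserted naively into $\tau\eta$ in \eqref{lapse}, would degrade $\nabla N$ to $e^{(-1+\varepsilon)T}$ and break the bootstrap, and no correction mechanism of the \cite{Fa16} type is available because the offending transport term is not an exact $T$-derivative. Step (iv) is what breaks the loop: routing the dominant matter back-reaction through the continuity equation \eqref{ev-eq-intro} rather than through $\mathbf E_{\mathrm{Vl}}$ replaces the dangerous quantity $\rho$ by one that is genuinely bounded, while the genuinely growing piece $\underline\eta$ only ever appears against extra powers of $\tau$ and is thus harmless. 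A secondary point to monitor throughout is the regularity bookkeeping: \eqref{ev-eq-intro} loses a derivative relative to $f$, but since the lapse equation is elliptic it requires $\rho$ only one order below top, so the loss is absorbed provided the orders chosen for $\mathcal E$, $\mathbf E_{\mathrm{Vl}}$ and $\|\rho\|$ are kept mutually consistent and no estimate is invoked at a forbidden order.
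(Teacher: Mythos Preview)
Your strategy is essentially the paper's: CMCSH gauge, rescaled variables, the three energy hierarchies (corrected geometric energy, Sasaki $L^2$-energy for $f$, Sobolev norm for $\rho$ propagated by the continuity equation), elliptic estimates for $(N,X)$, and a bootstrap. The key mechanism---routing the dominant matter back-reaction through \eqref{ev-eq-intro} so that $\rho$ stays bounded while $\mathbf E_{\mathrm{Vl}}$ is allowed to grow like $e^{\varepsilon T}$---is correctly identified and is exactly what the paper does.

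There is, however, one genuine omission: you never control the momentum support. The data lies in $H_{\mathrm{Vl},5,3,\mathrm c}(TM)$, so $f$ has compact $p$-support initially, but you must \emph{propagate} a bound on
\[
\mathcal G(T)=\sup\{\,|p|_g : (x,p)\in\supp f(T,\cdot,\cdot)\,\}.
\]
In the Sasaki energy identity for $f$, the horizontal transport term $\tau N p^a/\underline p\,\mathbf A_a f$ is handled via $|p|_g/\underline p\le\mathcal G$; without the compact-support bound this term carries an extra $|\tau|^{-1}$ and has no decay at all, so your step (iii) does not close. The paper treats this by deriving an ODE for $\mathcal G$ along characteristics (Section 5), adding $\overline C\,|\tau|\,\mathcal G\le\varepsilon_{\mathrm{tot}}/3$ to the bootstrap assumptions alongside the $\rho$-bound, and then improving it together with the others; $\mathcal G$ is permitted to grow, but only slowly enough that $|\tau|\mathcal G$ still decays.

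A minor correction: in 3+1 dimensions with $\mu=-2/9$ one has $\ker\Delta_E=\{0\}$ (Corollary \ref{cor-kernel}), so no shadow gauge or quotienting of Einstein deformations is needed---the corrected energy is coercive on the full perturbation. Also, future completeness in the paper is obtained from the Choquet-Bruhat--Cotsakis criterion, which requires not only the lapse bound but also decay of $\nabla N$, $X$ and $\widetilde\Sigma$ in the unrescaled metric; these all follow from the final decay rates, but ``uniform lapse bounds together with $\tau\nearrow 0$'' alone is not enough.
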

The symbols $(g,k,\Si,f)$ denote the Riemannian metric, the second fundamental form, the tracefree part of $k$ and the distribution function, respectively. $\tau<0$ is the mean curvature and is related to the time variable in \eqref{background} via $t=-3\tau^{-1}$ with $\tau\nearrow 0$ being the future direction. $\mcr B_\varepsilon^{6,5,5}(.\,,.\,,.)$ denotes the ball of radius $\varepsilon$ centered at the argument in the set of $H^6(M)\times H^5(M)\times H_{\mathrm{Vl},5,3}(TM)$ with the canonical Sobolev norms defined further below. Here, $H_{\mathrm{Vl},5,3}$ denotes the space of distribution functions on $TM$ corresponding to the standard $L^2$-Sobolev norms, cf.~\cite{Fa15}. $H_{\mathrm{Vl},5,3,\mathrm{c}}(TM)$ is the subset of this space with distribution functions of compact momentum support.

\subsection{Remarks}
The decay rates \eqref{final decay rates} can be achieved for arbitrarily small $\boldsymbol{\varepsilon}_{\mathrm{decay}}$ by choosing the perturbation sufficiently small depending on $\boldsymbol{\varepsilon}_{\mathrm{decay}}$. This implies that one can get arbitrarily close to the vacuum decay rates which correspond to the case $\boldsymbol{\varepsilon}_{\mathrm{decay}}=0$.\\
The corresponding higher dimensional stability results, which for the vacuum equations have been considered in \cite{AnMo11}, are likely resolvable similarly to the case presented herein. In particular, the decay of the matter quantities is expected to be stronger than in the present case. In this sense the 3+1-dimensional case is more difficult.
\subsection{Overview on the paper}
The remainder of the paper is concerned with the proof of Theorem 1. To simplify the presentation we derive all estimates - hyperbolic and elliptic ones - under smallness assumptions on the solution. These smallness assumptions are compatible with the decay properties of the system and this consistency is then shown in the course of a bootstrap argument. In section 2 we discuss the eigenvalue estimate for the Einstein operator for 3-dimensional negative Einstein metrics, recall the rescaling for the Einstein equations and introduce the rescaling for the matter variables. All relevant equations are collected in section 2 and referred to in the course of the following sections. In section 3 we introduce all relevant norms for the geometric quantities and for the distribution function. In view of these, we introduce the notion of smallness which is a prerequisite for establishing all estimates to follow in their respective concise versions. In the global existence argument this notion of smallness is realized in terms of a suitable bootstrap assumption (cf.~\eqref{wefoh}). In sections 4 and 5 we prove the $L^2$-energy estimate and the evolutionary inequality for the bound on the momentum support, respectively. In section 6 we derive the direct energy estimate for the standard Sobolev norm of the energy density $\rho$ of the distribution function. In section 7 we prove elliptic estimates for lapse and shift and their time derivatives. Section 8 contains the energy estimate for the perturbation of the metric and the tracefree part of the second fundamental form. In section 9 we  use the elliptic estimates to reduce all evolutionary estimates to a system of estimates solely containing metric, second fundamental form and matter quantities. Basing on these estimates section 10 presents the proof on Theorem 1, which also contains a number of technical remarks on local existence and existence of initial data in the appropriate sense. The appendix contains a collection of formulae used throughout the paper.
\subsection*{Acknowledgements}
This project results from early discussions of the authors during the conference \emph{Complex Analysis and Dynamical Systems} in Acre 2011. D.~F.~is grateful to Klaus Kr\"oncke for discussions on his results on eigenvalues of the Einstein operator in \cite{Kr15}. D.~F.~ furthermore gratefully acknowledges the support of the Austrian Science Fund (FWF) through the START-Project Y963-N35 of Michael Eichmair as well as through the Project \emph{Geometric transport equations and the non-vacuum Einstein flow} (P 29900-N27). D.F.~acknowledges the hospitality of the \emph{Erwin-Schr\"odinger Institute Vienna} during the program \emph{Geometric Transport equations in General Relativity}.

\section{Preliminaries}
We fix for the remainder of the paper a 3-dimensional Einstein manifold $(M,\gamma)$ with
\eq{
Ric[\ga]=-\frac{2}{9}\ga.
}
\subsection{3-dimensional negative Einstein metrics}
Necessarily, $\ga$ is of constant scalar curvature
\eq{
R[\ga]=-\frac{2}{3}.
} 
We consider the Einstein operator associated with $\ga$,
\eq{
\Delta_E\equiv\nabla^*\nabla -2\overset{\circ}{R},
}
where $\overset\circ{R} h_{ij}=R_{ikjl}h^{kl}$ for symmetric 2-tensors $h$ (cf.~Chapter12D of \cite{Be08} for more details). The lowest positive eigenvalue of $\Delta_E$ plays a crucial role for the construction of suitably decaying energies in the stability problem for the vacuum Einstein flow as demonstrated in \cite{AnMo11}. A similar consideration will be relevant for the nonvacuum problem considered below. We denote the lowest positive eigenvalue of $\Delta_E$ by $\la_0$. The following is an immediate consequence of Kr\"oncke's lower bound on eigenvalues of the Einstein operator (cf.~\cite{Kr15}).

\begin{prop}
Let $(M,\ga)$ be a hyperbolic Einstein 3-manifold with Einstein constant $\mu=-2/9$. Then
\eq{\label{ev-est}
\la_0\geq\frac{1}{9}.
}
\end{prop}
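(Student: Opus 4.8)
The plan is to deduce the bound $\la_0 \ge 1/9$ directly from Kr\"oncke's general lower bound on the lowest eigenvalue of the Einstein operator for negative Einstein manifolds, specialized to dimension $3$. Recall that on a 3-manifold the full Riemann tensor is determined algebraically by the Ricci tensor (the Weyl tensor vanishes), so an Einstein 3-manifold with $\Ric[\ga] = \mu\,\ga$ has constant sectional curvature $\mu/2$; in our normalization $\mu = -2/9$ this gives sectional curvature $-1/9$, i.e.\ $\ga$ is hyperbolic after rescaling. Hence the curvature operator $\overset{\circ}{R}$ acting on symmetric $2$-tensors is completely explicit: in an orthonormal frame one computes $\overset{\circ}{R}h_{ij} = R_{ikjl}h^{kl} = \tfrac{\mu}{2}\bigl(\tr(h)\ga_{ij} - h_{ij}\bigr)$, using the constant-curvature form $R_{ikjl} = \tfrac{\mu}{2}(\ga_{ij}\ga_{kl} - \ga_{il}\ga_{kj})$.

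First I would restrict attention to the relevant part of the spectrum. Decomposing a symmetric $2$-tensor into its trace part and its trace-free part, and further using that on a closed negative Einstein manifold the interesting (potentially destabilizing) directions are the transverse-traceless tensors, I would insert the explicit form of $\overset{\circ}{R}$ above into $\Delta_E = \nabla^*\nabla - 2\overset{\circ}{R}$. On trace-free tensors $\overset{\circ}{R}h_{ij} = -\tfrac{\mu}{2}h_{ij}$, so $\Delta_E h = \nabla^*\nabla h + \mu\, h = \nabla^*\nabla h - \tfrac29 h$ on this subspace. Thus a lower bound $\la_0 \ge 1/9$ is equivalent to the statement that the lowest eigenvalue of the rough Laplacian $\nabla^*\nabla$ on TT-tensors is at least $\tfrac19 + \tfrac29 = \tfrac13$, which is exactly the content of Kr\"oncke's estimate (\cite{Kr15}) in the case $n=3$: his bound states that $\nabla^*\nabla \ge \tfrac{2}{n}|\mu|\cdot\tfrac{n+1}{\,?\,}$-type quantity — I would quote the precise inequality from \cite{Kr15}, plug in $n=3$ and $\mu = -2/9$, and simplify.

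The key computational step is therefore just the bookkeeping: translating between the normalization of the Einstein constant used here ($\mu=-2/9$) and the one used in \cite{Kr15} (which typically normalizes $\Ric = -(n-1)\ga$ or $\Ric=-\ga$ or works on the unit-volume hyperbolic manifold), rescaling the eigenvalue accordingly (eigenvalues of $\nabla^*\nabla$ scale like the inverse metric, i.e.\ multiplicatively under homothety), and checking that after the rescaling Kr\"oncke's threshold lands at or above $1/3$ for $\nabla^*\nabla$, equivalently $1/9$ for $\Delta_E$. I would also note that in dimension $3$ there are no nontrivial infinitesimal Einstein deformations for hyperbolic metrics by Mostow rigidity / Calabi–Weil, so the eigenvalue $\la = 0$ does not occur among TT-tensors and the bound is about the genuinely positive part of the spectrum; this is consistent with, and in fact implied by, the strict inequality.

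The main obstacle I anticipate is purely one of matching conventions: different sources define the Einstein operator $\Delta_E$ (sometimes the Lichnerowicz Laplacian $\Delta_L$, which differs by lower-order Ricci terms that on an Einstein manifold reduce to a constant shift), define $\overset{\circ}{R}$ with a different sign or factor, and normalize the metric differently, so the danger is an off-by-a-constant error rather than any real analytic difficulty. The resolution is to fix, once and for all, the identity $\overset{\circ}{R}h = \tfrac{\mu}{2}(\tr(h)\ga - h)$ on symmetric $2$-tensors for a 3-dimensional Einstein manifold, carry it through $\Delta_E = \nabla^*\nabla - 2\overset{\circ}{R}$ on the trace-free part, and then apply \cite{Kr15} verbatim with $n=3$; no bootstrap or PDE estimate is needed, so the proof is short.
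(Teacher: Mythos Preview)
Your approach is essentially the same as the paper's: both reduce to Kr\"oncke's lower bound on the Einstein operator restricted to TT-tensors, exploiting that in dimension $3$ the Weyl tensor vanishes so the curvature term $\overset{\circ}{R}$ is explicit. Your extra step of rewriting $\Delta_E = \nabla^*\nabla + \mu$ on trace-free tensors and recasting the target as $\nabla^*\nabla \ge 1/3$ is fine but not needed; the paper simply invokes Proposition~3.2 of \cite{Kr15} directly to get $\la_{0,TT}\ge 1/9$.

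There is, however, a genuine gap in your reduction step. You write that ``the interesting (potentially destabilizing) directions are the transverse-traceless tensors'' and then work only on the trace-free (and later TT) part, but $\la_0$ is by definition the lowest positive eigenvalue of $\Delta_E$ on \emph{all} symmetric $2$-tensors. Decomposing into trace and trace-free handles the pure-trace piece (there $\Delta_E(u\ga)=(\Delta u - 2\mu u)\ga$, giving eigenvalues $\ge 4/9$), but trace-free is not the same as TT: you still have the longitudinal piece coming from Lie derivatives of vector fields, and you do not say anything about the spectrum there. The paper closes exactly this gap by a separate argument (following Lemma~2.7 of \cite{AnMo11}): if an eigenvalue satisfies $\la < -2\mu = 4/9$, then its eigentensor is forced to be TT. Hence either $\la_0 \ge 4/9$, or $\la_0$ lies in the TT-spectrum and Kr\"oncke's bound applies. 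Without this step (or an equivalent treatment of the non-TT trace-free piece) your argument does not yet prove the stated inequality for $\la_0$.
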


\begin{proof}
From Proposition 3.2 \cite{Kr15} we deduce that the smallest eigenvalue of $\Delta_E\big|_{TT}$, $\Delta_E$ restricted to TT-tensors on $(M,\gamma)$, which we denote  by $\la_{0,TT}$, obeys
\eq{
\la_{0,TT}\geq \frac19.
}
This holds, as $\ga$ is necessarily of constant scalar curvature and therefore has vanishing Weyl tensor. \\
We show that this can be upgraded to \eqref{ev-est} as follows. We observe that if an eigenvalue $\la$ of $\De_{E}$ obeys $(2\mu+\la)<0$ or with the present choice $\la<\frac49$, then its corresponding eigentensor $h_\la$ is TT. This follows as in the proof of Lemma 2.7 in \cite{AnMo11}. In particular, the lowest eigenvalue $\la_0$ either fulfills $\la_0\geq 4/9$ or is in the spectrum of $\Delta_E\big|_{TT}$ and in turn fulfills $\la_0\geq1/9$. 
\end{proof}

A relevant corollary of the above reads

\begin{cor}\label{cor-kernel}
Let $(M,g)$ be a 3-dimensional Einstein manifold with Einstein constant $\mu=-2/9$, then
\eq{
\ker \Delta_E=\{0\}.
}
\end{cor}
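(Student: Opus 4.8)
The plan is to derive this as a direct consequence of the eigenvalue estimate \eqref{ev-est}. The point is that the kernel of $\Delta_E$ consists precisely of the eigentensors for eigenvalue $\lambda = 0$, and since $0 < \tfrac19$, such a tensor would have to be an eigentensor with eigenvalue strictly below $\tfrac49$; by the argument invoked in the proof of the preceding Proposition (following Lemma 2.7 of \cite{AnMo11}), it would then be transverse-traceless. But $\lambda_{0,TT} \geq \tfrac19 > 0$, so $\Delta_E$ has no TT-eigentensor with eigenvalue $0$. Hence the kernel is trivial.

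Concretely, I would argue as follows. Suppose $h \in \ker \Delta_E$, i.e.\ $\Delta_E h = 0$. If $h \neq 0$, then $h$ is an eigentensor of $\Delta_E$ with eigenvalue $\lambda = 0$. Since $\mu = -\tfrac29$, we have $2\mu + \lambda = -\tfrac49 < 0$, so by the splitting argument from the previous proof $h$ is TT. Then $\lambda = 0$ would be a positive-or-zero eigenvalue lying in the spectrum of $\Delta_E\big|_{TT}$; but by \eqref{ev-est} applied in the TT setting (indeed $\lambda_{0,TT} \geq \tfrac19$, where $\lambda_{0,TT}$ denotes the smallest eigenvalue of $\Delta_E\big|_{TT}$), every eigenvalue of $\Delta_E\big|_{TT}$ is at least $\tfrac19 > 0$. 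This contradicts $\lambda = 0$. Therefore $h = 0$ and $\ker \Delta_E = \{0\}$.

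One should be slightly careful about the precise statement of the decomposition/splitting lemma: the clean statement is that eigentensors with eigenvalue $\lambda$ satisfying $2\mu + \lambda < 0$ are automatically TT. Since $0$ satisfies this strict inequality for $\mu = -\tfrac29$, the zero case is covered without having to worry about boundary cases $\lambda = -2\mu = \tfrac49$. In particular no separate treatment of conformal Killing fields or gradient-type pieces is needed here — those would only enter at eigenvalue $\tfrac49$ or above.

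The main (and only) obstacle is ensuring that the $\lambda = 0$ case genuinely falls under the hypothesis of the splitting argument from the previous proof rather than merely the strictly positive eigenvalues — but this is immediate since the condition there is $2\mu + \lambda < 0$, which at $\lambda = 0$ reads $2\mu < 0$, true by assumption. Thus the corollary follows with essentially no additional work beyond citing \eqref{ev-est} and the splitting statement already used above.
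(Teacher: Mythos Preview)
Your argument is correct and is precisely what the paper intends: the corollary is stated without proof as an immediate consequence of the preceding Proposition, and your extraction of the two ingredients from that proof --- the splitting statement that eigentensors with $2\mu+\lambda<0$ are TT, together with the bound $\lambda_{0,TT}\geq\tfrac19$ --- is exactly the reasoning implied. One minor quibble: the reference \eqref{ev-est} concerns the lowest \emph{positive} eigenvalue $\lambda_0$ of the full operator, so the relevant input is really the TT bound $\lambda_{0,TT}\geq\tfrac19$ from the body of the proof (which you do state explicitly), not \eqref{ev-est} itself.
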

This condition assures that the energy to control the perturbation of the geometry defined below is coercive and allows to avoid introducing a shadow-gauge analog to \cite{AnMo11}.  


\subsection{Variables and setup}
We use standard index conventions. Roman letters denote spatial indices $\{1,2,3\}$ and greek letters denote spacetime indices $\{0,1,2,3\}$. In addition, we use bold roman letters to denote indices on the tangent bundle of $TM$. This notation is introduced in Section \ref{sec : L2distr}.
\subsubsection{Standard variables and gauge}
We consider the 3+1-dimensional spacetime in the standard form $(\overline M,\overline g)=(\mathbb R\times M,-\widetilde N^2dt\otimes dt+\widetilde g_{ab}(dx^a+\widetilde X^adt)\otimes(dx^b+\widetilde X^bdt))$, where $\widetilde N$, $\widetilde g$ and $\widetilde X$ denote the lapse function, the induced Riemannian metric on $M$ and the shift vector field\footnote{Note that the coordinate $t$ here does not coincide with the same symbol in the explicit Milne model \eqref{background}}. For the derivation of the Einstein equations in ADM formalism we refer to \cite{Re08}. We denote by $\tau$ the trace of the second fundamental form $\widetilde k$ with respect to $g$ and decompose $\tilde k=\tilde\Si+\frac\tau 3\tilde g$. We then impose the CMCSH gauge via
\eq{\alg{\label{gauges}
t=\tau\\
\widetilde g^{ij}(\widetilde\Gamma^{a}_{ij}-\widehat\Gamma^a_{ij})&=0,
}}
where $\widetilde\Gamma$ and $\widehat\Gamma$ denote the Christoffel symbols of $\widetilde g$ and $\gamma$, respectively.
\subsubsection{Rescaled variables and Einstein's equations}

We rescale the geometry with respect to the mean curvature function $\tau$ analogous to the vacuum case \cite{AnMo11}. This leaves explicit time-factors as coefficients of the matter variables. We rescale those by rescaling the $\tilde p$-variables (cf.~section \ref{sec-Vlasov} ). The variables with respect to mean curvature time $t=\tau$ are denoted by $(\widetilde g,\widetilde \Si, \widetilde N, \widetilde X)$, while the rescaled variables are $(g,\Si,N,X)$. We rescale according to
\eq{\label{rescaling}
\begin{array}{cc}
g_{ij}=\tau^2\tilde{g}_{ij}&N=\tau^2\tilde{N}\\
g^{ij}=\tau^{-2}\tilde{g}^{ij}& \Si_{ij}=\tau\tilde \Si_{ij}\\
p^a=\tau^{-2}\tilde{p}^a& X^i=\tau \tilde{X}^i
\end{array}
}
so the spacetime metric takes the form
\eq{
\overline g=-\tau^{-4}N^2d\tau^2+\tau^{-2}g_{ij}(dx^i+\tau^{-1}X^id\tau)\otimes(dx^j+\tau^{-1}X^jd\tau).
}
Then we  introduce the logarithmic time $T=-\ln(\tau/\tau_0)$, ($\leftrightarrow\tau=\tau_0\exp(-T)$) with
\eq{\label{log-time}
\p T=-\tau \p \tau,
}
which implies
\eq{
\overline g=-\tau^{-2}\tau_0^{-2}N^2dT^2+\tau^{-2}g_{ij}(dx^i+\tau_0^{-1}X^idT)\otimes(dx^j+\tau_0^{-1}X^jdT).
}
We use the notation $\dot{X}=\p TX$, $\dot N=\p TN$ for convenience throughout the manuscript. Also, we denote $\widehat N=\frac N3-1$ and $\widehat X=X/N$. After these modifications the Einstein equations in CMCSH gauge with respect to the rescaled variables take the following form.
\begin{eqnarray}
R(g)-\abg{\Si}^2+\tfrac {2}{3}&=&4\tau\cdot\rho \label{Co-Ha}\\
\na^a\Si_{ab}&=&\tau^2 \jmath_b\label{Co-Mo}\\
\left(\De-\tfrac13\right) N&=& N\Big(\abg{\Si}^2+\underbrace{\tau\cdot \eta}_{(\star)}\Big)\label{lapse}\\
\De X^i+R^i_{\, m}X^m&=&2\na_jN\Si^{ji}-\underbrace{\na^i\left(\tfrac N3-1\right)}_{(\star)}+2N \tau^2{\jmath^i}\label{shift}\\
&&-(2N\Si^{mn}-\na^mX^n)(\Chr imn-\hChr imn)\nonumber\\
\p T g_{ab}&=&\underbrace{2N\Si_{ab}}_{(\star\star)}+2\left(\tfrac{N}3-1\right)g_{ab}-\mcr{L}_Xg_{ab}\label{ev-g}\\
\p T\Si_{ab}&=&\underbrace{-2\Si_{ab}-N\left(R_{ab}+\tfrac{2}{9}g_{ab}\right)}_{(\star\star)}\label{ev-k}\\
&&+\na_a\na_bN+2N\Si_{ai}\Si^{i}_b\nonumber\\
&&-\tfrac13\left(\tfrac{N}{3}-1\right)g_{ab}-\left(\tfrac{N}{3}-1\right)\Si_{ab}\nonumber\\
&&-\mcr{L}_X\Si_{ab}+ \underbrace{N\tau\cdot S_{ab}}_{(\star)}\nonumber
\end{eqnarray}
We denote by $\mathscr L_X$ the Lie-derivative with respect to $X$. Moreover, we recall the decomposition of the curvature term in the spatial harmonic gauge (cf.~\cite{AnMo11}),
\eq{
R_{ab}+\frac29g_{ab}=\frac12\mathcal L_{g,\gamma}(g-\gamma)_{ab}+J_{ab},
}
where 
\eq{
\Ab{J}_{H^{s-1}}\leq C\Ab{g-\gamma}_{H^s}.
}

The rescaled matter quantities are connected to the unrescaled versions via
\eq{\label{matter-rescaled}
\alg{
\rho&:=4\pi\tilde{\rho}\cdot \tau^{-3}\\
\eta&:=4\pi(\tilde{\rho}+\tilde{g}^{ab}\tilde{T}_{ab})\cdot \tau^{-3}\\
\underline{\eta}&:=4\pi \tilde g^{ab}\tilde T_{ab}\cdot\tau^{-5}\\
j^b&:=8\pi \ab{\tau}^{-5} \tilde j^b\\
{S}_{ab}&:=8\pi{\tau}^{-1}\left[\tilde{T}_{ab}-\tfrac1{2}\tilde{g}_{ab}\tilde{T}\right].
}
}
We recall that $\tilde \rho=\tilde N^2\widetilde T^{00}$ is the energy density and $\tilde \jmath^a=-\tilde N\widetilde T_0^a$ is the matter current.
 We also denote $\underline T^{ab}=T^{ab}=\ab{\tau}^{-7}\tilde T^{ab}$ for later purposes. An important identity, which follows immediately from the definitions above is
\eq{\label{dec-eta}
\eta=\rho+\tau^2\underline\eta.
}  
The decomposition is crucial since the second term on the right-hand side in \eqref{dec-eta} decays fast while the first term is handled differently using the continuity equation as explained in the introduction.
\begin{rem}
The right-hand sides of the elliptic system for lapse and shift as well as those for the evolution equations decouple into principal terms and terms which can be considered as perturbative and which turn out to decay faster than the leading order terms.  To give some orientation about which terms are considered to be principal terms, we have marked those terms by the symbol $(\star)$ or $(\star\star)$. The latter case refers to those terms, which are relevant to establish the decay for the energy measuring the perturbation of the geometry. Terms denoted by $(\star)$ are for different reasons principal. In the lapse equation, the $\rho$-term within the $\eta$-term has the slowest decay, while in the shift equation, precisely this slow decay is inherited from the lapse equation through the $(\star)$ term therein. The final principal term to consider is the one in the equation for $\Si$, where due to regularity conditions we cannot estimate the $\rho$-term in $S$ by the $\rho$-energy but we have to use the $L^2$-Sobolev energy of the distribution function to estimate this term. This results in a small loss of decay, which is the reason why this term is of worst decay in the respective equation. 
\end{rem}

%

\subsection{Vlasov matter}\label{sec-Vlasov}

We introduce the structures relevant to Vlasov matter and then rescale the energy-momentum tensor and transport equation according to the previous section.

\subsubsection{The mass-shell relation}

We consider particles of positive mass $m=1$ modeled by distribution functions with domain being the mass-shell
\eq{\label{mass-shell}
\mcr P=\Big\{(x,\mbf p)\in T\overline M\big|\ab{\mbf p}^2_{\overline g}=-1, p^0<0 \Big\},
}
where $\mbf p=\tilde p^\al\partial_\al$. In particular, $\tilde p^\alpha$ are canonical coordinates on the tangent bundle.  We use the $\widetilde{.}$--notation, since below we introduce rescaled variables.  A distribution function $\overline f:\mcr P\rightarrow [0,\infty)$ has the associated energy-momentum tensor
\eq{
\widetilde T^{\al\be}[\overline f] (x)=\int_{\mcr P_x} \overline f\tilde p^\al \tilde p^\be \mu_{\mcr P_x},
} 
where $\mu_{\mcr P_x}$ is the volume form corresponding to the induced metric on $\mcr P_x$. We consider the projection of the distribution function under $\pi : (t,x,p^0,p)\rightarrow (t,x,p)$, which is $f=\overline f\circ (\pi\big |_{\mcr P})^{-1}$, and which we refer to as \emph{distribution function} in the following. We rescale the momentum variables according to
\eq{\label{resc-mom}\alg{
\tilde p^0=p^0,\quad \tau^2p^a=\tilde{p}^a,\quad \p {\tilde p^a}=\tau^{-2} \p {p^a}.
}} 

Then we express the unrescaled mass-shell relation in \eqref{mass-shell} in coordinates (cf.~for instance Section iv in \cite{SaZa14}, equation (37)), which reads
\eq{
\tilde p^0=(\tilde N^{2}-\ab{\widetilde{X}}_{\tilde g})^{-1}\left(\widetilde{ X}_j\tilde p^j+\sqrt{(\widetilde{ X}_j\tilde p^j)^2+(\tilde N^{2}-\ab{\widetilde{ X}}_{\tilde g})(1+\ab{\tilde p}_{\tilde g}^2)}\right),
}
and replace all variables by their rescaled counterparts. This yields an expression for $p^0$ as a function of the $p^a$ variables and the metric components in the form
\eq{\label{msr-resc}
p^0=N^{-1}(1-\abg{\hat{ X}}^2)^{-1}\Big[\tau\hat X_jp^j+\sqrt{\tau^2(\hat X_jp^j)^2+(1-\abg{\hat{ X}}^2)(1+\tau^2\abg{p}^2)}\Big].
} 

An alternative expression is given by
\eq{
p^0=\frac1N\frac{1}{\hat p -\tau\langle \hat X , p \rangle_g}(1+\tau^2\abg{p}^2),
}
where 
\eq{
\widehat p=\sqrt{\tau^2(\hat X_jp^j)^2+(1-\abg{\hat{ X}}^2)(1+\tau^2\abg{p}^2)}
}
is just defined for convenience and does not necessarily have a specific geometric meaning. In addition, $p_0=\overline g_{0\nu}\tilde p^\nu=-N\widehat p$.
We derive some useful estimates for $p^0$ using elementary manipulations.
We furthermore use the simplifying notation
\eq{
\underline p=Np^0.
}
\begin{rem}
Note, that the rescaled mass-shell relation \eqref{msr-resc} reduces to $p^0=\sqrt{1+\tau^2\abg{p}^2}$, when $X=0$, $N=1$, which corresponds to the background solution. In particular, the constant term under the squareroot, which originates from the mass term, scales like a constant, while the second term decays fast in expanding direction ($\tau\nearrow0$).
\end{rem}
The following lemma contains two useful pointwise estimates on the momentum variable $p^0$.
\begin{lem}
\begin{align}\label{p-est1}
\frac{\abg p}{p^0} &\leq 2\abg X\ab{\tau}^{-1}+N\sqrt{1-\abg {\hat X}^2}\ab{\tau}^{-1}\\
p^0&\leq \frac{1}{N}\frac{1}{1-\abg{\hat X}^2}\left[2\ab\tau\abg{\hat X}\abg p+\sqrt{1-\abg{\hat X}^2}\sqrt{1+\tau^2\abg p^2}\right]\label{p-est2}
\end{align}
\end{lem}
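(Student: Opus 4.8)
The plan is to derive both inequalities directly from the explicit representations of $p^0$ recorded above, estimating a square root by splitting the terms under it and using $\sqrt{a+b}\le\sqrt a+\sqrt b$ for $a,b\ge0$.

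First I would prove \eqref{p-est2}. Starting from the rescaled mass-shell relation \eqref{msr-resc},
\eq{
p^0=N^{-1}(1-\abg{\hat X}^2)^{-1}\Big[\tau\hat X_jp^j+\sqrt{\tau^2(\hat X_jp^j)^2+(1-\abg{\hat X}^2)(1+\tau^2\abg p^2)}\Big],
}
I bound the linear term by Cauchy--Schwarz, $\tau\hat X_jp^j\le\ab\tau\abg{\hat X}\abg p$, and the term under the square root by $\sqrt{\tau^2(\hat X_jp^j)^2}+\sqrt{(1-\abg{\hat X}^2)(1+\tau^2\abg p^2)}\le\ab\tau\abg{\hat X}\abg p+\sqrt{1-\abg{\hat X}^2}\sqrt{1+\tau^2\abg p^2}$, again by Cauchy--Schwarz in the first summand. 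Adding gives the bracket $\le2\ab\tau\abg{\hat X}\abg p+\sqrt{1-\abg{\hat X}^2}\sqrt{1+\tau^2\abg p^2}$, which is precisely \eqref{p-est2}.

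Next I would prove \eqref{p-est1}. Here it is cleaner to use $p_0=-N\widehat p$ with $\widehat p=\sqrt{\tau^2(\hat X_jp^j)^2+(1-\abg{\hat X}^2)(1+\tau^2\abg p^2)}$, so that $Np^0(\hat p-\tau\langle\hat X,p\rangle_g)=1+\tau^2\abg p^2$ from the alternative expression for $p^0$; equivalently I can work directly from \eqref{msr-resc}. From the lower bound $1+\tau^2\abg p^2\ge\tau^2\abg p^2$ one gets $p^0\ge$ (something controlling $\abg p$); more precisely, dropping the nonnegative term $\tau^2(\hat X_jp^j)^2$ under the square root in \eqref{msr-resc} and keeping only $(1-\abg{\hat X}^2)\tau^2\abg p^2$ gives $p^0\ge N^{-1}(1-\abg{\hat X}^2)^{-1}\big[\tau\hat X_jp^j+\ab\tau\sqrt{1-\abg{\hat X}^2}\,\abg p\big]$; since $\tau\hat X_jp^j\ge-\ab\tau\abg{\hat X}\abg p$, this yields $p^0\ge N^{-1}(1-\abg{\hat X}^2)^{-1}\ab\tau\abg p\big(\sqrt{1-\abg{\hat X}^2}-\abg{\hat X}\big)$, which is useful only when $\abg{\hat X}$ is small. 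For the stated form, instead rationalize: write $\abg p/p^0$ and use the alternative expression to get $\abg p/p^0=N\abg p\,(\hat p-\tau\langle\hat X,p\rangle_g)/(1+\tau^2\abg p^2)$, bound $\hat p\le\ab\tau\abg{\hat X}\abg p+\sqrt{1-\abg{\hat X}^2}\sqrt{1+\tau^2\abg p^2}$ and $-\tau\langle\hat X,p\rangle_g\le\ab\tau\abg{\hat X}\abg p$, and use $\abg p\le\ab\tau^{-1}\sqrt{1+\tau^2\abg p^2}$ together with $\abg p^2/(1+\tau^2\abg p^2)\le\ab\tau^{-2}$ to collect terms into $2\abg X\ab\tau^{-1}+N\sqrt{1-\abg{\hat X}^2}\ab\tau^{-1}$, recalling $N\abg{\hat X}=\abg X$.

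The only real care needed — and the place I expect the bookkeeping to be most delicate — is keeping track of the factor $1-\abg{\hat X}^2$ in the denominator, which is close to $1$ under the smallness assumptions on the shift but must not be mishandled, and making sure the Cauchy--Schwarz steps are applied to the right pairings so that the constant $2$ (rather than a worse constant) appears. Everything else is elementary algebraic manipulation of \eqref{msr-resc} and its alternative form, with no analytic input beyond $\sqrt{a+b}\le\sqrt a+\sqrt b$ and Cauchy--Schwarz.
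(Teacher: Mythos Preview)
Your proposal is correct and matches what the paper intends: the lemma is stated without proof, with only the remark that the estimates follow from ``elementary manipulations'' of the explicit formula \eqref{msr-resc} and its alternative form, which is exactly what you do. The detour in your treatment of \eqref{p-est1} (the attempted lower bound on $p^0$ that is only useful for small $\abg{\hat X}$) is unnecessary; your second approach via the alternative expression $p^0=N^{-1}(1+\tau^2\abg p^2)/(\hat p-\tau\langle\hat X,p\rangle_g)$ together with the bound on $\hat p$ already used for \eqref{p-est2}, the estimates $\abg p^2/(1+\tau^2\abg p^2)\le\ab\tau^{-2}$ and $\abg p/\sqrt{1+\tau^2\abg p^2}\le\ab\tau^{-1}$, and $N\abg{\hat X}=\abg X$ gives \eqref{p-est1} directly.
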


\subsubsection{The transport equation}
We introduce the transport equation and its rescaling. The transport equation
\eq{
\tilde{p}^\al \p \al \tilde{f}-\Chrt a\mu\nu\tilde{p}^\mu\tilde{p}^\nu\p {\tilde{p}^a}\tilde{f} =0
}
is rescaled via \eqref{resc-mom}. To express the transport equation only in terms of the rescaled variables, we require the rescaled Christoffel symbols $\Gamma$.
The non-rescaled Christoffel symbols read (cf.~\cite{Re08})
\begin{align} 
 \Chrtn abc4&=\Chrtg abc+\tilde N^{-1}\tilde k_{bc}\tilde X^a,\\
 \Chrtn a004&=\p t\tilde X^a+\tilde X^b\na_b\tilde X^a-2\tilde N\tilde k_c^a\tilde X^c+\tilde{N}\tilde{\na}^a\tilde{N}\\
 &\nonumber\quad-\tilde N^{-1}(\p t\tilde N+\tilde X^b\p b\tilde N-\tilde k_{bc}\tilde X^b\tilde X^c)\tilde X^a,\\
 \Chrtn a0b4&=-\tilde N\tilde k^a_b+\nabla_b\tilde X^a-\tilde N^{-1}\tilde X^a\nabla_b\tilde N+\tilde N^{-1}\tilde k_{bc}\tilde X^c\tilde X^a.
 \end{align}
In terms of the rescaled variables the Christoffel symbols are of the form

\eq{\alg{
 \Chrtn abc4&=\Chr abc(g)+N^{-1}\left(\Si_{bc}+\frac{1}{3}g_{bc}\right)X^a,\\
 \Chrtn a004 &=-\p TX^a+\tau^{-2}\Gamma^a,\\
 \Chrtn a0c4&=\tau^{-1}\left(-\delta_c^a+\Gamma_c^a\right),
}}

where we denote

\eq{\alg{
\Gamma^a&=-X^a-\tfrac23(N-3)X^a+X^b\na_b X^a-2 N \Si_c^a X^c+{N}{\na}^a{N}\\
 &\quad\,+ \left(N^{-1}\p T N-N^{-1} X^b\p b N+N^{-1}\left(\Si_{bc}+\tfrac13g_{bc}\right) X^b X^c\right)X^a,\\
\Gamma^a_c&=-N\Si^a_b+\delta_c^a(1-\tfrac{N}{3})+\nabla_b X^a- N^{-1} X^a\nabla_b N+ N^{-1}\left( \Si_{bc}+\tfrac13g_{bc}\right) X^c X^a.
}}
We refer to the latter terms also by the symbols $\Gamma^*$ and $\Gamma^*_*$, respectively, when the indices are suppressed.
The fully rescaled transport equation then reads

\eq{\label{eq-transport}\alg{
\p Tf &= \tau N p^a/\underline{p}\mbf A_ af-\frac{\underline{p}}{N}\Big[-\tau\p TX^a+\underbrace{\tau^{-1}\Gamma^a}_{(\star)}\Big]\B_af +\underbrace{2p^i\B_if}_{(\star\star)}\\
&\quad-2\Gamma_u^ep^u\B_ef- \left(\Si_{ab}+\frac{1}{3}g_{ab}\right)X^e\frac{p^ap^b}{\underline{p}}\B_ef,
}} 
where we denote
\eq{\alg{
\A_a&=\p a-p^i\Chr kai\B_k,\\
\B_a&=\pp a,
}}
which correspond to the natural horizontal and vertical derivatives on $TM$. The two marked terms are leading order in the sense that term $(\star)$, among the small terms, has the slowest decay as $\Gamma^a$ contains in particular $\nabla N$, which in combination with $\tau^{-1}$ is of the order of $\varepsilon$. Term $(\star\star)$ is the dilution term, driving the downscaling of the momentum support in expanding direction of spacetime and thereby the dilution of the matter variables.

\subsection{Energy momentum tensor}

The rescaled matter quantities as appearing in the Einstein equations take the following form in terms of the distribution function $f$.
\begin{eqnarray}
\rho(f)&=& N^2\Ip{f \frac{(p^0)^2}{\underline{p}} \sg}\\
\jmath^a(f)&=&  N\Ip{f\frac{p^a p^0}{\underline{p}}\sg} \\
\underline{\eta}(f)&=&  \Ip{f\frac{\abg{p+\tau^{-1}p^0X}^2}{\underline{p}}\sg}\\
\underline T^{ab}(f)&=&\int f \frac{p^ap^b}{\underline p}\sg dp\\
S_{ab}(f)&=&\ab\tau^2\int f\frac{(p^ig_{ia}+\ab\tau^{-1}p^0X_a)(p^jg_{ja}+\ab\tau^{-1}p^0X_b)}{\underline p}\sg dp\\
&&\nonumber+\frac12g_{ab}\rho(f)-\frac12 g_{ab}\ab\tau^2 \underline\eta(f).
\end{eqnarray}
\begin{rem}
Recall the definition of the rescaled energy-momentum tensor \eqref{matter-rescaled} for the identities above. Note, that the expressions for the energy-momentum variables are a consequence of their definition \eqref{matter-rescaled} and the rescalings \eqref{rescaling} and \eqref{resc-mom}.
\end{rem}

\subsection{Preview on the decay rates}
For the study of the estimates to follow it is important to have an idea about smallness and decay of the rescaled quantities. The quantities
\eq{
N-3,\quad \Si,\quad g-\gamma\,\, \mbox{  and  }\,\, X
}
are small and decay with the rates
\eq{\alg{
\Abk{\Si}5^2+\Abk{g-\gamma}6^2&\lesssim\varepsilon \exp\left(-(2-\boldsymbol{\varepsilon}_{\mathrm{decay}})T\right)\\
\Abk{N-3}{6}+\Abk{X}6&\lesssim \varepsilon \exp(-T).
}}
For the matter terms we have 
\eq{\alg{
\Abk{\rho}4&\lesssim\varepsilon,\\
\Abk{\rho}5+\Abk{\jmath}5+\Abk{\underline\eta}5+\Abk{\underline T}5+\Abk{S}5&\lesssim \varepsilon \exp\left({C\varepsilon T}\right).
}}

These decay rates are shown to be valid for sufficiently small initial data, where $\varepsilon$ is the smallness of the initial perturbation and $\boldsymbol{\varepsilon}_{\mathrm{decay}}>0$ can be chosen arbitrarily small.


\section{Norms and smallness}

We introduce all relevant norms for measuring the perturbation of the geometry and the distribution function. Some norms are defined with respect to the fixed Einstein metric $\ga$ and others are defined with respect to the rescaled dynamical norm $g$. As we impose a uniform smallness assumption all these norms are equivalent. We assume for the remainder of the paper that $T_0>1$.

\subsection{Constants}
We use the symbol $C$ to denote any positive constant, which is uniform in the sense that it does not depend on the solution of the system once a smallness parameter $\varepsilon$ for the initial data and an initial time $T_0$ are chosen. Furthermore, if $\varepsilon$ is further decreased or $T_0$ is increased, $C$ keeps its value. 

\subsection{Norms - tensor fields}
For functions and symmetric tensor fields on $M$ we denote the standard Sobolev norm with respect to the fixed metric $\ga$ of order $\ell\geq0$ by $\Abk{.}\ell$. The corresponding function spaces are denoted by $H^{\ell}=H^{\ell}(M)$.

\subsection{Norms - distribution function}
We introduce different metrics on $TM$ and related notation necessary for the definition of $L^2$-Sobolev energies for the distribution function. This construction is based on the the metric $\gamma$ on $M$. In the following section we consider the case when the corresponding construction is based on the rescaled metric $g$. \\

\indent The metric $\gamma$ induces the related Sasaki metric $\boldsymbol \gamma$ on $TM$ via
\eq{
\boldsymbol \gamma\equiv \gamma_{ij} dx^i\otimes dx^j+\gamma_{ij}Dp^i\otimes Dp^j,
}
where $Dp^i=dp^i+\widehat{\Gamma}^i_{jk}p^jdx^k$. Recall that $\widehat\Gamma$ denotes the Christoffel symbols of $\gamma$. The covariant derivative on the tangent bundle corresponding to $\boldsymbol \gamma$ is denoted by $^{\boldsymbol \gamma}\boldsymbol{\na}$. We consider the volume form on $TM$,
\eq{
\mu_{\boldsymbol \gamma}=\ab{ \gamma}dx^3 \wedge dp^3.
}
We define a weighted version of the Sasaki metric by 
\eq{
\underline{\boldsymbol \gamma}= \gamma_{ij}dx^i\otimes dx^j+\overline p_{\gamma}^{-2}\gamma_{ij}Dp^i\otimes Dp^j,
}

where we denote $\overline p_\gamma=\sqrt{1+\ab{p}_{\gamma}^2}$. This metric is necessary to take the norm in the energies to be defined below, which require a weight in the momentum-direction. We define the $L^2$-Sobolev energy of the distribution function with respect to Sasaki metric corresponding to the fixed metric $\gamma$ by
\eq{
|\!|\!|f|\!|\!|_{\ell,\mu} \equiv \sqrt{\sum_{k\leq \ell}\int_{TM}\overline p_{\gamma}^{2\mu+4(\ell-k)}
 |^{\boldsymbol\gamma}\boldsymbol{\nabla}^kf|^2_{\underline {\boldsymbol \gamma}}\mu_{\boldsymbol{\gamma}}}.
}
The corresponding function spaces are denoted by $H_{\mathrm{Vl},\ell,\mu}(TM)$. Pointwise estimates are taken with respect to the following $L^\infty_xL_p^2$-norm,
\eq{
|\!|\!|f|\!|\!|_{\infty,\ell,\mu}\equiv \sup_{x\in M}\left\{\sqrt{\int_{T_xM}\overline p_{\gamma}^{2\mu+4(\ell-k)}
 |^{\boldsymbol\gamma}\boldsymbol{\nabla}^kf|^2_{\underline {\boldsymbol \gamma}}\sqrt{\gamma} dp}\right\},
}
which obeys the following lemma.
\begin{lem}
For $f$ sufficiently regular
\eq{
|\!|\!|f|\!|\!|_{\infty,\ell,\mu}\leq C |\!|\!|f|\!|\!|_{\ell+2,\mu}
}
holds.
\end{lem}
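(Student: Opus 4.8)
The statement to prove is the Sobolev-type embedding $|\!|\!|f|\!|\!|_{\infty,\ell,\mu}\leq C |\!|\!|f|\!|\!|_{\ell+2,\mu}$, bounding the $L^\infty_x L^2_p$ norm by the full $L^2_{x,p}$ Sobolev norm at two orders higher regularity. This is a fibrewise version of the classical Sobolev embedding $H^2(M)\hookrightarrow L^\infty(M)$ in dimension $3$, applied with values in a weighted $L^2$ space over the momentum fibre.

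\textbf{Plan of proof.} The plan is to reduce the estimate to the standard Sobolev embedding on the base manifold $M$ (which is $3$-dimensional and compact), applied to the scalar function
\eq{
F(x):=\int_{T_xM}\overline p_{\gamma}^{2\mu+4(\ell-k)}|^{\boldsymbol\gamma}\boldsymbol{\nabla}^kf|^2_{\underline{\boldsymbol\gamma}}\sqrt{\gamma}\,dp,
}
summed over $k\leq\ell$; note $|\!|\!|f|\!|\!|_{\infty,\ell,\mu}^2=\sup_x F(x)$ (here I read the definition as summing the same $k$-range as in $|\!|\!|\cdot|\!|\!|_{\ell,\mu}$). First I would observe that, since $(M,\gamma)$ is a closed $3$-manifold, the embedding $H^2(M)\hookrightarrow C^0(M)$ gives $\|F\|_{L^\infty(M)}\leq C(\|F\|_{L^1(M)}+\|\Delta_\gamma F\|_{L^1(M)})$ or, more conveniently, $\|F\|_{L^\infty}\le C\sum_{|\beta|\le 2}\|\partial^\beta F\|_{L^1(M)}$ in local coordinates via a partition of unity. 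Then I would differentiate under the integral sign: each spatial derivative $\partial_x$ falling on $F$ produces, after writing it as a horizontal covariant derivative $^{\boldsymbol\gamma}\boldsymbol\nabla$ on $TM$ plus lower-order correction terms (the Christoffel symbols $\widehat\Gamma$ and the momentum weight $\overline p_\gamma$, which are smooth and grow at most polynomially in $p$), either an extra factor of $^{\boldsymbol\gamma}\boldsymbol\nabla f$ or terms already present. The key algebraic point is that the momentum weights are arranged precisely so that up to two base-derivatives of $F$ land inside the norm $|\!|\!|f|\!|\!|_{\ell+2,\mu}^2$: the weight exponent $2\mu+4(\ell-k)$ in the definition decreases by $4$ each time $k$ increases by $1$, which compensates exactly the growth $\overline p_\gamma\lesssim (1+|p|_\gamma)$ coming from commuting $\partial_x$ past $Dp^i=dp^i+\widehat\Gamma^i_{jk}p^jdx^k$ and past the weight itself (each such commutation costs at most a factor $\overline p_\gamma^2$ in $|f|^2$, i.e. $\overline p_\gamma$ per derivative of $f$). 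I would carry this out by Leibniz-expanding $\partial^\beta_x F$ for $|\beta|\le 2$, bounding every resulting term by a constant times an integrand of the form $\overline p_\gamma^{2\mu+4(\ell+2-k')}|^{\boldsymbol\gamma}\boldsymbol\nabla^{k'}f|^2_{\underline{\boldsymbol\gamma}}$ with $k'\le \ell+2$, and integrating over $TM$.

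\textbf{Order of steps.} (1) Fix a finite atlas and subordinate partition of unity on $M$; reduce $\sup_x F$ to $\sum_{|\beta|\le2}\|\partial^\beta_x F\|_{L^1_x}$ via the $3$-dimensional Sobolev embedding on each coordinate patch. (2) For each $\beta$, differentiate under the fibre integral and use the product rule, tracking three sources of terms: derivatives hitting $|^{\boldsymbol\gamma}\boldsymbol\nabla^kf|^2_{\underline{\boldsymbol\gamma}}$, derivatives hitting the weight $\overline p_\gamma^{2\mu+4(\ell-k)}$, and derivatives hitting the volume density $\sqrt\gamma$. (3) Convert the coordinate derivatives $\partial_x$ acting on $^{\boldsymbol\gamma}\boldsymbol\nabla^kf$ into $^{\boldsymbol\gamma}\boldsymbol\nabla^{k+1}f$ plus terms linear in $^{\boldsymbol\gamma}\boldsymbol\nabla^{\le k}f$ with smooth, polynomially-$p$-bounded coefficients — this is where the Sasaki connection structure and the smoothness of $\gamma$ enter. (4) In each resulting term, check the bookkeeping of the momentum weight: confirm the exponent $2\mu+4(\ell-k')+[\text{extra from commutators}]\le 2\mu+4(\ell+2-k')$ so the term is dominated by the $H_{\mathrm{Vl},\ell+2,\mu}$ integrand. (5) Integrate over $T_xM$ and then $M$, and collect constants.

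\textbf{Main obstacle.} The only real subtlety — not a deep one but the place where care is needed — is step (3)–(4): verifying that the polynomial-in-$p$ growth incurred by commuting base derivatives past the $p$-dependent frame $Dp^i$ and past the momentum weight is exactly absorbed by the $4(\ell-k)$ drop in the weight exponent, so that no spurious positive power of $\overline p_\gamma$ is left over. This is the reason the weighted Sasaki metric $\underline{\boldsymbol\gamma}$ (with the $\overline p_\gamma^{-2}$ on the vertical part) and the particular exponent pattern in the definition of $|\!|\!|\cdot|\!|\!|_{\ell,\mu}$ were chosen; once one checks that each vertical derivative costs exactly one power of $\overline p_\gamma^{-1}$ relative to a horizontal one and each commutator costs at most $\overline p_\gamma$, the weight accounting closes and the constant $C$ depends only on $(M,\gamma)$, $\ell$, and $\mu$. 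Everything else is the standard Sobolev embedding on a closed $3$-manifold applied with values in a fixed weighted $L^2$ space.
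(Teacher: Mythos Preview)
The paper states this lemma without proof, so there is no argument of the authors to compare against. Your overall strategy --- reduce to the Sobolev embedding on the compact $3$-manifold $M$, applied fibrewise over the momentum variable --- is the natural one, and the weight bookkeeping you outline in steps (3)--(5) is correct in spirit.

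There is, however, a concrete gap in step~(1). The embedding you invoke,
\[
\|F\|_{L^\infty(M)}\leq C\sum_{|\beta|\le 2}\|\partial^\beta F\|_{L^1(M)},
\]
i.e.\ $W^{2,1}(M)\hookrightarrow L^\infty(M)$, is \emph{false} in dimension three: the Sobolev condition requires $kp>n$, and $2\cdot 1=2<3$. The borderline case $W^{n,1}\hookrightarrow L^\infty$ would need three derivatives of $F$, hence three extra derivatives of $f$, yielding only $|\!|\!|f|\!|\!|_{\ell+3,\mu}$ on the right-hand side rather than $|\!|\!|f|\!|\!|_{\ell+2,\mu}$. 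Switching to the correct scalar embedding $H^2(M)\hookrightarrow L^\infty(M)$ applied to $F$ does not repair the argument either, since $\|F\|_{H^2}$ is quartic in $f$ and the resulting estimate becomes circular (one ends up needing $\|F\|_{L^\infty}$ to control $\|F\|_{L^2}$).

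The fix is to apply the embedding \emph{linearly} rather than to the square $F$. Either regard $x\mapsto \overline p_\gamma^{\,\mu+2(\ell-k)}\,{}^{\boldsymbol\gamma}\boldsymbol\nabla^k f(x,\cdot)$ as a map from $M$ into the Hilbert space $L^2(T_xM,\sqrt\gamma\,dp)$ and use the vector-valued embedding $H^2(M;\mathcal H)\hookrightarrow C^0(M;\mathcal H)$, or equivalently observe that
\[
\sup_{x}\int w(p)\,|G(x,p)|^2\,dp \le \int w(p)\,\sup_{x}|G(x,p)|^2\,dp
\]
and apply the scalar embedding $H^2(M)\hookrightarrow L^\infty(M)$ for each fixed $p$. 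After this correction the commutator and weight analysis you describe in steps (2)--(5) goes through essentially unchanged and produces the claimed bound with $C=C(M,\gamma,\ell,\mu)$.
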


\subsection{Smallness}
We define a set of smallness conditions for the dynamical quantities. These are designed to include weights in terms of the time-function to incorporate some decay properties indirectly. These are chosen in a way that in the proof of global existence the smallness conditions serves as a part of the bootstrap assumptions and leaves room to be improved for sufficiently small data and sufficiently large times. We define
\eq{\alg{
&\mcr B^{6,5,5}_{\delta,\tau}\left(\gamma,0,0\right)\\
&\equiv\Big\{(g,\Si,f)\in H^6\times H^5\times H_{\mathrm{Vl},5,4}\Big|\\
&\qquad\qquad \sqrt{\ab\tau}^{-1}(\Abk{g-\gamma}6+\Abk{\Si}5)+\Abk{\rho(f)}4+\sqrt{\ab\tau}|\!|\!|f|\!|\!|_{5,4}<\delta\Big\}.
}}

We say a triple $(g(\tau),\Si(\tau),f(\tau))$ is \emph{$\delta$-small} when $(g,\Si,f)\in\mcr B^{6,5,5}_{\delta,\tau}(\gamma,0,0)$. Also, we mostly suppress the dependence on $(\gamma,0,0)$ in the notation. In addition we use the term \emph{smallness assumptions} if we refer to $\delta$-small data.

\subsection{Some immediate estimates}

Smallness in the above sense implies smallness of the perturbation for lapse function and shift vector. The following corollary uses the elliptic estimates proven in section \ref{sec : ell}.

\begin{cor}
For any $\delta>0$ there exists a $\overline{\delta}$ such that 
\eq{
(g,\Si,f)\in \mcr B^{6,5,5}_{\overline\delta,\tau} \Rightarrow \ab{\tau}\left(\Abk{N-3}6+\Abk{X}6\right)<\delta.
}

\end{cor}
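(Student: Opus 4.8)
The plan is to derive the claimed bound on $\ab{\tau}(\Abk{N-3}{6}+\Abk{X}{6})$ directly from the elliptic equations \eqref{lapse} and \eqref{shift} by treating $N-3$ and $X$ as solutions of linear elliptic systems with small right-hand sides, and then invoking elliptic regularity on the closed manifold $(M,\gamma)$ (after using the $\delta$-smallness assumption to compare norms taken with respect to $g$ and with respect to $\gamma$). First I would rewrite the lapse equation \eqref{lapse} as $\left(\De-\tfrac13\right)(N-3) = N\big(\abg{\Si}^2+\tau\eta\big) + (\De-\tfrac13)\cdot 3 = N\abg{\Si}^2 + N\tau\eta - 1$, so that the operator $\De-\tfrac13$, which is invertible on $(M,g)$ by the corollary to its spectral gap (no kernel, since $-\tfrac13$ is not in the spectrum of the negative Laplacian), applied to $N-3$, equals a source that one must show is $O(\varepsilon)$ in $H^{4}$ at worst; then elliptic regularity upgrades to $H^{6}$ control of $N-3$ modulo lower-order terms. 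Here $\abg\Si^2 \lesssim \Abk{\Si}{5}^2$ is quadratically small, $N\tau\eta$ uses the decomposition \eqref{dec-eta}, $\eta=\rho+\tau^2\underline\eta$, so $\tau\eta = \tau\rho + \tau^3\underline\eta$, and the smallness of $\Abk{\rho}{4}$ together with the factor $\tau$ gives a term of size $\ab{\tau}\cdot\delta$; the remaining constant $-1$ is absorbed because on the background $N\equiv 3$ solves $(\De-\tfrac13)\cdot 3 = -1$ exactly, so in fact it is cleaner to subtract the background and estimate $N-3$ as the image under $(\De-\tfrac13)^{-1}$ of the purely perturbative source.

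Next, for the shift, equation \eqref{shift} has the form $\big(\De + \Ric\big)X = F$, where the right-hand side $F$ collects $2\na_jN\Si^{ji}$, $-\na^i(\tfrac N3-1) = -\tfrac13\na^i(N-3)$, $2N\tau^2\jmath^i$, and the Christoffel-difference term $-(2N\Si^{mn}-\na^mX^n)(\Chr imn - \hChr imn)$. The operator $\De + \Ric$ acting on vector fields is invertible on $(M,\gamma)$ — this is part of the CMCSH gauge machinery from \cite{AnMo11}, and for $g$ sufficiently close to $\gamma$ it remains invertible with uniformly bounded inverse — so $\Abk{X}{6} \lesssim \Abk{F}{4}$. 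Every term in $F$ is controlled: $\na N\cdot\Si$ is quadratically small, $\tau^2\jmath$ carries two explicit $\tau$ factors against $\Abk{\jmath}{5}\lesssim\varepsilon e^{C\varepsilon T}$, the Christoffel-difference is $O(\Abk{g-\gamma}{6})$ times $(\Si,\na X)$ hence again quadratic, and the term $\na(N-3)$ is linear in $N-3$ which we have just estimated. The one genuine subtlety is that the $\na(N-3)$ term in the shift equation is only linearly (not quadratically) small, so one obtains a coupled estimate $\Abk{X}{6}\lesssim \Abk{N-3}{6} + (\text{quadratic}) + \ab{\tau}^2\varepsilon e^{C\varepsilon T}$; since $\Abk{N-3}{6}$ has already been bounded independently by $\ab{\tau}\delta + \delta^2$, this closes without circularity.

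Finally I would assemble the two estimates and track the power of $\tau$: the lapse bound reads $\Abk{N-3}{6}\lesssim \Abk{\Si}{5}^2 + \ab{\tau}\Abk{\rho}{4} + \ab{\tau}^3\Abk{\underline\eta}{5}$, and using the $\delta$-smallness definition $\sqrt{\ab\tau}^{-1}(\Abk{g-\gamma}{6}+\Abk{\Si}{5}) + \Abk{\rho(f)}{4} + \sqrt{\ab\tau}\,|\!|\!|f|\!|\!|_{5,4} < \overline\delta$ one reads off $\Abk{\Si}{5}^2 \lesssim \ab{\tau}\overline\delta^{2}$ and $\ab\tau\Abk{\rho}{4}\lesssim\ab\tau\overline\delta$, so $\Abk{N-3}{6}\lesssim \ab{\tau}\overline\delta$ (the $\underline\eta$ term is even smaller because $|\!|\!|f|\!|\!|_{5,4}$ bounds it up to constants and carries a favorable $\tau$ power); similarly $\Abk{X}{6}\lesssim \ab{\tau}\overline\delta$. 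Multiplying through by $\ab\tau$ then gives $\ab{\tau}(\Abk{N-3}{6}+\Abk{X}{6}) \lesssim \ab{\tau}^2\overline\delta \leq C\overline\delta$, and choosing $\overline\delta$ small enough that $C\overline\delta<\delta$ (and recalling $T_0>1$ so $\ab\tau = \ab{\tau_0}e^{-T}<\ab{\tau_0}$ is bounded) yields the claim. The main obstacle I anticipate is purely bookkeeping: one must confirm that the elliptic operators $\De-\tfrac13$ on functions and $\De+\Ric$ on vector fields, whose invertibility is clean on the fixed background $(M,\gamma)$, retain uniformly bounded inverses on $(M,g)$ throughout the $\delta$-small regime — this is exactly the type of perturbative elliptic estimate established in section \ref{sec : ell}, which the corollary explicitly cites, so it may simply be quoted rather than reproved.
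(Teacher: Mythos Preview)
Your proposal is correct and follows essentially the same route as the paper, whose proof is simply ``This is an immediate consequence of \pref{prop-ell-est}''---that proposition is exactly the elliptic-regularity estimate for the lapse and shift equations that you re-derive here before reading off the $\ab{\tau}$-weighted smallness from the definition of $\mcr B^{6,5,5}_{\overline\delta,\tau}$. The minor sign slip in your expansion of $(\De-\tfrac13)(N-3)$ is inconsequential since you immediately pass to the purely perturbative source, which is how the paper records the estimate anyway.
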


\begin{proof}
This is an immediate consequence of Proposition \ref{prop-ell-est}.
\end{proof}

\section{$L^2$ -- estimates for the distribution function}\label{sec : L2distr}
We define the $L^2$-Sobolev energy for the distribution function in terms of the Sasaki metric associated with $g$. Under the present smallness assumptions this energy is equivalent to the norm $|\!|\!|f|\!|\!|_{\ell,\mu}$. We define the corresponding metrics on $TM$ with respect to $g$ as follows. The metric $g$ induces the related Sasaki metric $\mbf g$ on $TM$ via
\eq{
\mbf g\equiv g_{ij} dx^i\otimes dx^j+g_{ij}Dp^i\otimes Dp^j,
}
where $Dp^i=dp^i+\Gamma^i_{jk}p^jdx^k$. The covariant derivative corresponding to $\mbf g$ is denoted by $\boldsymbol{\na}$. We consider the volume form on $TM$,
\eq{
\mu_{\mbf g}=\ab{g}dx^3 \wedge dp^3.
}
We define a weighted version of the Sasaki metric by 
\eq{
\underline{\mbf g}=g_{ij}dx^i\otimes dx^j+\overline p^{-2}g_{ij}Dp^i\otimes Dp^j,
}

where we denote $\overline p=\sqrt{1+\abg{p}^2}$. For explicit computations including the Sasaki metric on the tangent bundle we use indices $\mbf a$, $\mbf b$, $\hdots \in\{1,\hdots, 6\}$, where $1,2,3$ correspond to horizontal directions and $4,5,6$ to vertical directions. We introduce the frame $\{\theta_{\mbf a}\}_{\mbf a\leq 6}=\{\mbf A_1,\A_2,\A_3,\B_1,\B_2,\B_3\}$ and we denote the connection coefficients of the Sasaki metric in this frame by $\boldsymbol \Gamma$ (cf.~\eqref{co-coeff}). We define the $L^2$-Sobolev energy of the distribution function by
\eq{
\mcr E_{\ell,\mu} (f)\equiv \sqrt{\sum_{k\leq \ell}\int_{TM}\overline p^{2\mu+4(\ell-k)}
 |\boldsymbol{\nabla}^kf|^2_{\underline {\mbf g}}\mu_{\mbf g}}.
}
\begin{rem}
The choice for the weights, to increase with decreasing level of regularity, is necessary to absorb terms with high weights, which result from the connection coefficients of the Sasaki metric, where momentum weights appear in conjunction with horizontal derivatives. This is discussed in more detail below.\end{rem}

\begin{lem}
For $\delta$-small data with $\delta$ sufficiently small the energies $\mcr E_{5,4}(f)$ and $|\!|\!|f|\!|\!|_{5,4}$ are equivalent.
\end{lem}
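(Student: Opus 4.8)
The plan is to show that the two energies $\mcr E_{5,4}(f)$ and $|\!|\!|f|\!|\!|_{5,4}$ are bounded by fixed multiples of each other, with constants depending only on the smallness parameter and the fixed data $(\gamma,T_0)$. The essential point is that both are built from the same template — an $L^2$ norm over $TM$ of covariant derivatives of $f$ up to order $5$, with momentum weights $\overline p^{2\mu+4(\ell-k)}$ — the only difference being that one version uses the Sasaki structure $(\mbf g,\boldsymbol\nabla,\mu_{\mbf g},\underline{\mbf g})$ associated to the dynamical metric $g$, and the other the corresponding objects $(\boldsymbol\gamma,{}^{\boldsymbol\gamma}\boldsymbol\nabla,\mu_{\boldsymbol\gamma},\underline{\boldsymbol\gamma})$ associated to the fixed metric $\gamma$. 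So the proof is a matter of comparing these four ingredients under the smallness assumption $g-\gamma\in H^6$ small, and then combining the comparisons.

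First I would record the pointwise comparison of the metrics themselves: since $\Abk{g-\gamma}6$ is small (and $H^6(M)\hookrightarrow C^0$ in dimension $3$, in fact into $C^2$), we have $\tfrac12\gamma\le g\le 2\gamma$ as bilinear forms, hence also $\tfrac12\gamma^{ij}\le g^{ij}\le 2\gamma^{ij}$ and $\tfrac12|{\cdot}|_\gamma\le|{\cdot}|_g\le 2|{\cdot}|_\gamma$ after possibly shrinking $\delta$. This immediately gives equivalence of the weights $\overline p$ and $\overline p_\gamma$ (so $\overline p^{s}\approx \overline p_\gamma^{s}$ for each fixed exponent $s$ appearing, uniformly), equivalence of the volume forms $\mu_{\mbf g}=|g|\,dx^3\wedge dp^3$ and $\mu_{\boldsymbol\gamma}=|\gamma|\,dx^3\wedge dp^3$ (ratio of determinants bounded above and below), and equivalence of the fiber metrics $\underline{\mbf g}$ and $\underline{\boldsymbol\gamma}$ used to measure the tensors $\boldsymbol\nabla^k f$, once one knows these tensors are comparable. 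Next I would compare the two connections: the difference $\Gamma^i_{jk}(g)-\widehat\Gamma^i_{jk}(\gamma)$ is a tensor controlled in $H^{5}$ by $\Abk{g-\gamma}6$, hence small; this is precisely the quantity $\Chr imn-\hChr imn$ already appearing in \eqref{shift} and in the curvature decomposition. The difference of the induced Sasaki connection coefficients $\boldsymbol\Gamma$ on $TM$ is then built algebraically from $\Gamma(g)-\widehat\Gamma(\gamma)$, its first derivatives, and the curvature of $\gamma$ (times powers of $p$, which are exactly absorbed by the weights $\overline p^{4(\ell-k)}$, as the remark following the definition of $\mcr E_{\ell,\mu}$ points out). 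Therefore the difference operator $\boldsymbol\nabla-{}^{\boldsymbol\gamma}\boldsymbol\nabla$, acting on a $j$-tensor, is a zeroth-order (in $f$) operator whose coefficients are small in the relevant norms.

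Then I would run the standard inductive estimate: writing $\boldsymbol\nabla^k f$ in terms of ${}^{\boldsymbol\gamma}\boldsymbol\nabla^j f$ for $j\le k$ with coefficients that are polynomial in the small connection-difference tensor and its $\gamma$-derivatives up to order $k-1$, and vice versa. Each such coefficient, multiplied into a lower-order ${}^{\boldsymbol\gamma}\boldsymbol\nabla^j f$ carrying a strictly larger momentum weight, is dominated — after an $L^\infty_x$ bound on the coefficient via Sobolev embedding $H^{s}(M)\hookrightarrow L^\infty(M)$ for $s\ge 2$ — by a constant times the integrand of $|\!|\!|f|\!|\!|_{\ell,\mu}^2$; the larger weight on the lower-order term is available precisely because $2\mu+4(\ell-j)>2\mu+4(\ell-k)$ when $j<k$, which is what makes the absorption work. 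Summing over $k\le 5$ and using the metric/volume/weight equivalences from the first step yields $\mcr E_{5,4}(f)\le C\,|\!|\!|f|\!|\!|_{5,4}$, and the reverse inequality is obtained symmetrically by interchanging the roles of $g$ and $\gamma$ (the argument is symmetric since each metric is close to the other). I expect the main obstacle to be purely bookkeeping: keeping track of how many $x$-derivatives land on the connection-difference coefficients versus on $f$, and verifying that at each order the momentum-weight budget $4(\ell-k)$ is large enough to swallow the powers of $p$ that the Sasaki connection coefficients introduce — but this is exactly the mechanism the authors flagged in the remark, so no new idea is needed beyond careful indexing and repeated use of Sobolev embedding on the compact $3$-manifold $M$.
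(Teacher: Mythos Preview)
The paper states this lemma without proof, treating it as a routine consequence of the smallness assumption on $g-\gamma$ in $H^6$; your proposal supplies exactly the standard argument one would expect --- pointwise metric comparison via Sobolev embedding, equivalence of weights, volume forms and fiber metrics, then an inductive comparison of $\boldsymbol\nabla^k f$ and ${}^{\boldsymbol\gamma}\boldsymbol\nabla^k f$ using the smallness of the connection--difference tensor, with the increasing momentum weights on lower--order terms absorbing the $p$--factors from the Sasaki connection coefficients \eqref{co-coeff}. One small correction of phrasing: the Sasaki connection difference involves $\riem(g)-\riem(\gamma)$ rather than ``the curvature of $\gamma$'', but this difference is in turn controlled by derivatives of $\Gamma(g)-\widehat\Gamma(\gamma)$ and hence by $\Abk{g-\gamma}6$, so your scheme goes through as written.
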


The main energy estimate for the distribution function is given in the following.

\begin{prop}\label{prp-l2-est}
For $\delta>0$ sufficiently small, $(g,\Si,f)\in\mcr B^{6,5,5}_{\tau,\delta}$ and $f$ a solution to the transport equation \eqref{eq-transport} the following estimate holds.
\eq{\label{en-est-f-prp}\alg{
\p T \mcr E^2_{\ell,\mu}(f)&\leq C\Big[ \Abk{N-3}{\ell}+\Abk{\Si}{\ell}+\Abk{X}{\ell+1}+\ab{\tau}\Abk{N^{-1}\p T X}{\ell}+\ab{\tau^{-1}}\Abk{N^{-1}\Gamma^*}{\ell} \\
&\qquad\,\,+\Abk{\Gamma^*_*}{\ell}+\Abk{(\Si+g)X}{\ell}+\ab\tau\mcr G\Big]\cdot   \mcr E^2_{\ell,\mu}(f)
}}
\end{prop}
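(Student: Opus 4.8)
The plan is to differentiate the energy $\mcr E^2_{\ell,\mu}(f) = \sum_{k\le\ell}\int_{TM}\overline p^{2\mu+4(\ell-k)}|\boldsymbol\nabla^k f|^2_{\underline{\mbf g}}\,\mu_{\mbf g}$ along the flow generated by the rescaled transport equation \eqref{eq-transport}, and to track every term that is produced. First I would commute $\boldsymbol\nabla^k$ past the transport operator: writing the right-hand side of \eqref{eq-transport} schematically as $\p T f = \mathcal T f$, where $\mathcal T$ is the first-order differential operator on $TM$ built from the marked transport term $2p^i\B_i$ and the lower-order coefficients $\tau Np^a/\underline p$, $\underline p N^{-1}(\tau\p TX^a-\tau^{-1}\Gamma^a)$, $\Gamma^*_*$, and $(\Si+g)X$, one gets $\p T(\boldsymbol\nabla^k f) = \mathcal T(\boldsymbol\nabla^k f) + [\boldsymbol\nabla^k,\mathcal T]f$, and the commutator is of order $k$ with coefficients controlled by up to $\ell$ derivatives of $N,\Si,X,g$ and the Christoffel/connection data. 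The key point, to be made precise exactly as in \cite{Fa16}, is that the principal transport term $2p^i\B_i$ — the term marked $(\star\star)$ — is \emph{not} estimated by Gr\"onwall; instead, when paired against $\boldsymbol\nabla^k f$ in the energy integral, an integration by parts in the vertical ($p$) variables turns it into the dilution/divergence term that, together with the $\overline p$-weights and the time-derivative of $\mu_{\mbf g}$ and of $\overline p$, produces a term with a definite (favourable) sign or at worst a term absorbable into $\mcr G$. This is the reason the weights $\overline p^{2\mu+4(\ell-k)}$ are chosen increasing as $k$ decreases: the horizontal commutators hit $f$ with connection coefficients $\boldsymbol\Gamma$ that carry one extra power of $p$, and the $4(\ell-k)$ gap in the weight exponent is exactly what lets one reabsorb these without losing weight.

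Next I would bound each remaining contribution. Every term other than $(\star\star)$ contributes a coefficient that is, pointwise in $x$ (after using the smallness assumptions to bound the metric-dependent quantities and the pointwise estimates \eqref{p-est1}–\eqref{p-est2} for $p^0/\underline p$), of the form (sup over $x$ of a Sobolev-controlled quantity) times the integrand of $\mcr E^2_{\ell,\mu}$. Concretely: the $\tau Np^a/\underline p\,\A_a f$ term gives, after Cauchy–Schwarz and \eqref{p-est1}, a factor $\ab\tau\,\|\cdot\|$ absorbed into the $\ab\tau\mcr G$ bucket; the term with $-\tau\p TX^a$ gives $\ab\tau\|N^{-1}\p TX\|_{\ell}$; the term with $\tau^{-1}\Gamma^a$ gives $\ab{\tau^{-1}}\|N^{-1}\Gamma^*\|_{\ell}$ (this is the slowly-decaying $(\star)$ term, kept explicitly); the $\Gamma^*_* p^u\B_u f$ term gives $\|\Gamma^*_*\|_{\ell}$; and the last term gives $\|(\Si+g)X\|_{\ell}$. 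The commutator $[\boldsymbol\nabla^k,2p^i\B_i]$ and the derivatives of the coefficients produce the contributions $\|N-3\|_{\ell}+\|\Si\|_{\ell}+\|X\|_{\ell+1}$ — the $X$ at order $\ell+1$ because the connection coefficients $\Gamma^i_{jk}$ hidden in $\B_i$ on $TM$ carry one derivative of the metric and hence (through the shift) of $X$. Here I would use \lref{} on $|\!|\!|f|\!|\!|_{\infty,\ell,\mu}\le C|\!|\!|f|\!|\!|_{\ell+2,\mu}$ where needed, the equivalence of $\mcr E_{5,4}$ with $|\!|\!|\cdot|\!|\!|_{5,4}$, and the equivalence of the $g$- and $\gamma$-based norms under smallness, so that all the pointwise products can be estimated by $H^{\ell}$ Sobolev norms via the algebra property of $H^2(M)\hookrightarrow L^\infty$.

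Collecting everything, the derivative $\p T\mcr E^2_{\ell,\mu}(f)$ is bounded by the sum of the listed coefficients times $\mcr E^2_{\ell,\mu}(f)$, which is \eqref{en-est-f-prp}. The main obstacle is bookkeeping rather than any single hard inequality: one has to show that the vertical integration by parts on the $(\star\star)$ term genuinely closes — i.e.\ that the weight $\overline p^{2\mu+4(\ell-k)}$, the factor $\ab{g}$ in $\mu_{\mbf g}$, the underline-weighted norm $|\cdot|_{\underline{\mbf g}}$ which itself depends on $\overline p$, and the connection coefficients $\boldsymbol\Gamma$ of the Sasaki metric all interact so that no uncontrolled positive power of $p$ survives, and that the genuinely dangerous quantity $\ab{\tau^{-1}}\|N^{-1}\Gamma^*\|_{\ell}$ (which contains $\tau^{-1}\nabla N$) is isolated as its own term rather than hidden inside $\mcr G$, since downstream it is precisely this term that forces the small $e^{C\varepsilon T}$ growth. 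I would therefore organize the proof so that the $(\star\star)$-integration-by-parts computation and the tracking of $p$-weights through the Sasaki connection coefficients $\boldsymbol\Gamma$ are done first and carefully, citing the analogous computation in \cite{Fa16}, and treat all other terms by the routine Cauchy–Schwarz-plus-Sobolev-embedding argument sketched above.
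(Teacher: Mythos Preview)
Your outline has a genuine gap at the very first step. You write $\p T(\boldsymbol\nabla^k f)=\mathcal T(\boldsymbol\nabla^k f)+[\boldsymbol\nabla^k,\mathcal T]f$, but this identity presupposes $[\p T,\boldsymbol\nabla^k]=0$, which is false in a non-perturbative way: the rescaled momentum $p^a=\tau^{-2}\tilde p^a$ is $T$-dependent, so $[\p T,\B_a]=-2\B_a$, and the $p$-linear Sasaki connection coefficients $\boldsymbol\Gamma^{\mbf c}_{\mbf a\mbf b}$ (the $p^k\mathbf{Riem}$ entries in \eqref{co-coeff}) satisfy $\p T\boldsymbol\Gamma=-2\boldsymbol\Gamma+\text{small}$. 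These commutators therefore produce terms of order one, not terms controlled by $\|N-3\|_\ell$, $\|\Si\|_\ell$, $\|X\|_{\ell+1}$. Likewise, your claim that $[\boldsymbol\nabla^k,2p^i\B_i]$ ``produces the contributions $\|N-3\|_\ell+\|\Si\|_\ell+\|X\|_{\ell+1}$'' is incorrect: from $[\B_j,p^i\B_i]=\B_j$ and $[p^i\B_i,\boldsymbol\Gamma^{\mbf c}_{\mbf a\mbf b}]=-\boldsymbol\Gamma^{\mbf c}_{\mbf a\mbf b}$ in the $p$-linear cases, this commutator also contains order-one terms.

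The paper's proof is organised precisely around the fact that these two families of order-one terms cancel each other identically. One first expands $\p T\boldsymbol\nabla_{\mbf a_1}\cdots\boldsymbol\nabla_{\mbf a_k}f$ by iteratively pulling $\p T$ through $\theta_{\mbf a_j}$ and $\boldsymbol\Gamma$, recording the large pieces $-2\theta_{\mbf a_j}(\cdots)$ and $-2\boldsymbol\Gamma(\cdots)$ from the $T$-dependence of $p$; then, in the surviving term $\boldsymbol\nabla^k\p Tf=\boldsymbol\nabla^k\mathcal T f$, one commutes $2p^i\B_i$ back out to the front, and each commutator $[2p^i\B_i,\theta_{\mbf a_j}]$ and $[2p^i\B_i,\boldsymbol\Gamma]$ reproduces the \emph{same} large piece with the opposite sign. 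What remains is $2p^i\B_i\boldsymbol\nabla^kf$ at top order, whose integration by parts then cancels the three further order-one terms coming from $\p T$ acting on $\mu_{\mbf g}$, on the weight $\overline p^{2\mu+4(\ell-k)}$, and on $\underline{\mbf g}^{-1}$---that last cancellation you did identify. Your two errors happen to compensate at the level of the final estimate, but as written neither intermediate step is correct; you need to display the $[\p T,\boldsymbol\nabla^k]$ terms explicitly and exhibit their cancellation against $[\boldsymbol\nabla^k,2p^i\B_i]$. (The $\|X\|_{\ell+1}$ contribution, incidentally, arises from $\p T\Gamma(g)$ inside $[\p T,\A_a]$, which contains $\nabla^2X$---not from the $2p^i\B_i$ commutator.)
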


\begin{proof}
The derivation of the energy estimate is a straightforward and technical computation. It follows the lines of the analogous computations in (\cite{Fa16}, section 5). We discuss some exemplary steps, which are more general herein.\\

We take the time derivative of the square of the energy, which yields four leading order terms. The first term results from the time-derivatives hitting the distribution function and reads
\eq{\label{d1-1}
2\int_{TM}\overline p^{2\mu+4(\ell-k)} \underline{\mbf g}^{\mbf a_1\mbf b_1 }\cdot\hdots\cdot \underline{\mbf g}^{\mbf a_\ell\mbf b_\ell}\boldsymbol{\nabla}_{\mbf a_1}\hdots\boldsymbol{\nabla}_{\mbf a_\ell}f\cdot \p T \boldsymbol{\nabla}_{\mbf b_1}\hdots\boldsymbol{\nabla}_{\mbf b_\ell}f \mu_{\mbf g}.
}
The second leading order term arises from the time-derivative of the volume form, which, when the derivative acts on the time-function in the rescaled momentum variables and reads
\eq{\label{dfiuh}
6\cdot\mcr{E}_{\mu,\ell}^2(f),
}
since $\boldsymbol\mu_{TM}=\ab{g}\tau^{-6}d^3\tilde p$, where $\tilde p$ is time-independent. The third leading order term occurs when the time derivative hits the time function in the momentum-weight factor and yields
\eq{\label{wefugef}
2(2\mu+4(\ell-k)-2)\int_{TM}\overline p^{2\mu+4(\ell-k)-2}\abg{p}^2
|\boldsymbol{\nabla}^kf|^2_{\underline {\mbf g}}\mu_{\mbf g}.
}
The fourth leading order term results from the time-derivative hitting the momentum variable in the inverse $\underline{\mathbf g}^{\mbf a_i\mbf b_i}$, when $\mbf a_i,\mbf b_i\geq 4$ and reads
\eq{\label{sdfhb}
\int_{TM}\frac{4\abg p^2}{1+\abg{p}^2}\overline p^{2\mu+4(\ell-k)}
|\boldsymbol{\nabla}^kf|^2_{\underline {\mbf g}}\mu_{\mbf g}
}
for each pair with $\mbf a_i,\mbf b_i\geq 4$. The non-explicitly listed terms arise when the time-derivative hits the rescaled metric $g$, which yields terms of the first three types listed in \eqref{en-est-f-prp}. We evaluate term \eqref{d1-1} in the following.

\eq{\alg{
\p T \boldsymbol{\nabla}_ {\mbf a_1}\hdots\boldsymbol{\nabla}_ {\mbf a_k}f&=\p T \left(\theta_{\mbf a_1}\boldsymbol{\nabla}_ {\mbf a_2}\hdots\boldsymbol{\nabla}_ {\mbf a_k}f-\sum_{ 2\leq i\leq k}\hChr {\mbf{e}}{\mbf a_i}{\mbf a_1}\boldsymbol{\nabla}_ {\mbf a_2}\hdots\boldsymbol{\nabla}_ {\mbf e}\hdots\boldsymbol{\nabla}_ {\mbf a_k}f \right)\\
&=[\p T,\theta_{\mbf a_1}]\boldsymbol{\nabla}_ {\mbf a_2}\hdots\boldsymbol{\nabla}_ {\mbf a_k}f
-\sum_{ 2\leq i\leq k}(\p T\hChr {\mbf{e}}{\mbf a_i}{\mbf a_1})\boldsymbol{\nabla}_ {\mbf a_2}\hdots\boldsymbol{\nabla}_ {\mbf e}\hdots\boldsymbol{\nabla}_ {\mbf a_k}f \\
&\quad\,\,+\boldsymbol{\nabla}_ {\mbf a_1}\p T \boldsymbol{\nabla}_ {\mbf a_2}\hdots\boldsymbol{\nabla}_ {\mbf a_k}f\\
&=\underbrace{[\p T,\theta_{\mbf a_1}]\boldsymbol{\nabla}_ {\mbf a_2}\hdots\boldsymbol{\nabla}_ {\mbf a_k}f}_{(1)}
-\underbrace{\sum_{ 2\leq i\leq k}(\p T\hChr {\mbf{e}}{\mbf a_i}{\mbf a_1})\boldsymbol{\nabla}_ {\mbf a_2}\hdots\boldsymbol{\nabla}_ {\mbf e}\hdots\boldsymbol{\nabla}_ {\mbf a_k}f}_{(2)} \\
&\quad\,\,+\sum_{j\leq k-2}\boldsymbol{\nabla}_ {\mbf a_1}\hdots\boldsymbol{\nabla}_ {\mbf a_j}\Big(\underbrace{[\p T,\theta_{\mbf a_{j+1}}]\boldsymbol{\nabla}_ {\mbf a_{j+2}}\hdots\boldsymbol{\nabla}_ {\mbf a_k}f}_{(3)}\\
&\qquad\qquad\qquad\qquad\qquad\quad-\underbrace{\sum_{ j+2\leq i\leq k}(\p T\hChr {\mbf{e}}{\mbf a_i}{\mbf a_{j+1}})\boldsymbol{\nabla}_ {\mbf a_{j+2}}\hdots\boldsymbol{\nabla}_ {\mbf e}\hdots\boldsymbol{\nabla}_ {\mbf a_k}f}_{(4)}\Big)\\
&\quad\,\,+\underbrace{\boldsymbol{\nabla}_ {\mbf a_1}\hdots \boldsymbol{\nabla}_ {\mbf a_{k-1}}[\p T,\theta_{\mbf a_k}]f}_{(5)}+\underbrace{\boldsymbol{\nabla}_ {\mbf a_1}\boldsymbol{\nabla}_ {\mbf a_2}\hdots\boldsymbol{\nabla}_ {\mbf a_k}\p T f}_{(6)}\\
}}
We first analyze the terms containing commutators of $\p T$ with $\theta_{\mbf a_i}$. Since $\theta_{\mbf a_i}$ is not affected from the rescaling when $\mbf a_i\leq 3$ (all time-factors cancel) the commutator can be estimated by terms arising from $\p T\Gamma(g)$, which yields terms of the form $\nabla\Si$, $\nabla^2 X$ and $\nabla(N-3)$. When $\mbf a_i\geq 4$, i.e. $\theta_{\mbf a_i}=\B_{\mbf a_i-3}$ we have $[\p T,\theta_{\mbf a_i}]=-2\theta_{\mbf a_i}$. This implies that in the case of $\mbf a_i\leq 3$, term $(1)$ can be estimated by terms included in the first three terms on the right-hand side of \eqref{en-est-f-prp}. In the complementary case this results in a term of the form
\eq{
-2\theta_{\mbf a_1}\boldsymbol{\nabla}_ {\mbf a_2}\hdots\boldsymbol{\nabla}_ {\mbf a_k}f,
}
which requires to be canceled for the estimate to hold, as we see below. The terms $(3)$ and $(5)$ again give rise to terms of the form of the three first terms on the right-hand side of \eqref{en-est-f-prp} if $\mbf a_{j+1}\leq 3$ and $\mbf a_{k}\leq 3$. In the complementary case, terms of the form

\eq{\label{sdflk}
-2\sum_{j\leq k-2}\boldsymbol{\nabla}_ {\mbf a_1}\hdots\boldsymbol{\nabla}_ {\mbf a_j}\theta_{\mbf a_{j+1}}\boldsymbol{\nabla}_ {\mbf a_{j+2}}\hdots\boldsymbol{\nabla}_ {\mbf a_k}f \mbox{ and } -2\boldsymbol{\nabla}_ {\mbf a_1}\hdots\boldsymbol{\nabla}_ {\mbf a_{k-1}}\theta_ {\mbf a_k}f
}
occur, which are canceled by terms arising below. Regarding terms $(2)$ and $(4)$, from \eqref{co-coeff} we observe that these terms yield time derivatives of $\Gamma$ or of the Riemann tensor $\mbf{Riem}$, which in combination again yields terms of the form of the first three terms on the right-hand side of \eqref{en-est-f-prp} and terms that arise when the time derivative hits the rescaled momentum variable in the respective cases in \eqref{co-coeff}. From these we again obtain leading order terms, which are of the form

\eq{
-2\hChr {\mbf{e}}{\mbf a_i}{\mbf a_1}\boldsymbol{\nabla}_ {\mbf a_2}\hdots\boldsymbol{\nabla}_ {\mbf e}\hdots\boldsymbol{\nabla}_ {\mbf a_k}f
}
when $\mbf e\leq 3$ and $4<\mbf a_i+\mbf a_1\leq 10$
or when $\mbf e\geq 4$ and $\mbf a_i, \mbf a_1\leq 3$; and

\eq{\label{qwejb}
-2\boldsymbol{\nabla}_ {\mbf a_1}\hdots\boldsymbol{\nabla}_ {\mbf a_j}
\hChr {\mbf{e}}{\mbf a_i}{\mbf a_{j+1}}\boldsymbol{\nabla}_ {\mbf a_{j+2}}\hdots\boldsymbol{\nabla}_ {\mbf e}\hdots\boldsymbol{\nabla}_ {\mbf a_k}f,
}
when $\mbf e\leq 3$ and $4<\mbf a_i+\mbf a_{j+1}\leq 10$
or when $\mbf e\geq 4$ and $\mbf a_i, \mbf a_{j+1}\leq 3$.
Both types of terms are cancelled by terms arising below.\\

It remains to consider term $(6)$, where $\p Tf$ is replaced by the transport equation yielding the following term.

\eq{\label{dfiu}\alg{
\boldsymbol{\nabla}_ {\mbf a_1}\boldsymbol{\nabla}_ {\mbf a_2}\hdots\boldsymbol{\nabla}_ {\mbf a_k}\Big(&\tau N p^a/\underline{p}\mbf A_ af-\frac{\underline{p}}{N}\Big[-\tau\p TX^a+\tau^{-1}\Gamma^a\Big]\B_af +\underbrace{2p^i\B_if}_{(\star)}\\
&\quad-2\Gamma_u^ep^u\B_ef- \left(\Si_{ab}+\frac{1}{3}g_{ab}\right)X^e\frac{p^ap^b}{\underline{p}}\B_ef\Big)
}}

We begin with the most important term to evaluate, which is here marked by $(\star)$. This term is relevant for the cancellation of all non-perturbative terms above. Before we start the computation we derive a few simple commutators. The following identities hold.
\eq{\alg{
[\B_j,p^i\B_i ]f&= \B_j f\\
[\A_j,p^i\B_i ]f&=0\\
[\boldsymbol{\Gamma}^{\mbf c}_{\mbf a\mbf b},p^i\B_i]f&=\begin{cases}
-\boldsymbol{\Gamma}^{\mbf c}_{\mbf a\mbf b}&\mbox{if }\mbf c\geq 4; \mbf a,\mbf b\leq 3 \mbox{ or }\mbf c\leq 3; 4\leq\mbf a+\mbf b\leq9\\
0& \mbox{ else }
\end{cases}
}}
We evaluate now the term from above.

\eq{\alg{
2\boldsymbol{\nabla}_ {\mbf a_1}\boldsymbol{\nabla}_ {\mbf a_2}\hdots\boldsymbol{\nabla}_ {\mbf a_k}p^i\B_i f=2\boldsymbol{\nabla}_ {\mbf a_1}\boldsymbol{\nabla}_ {\mbf a_2}\hdots \boldsymbol{\nabla}_{\mbf a_{k-1}}p^i\B_i\boldsymbol{\nabla}_ {\mbf a_k} f-2\boldsymbol{\nabla}_ {\mbf a_1}\boldsymbol{\nabla}_ {\mbf a_2}\hdots \boldsymbol{\nabla}_{\mbf a_{k-1}}[p^i\B_i,\boldsymbol{\nabla}_ {\mbf a_k}] f
}}
According to the commutators above, the second term in the previous line vanishes if $\mbf a_k\leq 3$ or cancels the second term in \eqref{sdflk}. We proceed with the first term.
\eq{\alg{
2\boldsymbol{\nabla}_ {\mbf a_1}\boldsymbol{\nabla}_ {\mbf a_2}\hdots \boldsymbol{\nabla}_{\mbf a_{k-1}}p^i\B_i\boldsymbol{\nabla}_ {\mbf a_k} f&=
2\boldsymbol{\nabla}_ {\mbf a_1}\boldsymbol{\nabla}_ {\mbf a_2}\hdots \Big(\theta_{\mbf a_{k-1}}p^i\B_i\boldsymbol{\nabla}_ {\mbf a_k} f-\boldsymbol{\Gamma}_{\mbf a_k\mbf a_{k-1}}^{\mbf e}p^i\B_i\boldsymbol{\nabla}_ {\mbf e} f\Big)\\
&=2\boldsymbol{\nabla}_ {\mbf a_1}\boldsymbol{\nabla}_ {\mbf a_2}\hdots \Big(p^i\B_i\boldsymbol{\nabla}_{\mbf a_{k-1}}\boldsymbol{\nabla}_{\mbf a_{k}}f-[p^i\B_i,\theta_{\mbf a_{k-1}}]\boldsymbol{\nabla}_{\mbf a_{k}}f\\
&\qquad\qquad\qquad\qquad+[p^i\B_i,\boldsymbol{\Gamma}_{\mbf a_k\mbf a_{k-1}}^{\mbf e}]\boldsymbol{\nabla}_{\mbf e}f\Big)
}}
According to the previous step, the second term cancels the corresponding term from \eqref{sdflk} and the third term on the right-hand side cancels the corresponding term from \eqref{qwejb}.\\
Continuing with the first term on the right-hand side of the previous equation and further commuting $p^i\B_i$ to the left, we obtain terms cancelling all terms in \eqref{sdflk} and \eqref{qwejb}. Then we are left with the term
\eq{\alg{
&2\int_{TM}\overline p^{2\mu+4(\ell-k)} \underline{\mbf g}^{\mbf a_1\mbf b_1 }\cdot\hdots\cdot \underline{\mbf g}^{\mbf a_\ell\mbf b_\ell}\boldsymbol{\nabla}_{\mbf b_1}\hdots\boldsymbol{\nabla}_{\mbf b_k}f\cdot
2p^i\B_i \boldsymbol{\nabla}_ {\mbf a_1}\boldsymbol{\nabla}_ {\mbf a_2}\hdots \boldsymbol{\nabla}_{\mbf a_{k-1}}\boldsymbol{\nabla}_ {\mbf a_k}f\mu_{\mbf g}\\
&=2\int_{TM}\overline p^{2\mu+4(\ell-k)} \underline{\mbf g}^{\mbf a_1\mbf b_1 }\cdot\hdots\cdot \underline{\mbf g}^{\mbf a_\ell\mbf b_\ell}p^i\B_i\Big(  \boldsymbol{\nabla}_{\mbf b_1}\hdots\boldsymbol{\nabla}_{\mbf b_k}f\cdot
\boldsymbol{\nabla}_ {\mbf a_1}\hdots \boldsymbol{\nabla}_ {\mbf a_k}f\Big)\mu_{\mbf g}.
}}
Integration by parts yields three types of terms. The first term arises when $\B_i$ acts on $p^i$ and cancels \eqref{dfiuh}. The second term results from $\B_i$ acting on $\overline p$ and cancels \eqref{wefugef}. Finally, the term arising from $\B_i$ acting on $\underline{\mathbf g}^{\mbf a_i\mbf b_i}$ when $\mbf a_i,\mbf b_i\geq 4$ cancels \eqref{sdfhb}.\\

It remains to consider the remaining terms in \eqref{dfiu}. When estimating the term corresponding to the first term in \eqref{dfiu} we use the estimate
\eq{
\frac{\abg{p}}{\underline p}\leq \mcr G.
}
The corresponding term in the estimate \eqref{en-est-f-prp} is $\ab{\tau}\mcr G$.
Note that compact support is necessary for this. Otherwise we would obtain an additional factor $\ab{\tau}^{-1}$, which would leave no decay for this term. To outline the estimates in more detail we consider one particular term from \eqref{dfiu} and claim the other terms can be handled in a similar way.

We sketch
\eq{\alg{
&\boldsymbol{\nabla}_ {\mbf a_1}\boldsymbol{\nabla}_ {\mbf a_2}\hdots\boldsymbol{\nabla}_ {\mbf a_k}\left(p^u\Gamma^e_u\B_e f\right)\\
&=\theta_ {\mbf a_1}\boldsymbol{\nabla}_ {\mbf a_2}\hdots\boldsymbol{\nabla}_ {\mbf a_k}-\sum_{2\leq j \leq k}\boldsymbol{\Gamma}^{\mbf e}_{\mbf a_j\mbf a_1}\boldsymbol{\nabla}_ {\mbf a_2}\hdots\boldsymbol{\nabla}_{\mbf e}\hdots\boldsymbol{\nabla}_ {\mbf a_k}\\
&=\theta_{\mbf a_1}\hdots \theta_{\mbf a_k} (p^u\Gamma^e_u\B_e f)+\hdots+(-1)^{k-1}\sum \boldsymbol{\Gamma}^{\mbf e_1}_{\mbf a_k\mbf a_1}\boldsymbol{\Gamma}_{\mbf e_1\mbf a_2}^{\mbf e_2}\hdots\boldsymbol{\Gamma}^{\mbf e_{k-1}}_{\mbf e_{k-2}\mbf a_{k-1}}\theta_{\mbf e_{k-1}} (p^u\Gamma^e_u\B_e f),
}}
where we suppress all mixed terms. Commuting the operator $p^u\Gamma_u^e\B_e $ to the front we obtain a term of the form
\eq{
p^u\Gamma^e_u\B_e\left(\boldsymbol{\nabla}_ {\mbf a_1}\boldsymbol{\nabla}_ {\mbf a_2}\hdots\boldsymbol{\nabla}_ {\mbf a_k} f\right).
}
The corresponding integral, after an integration by parts, yields the term $\Abk{\Gamma_*^*}\ell$ in \eqref{en-est-f-prp}. The remaining terms, after commuting $p^u\Gamma_u^e\B_e $ to the front, are schematically of the form
\eq{
p^u(\nabla^{k_1} \Gamma^e_u)\left(p^i \nabla^{k_2}\mathrm{Riem}\right)^{k_3} \boldsymbol{\nabla}^{k_4} f,
}
where $\sum k_i=k$. Note that the momentum variables in front of the Riemann tensor, which arises as part of the $\boldsymbol \Gamma$ terms, can increase while appearing as coefficients of $\boldsymbol{\nabla}_i$ with $i\leq3$. In this case, the weights in the energy, appearing for lower numbers of derivatives, allow for these terms to be estimated by the energy. All the remaining terms arising from \eqref{dfiu} can be estimated similarly.
\end{proof}

\subsection{Estimating the energy momentum tensor}

\begin{lem}\label{lem-T-est}
Under smallness assumptions and $\ell\geq4$ the following estimates hold.
\eq{\alg{
\Abk{\rho(f)}\ell+\Abk{\jmath(f)}\ell&\leq C\cdot \ev \ell 3(f)\\
\Abk{\underline \eta(f)}\ell+\Abk{\underline T(f)}\ell&\leq C \cdot \ev \ell 4(f)\\
\Abk{S(f)}\ell&\leq C \left( \br \ell(f)+\ab\tau^2\ev\ell 4(f)\right)
}}

\end{lem}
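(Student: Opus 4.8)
The plan is to estimate each of the five matter quantities $\rho, \jmath, \underline\eta, \underline T, S$ by working directly from their integral representations in terms of $f$ (given just above Lemma \ref{lem-T-est}) and reducing everything, via pointwise control of the momentum-dependent kernels, to the weighted $L^2$-energies $\ev\ell\mu(f)$. The key preliminary observation is that in each integrand the factor multiplying $f$ is, after using $\underline p = Np^0$ and the smallness of $N-3$, a smooth function of $p$ that grows at most polynomially in $\abg p$; the whole game is to identify the correct growth order, because that order dictates which momentum weight $\overline p^{2\mu}$ (hence which $\mu \in\{3,4\}$) must appear in the energy. So first I would record, under the smallness assumptions, the elementary pointwise bounds: from the mass-shell relation \eqref{msr-resc} and Lemma (the one containing \eqref{p-est1}--\eqref{p-est2}) one has $p^0 \leq C\overline p$ and $1 \leq \underline p = Np^0$, so $(p^0)^2/\underline p \leq C\overline p$, $|p^ap^0|/\underline p \leq C\abg p \leq C\overline p$, and $|p^ap^b|/\underline p \leq C\abg p^2 \leq C\overline p^2$; likewise $\abg{p+\tau^{-1}p^0X}^2/\underline p \leq C\overline p^2$ once one absorbs the $\tau^{-1}X$ contribution using the smallness of $\hat X$ together with the fact that $\tau^{-1}p^0X$ is comparable to $\overline p$ times a small factor.

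Next I would run the interpolation/Sobolev step. For a scalar matter quantity $Q(f)(x) = \int f\,\Psi(x,p)\sg\,dp$ with $|\Psi| \leq C\overline p^{s}$ and $\Psi$ smooth, the standard Vlasov $L^2$-energy formalism (as in \cite{Fa16}, whose conventions are recalled in Section \ref{sec : L2distr}) gives $\Abk{Q(f)}\ell \leq C\,\ev{\ell}{\mu}(f)$ provided the momentum weight absorbs the kernel growth plus the two extra powers of $\overline p$ coming from the $L^2\to L^1_p$ passage in the $p$-fibre, i.e. provided $2\mu \geq 2s + (\text{fibre dimension correction})$; with the present normalization this is exactly $\mu = 3$ for the kernels of growth order $s=1$ (which covers $\rho$ and $\jmath$, since $(p^0)^2/\underline p$ and $p^ap^0/\underline p$ grow like $\overline p$) and $\mu = 4$ for the kernels of growth order $s = 2$ (which covers $\underline\eta$ and $\underline T$). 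Here one also uses that the momentum weight in $\ev\ell\mu$ \emph{increases} as the derivative order drops, which lets the horizontal derivatives falling on $\Psi$ (they produce Christoffel symbols, hence extra powers of $p$, cf.\ the remark after the definition of $\ev{\ell}{\mu}$) be reabsorbed; the smoothness and polynomial-in-$p$ bounds on the metric, on $\Gamma^i_{jk}(g)$ and on their derivatives (all controlled by $\Abk{g-\gamma}6$ under smallness) make this routine.

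Finally, the third estimate, for $S_{ab}(f)$. I would split $S$ exactly as it is written: $S_{ab}(f) = \ab\tau^2\int f\,(\cdots)\sg\,dp + \tfrac12 g_{ab}\rho(f) - \tfrac12 g_{ab}\ab\tau^2\underline\eta(f)$. The first piece has kernel of growth order $s=2$ but carries the explicit prefactor $\ab\tau^2$, so it contributes $\ab\tau^2\ev\ell4(f)$; the $g_{ab}\underline\eta$ piece likewise contributes $\ab\tau^2\ev\ell4(f)$ (using $\Abk{g}6 \leq C$); and the $g_{ab}\rho(f)$ piece is the one that, by the definition $\br\ell(f)$ (the Sobolev norm of the energy density $\rho$, estimated separately in Section 6 via the continuity equation rather than through the $L^2$-energy), is kept as the term $\br\ell(f)$. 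This is precisely the split flagged in the Remark after \eqref{dec-eta} and again in the Remark after the Einstein equations: $\rho$ must not be bounded by the distribution-function energy because that would cost the $e^{C\varepsilon T}$ loss, so it is carried symbolically. I expect the only genuinely delicate point to be the bookkeeping of $\tau$-powers and momentum weights in the $\tau^{-1}p^0X$ terms inside $\underline\eta$ and inside the first piece of $S$ — one must check that after expanding the square the $\tau^{-1}$ is always matched either by an $\ab\tau^2$ prefactor or by the smallness of $\hat X = X/N$, so that no uncompensated negative power of $\tau$ survives; everything else is a routine application of the weighted Sobolev machinery already set up for \cite{Fa16}.
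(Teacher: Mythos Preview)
Your proposal is correct and follows essentially the same approach as the paper's proof: both reduce to a Cauchy--Schwarz estimate in the momentum fibre (the paper writes it as $\int F\cdot G \leq (\int F^2G^2\overline p^{2\mu})^{1/2}(\int\overline p^{-2\mu})^{1/2}$, with integrability over the 3-dimensional fibre forcing $\mu\geq 2$ and the kernel growth order $s=1$ resp.\ $s=2$ then fixing $\mu=3$ resp.\ $\mu=4$), followed by the identification of spatial covariant derivatives of the matter integrals with horizontal derivatives under the integral sign. The paper makes this second step explicit via the identity $\nabla_a\int f\sg\,dp=\int\A_a f\sg\,dp$ and its higher-order analogues (with the Riemann-tensor corrections from \eqref{co-coeff} absorbed as lower order under smallness), which you correctly defer to the weighted-Sobolev machinery of \cite{Fa16}; your treatment of the $\tau^{-1}p^0X$ contributions and the split of $S$ into the $\br\ell$ and $\ab\tau^2\ev\ell4$ pieces is likewise in agreement.
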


\begin{proof}
We begin by estimating an integral of the form $\int_{TM}F\cdot G(\abg{p})\mu_{TM}$ for functions $G$, $F$ on $TM$ to explain the number of momentum weights. Let $\mu\geq 2$, then 
\eq{\alg{
\int_{TM}F\cdot G\mu_{TM}&=\iint  F\cdot G \sg dp \sg dx \\
&\leq \int \left(\int F^2 G^2 \overline p^{2\mu}\sg dp\right)^{1/2}\cdot\left(\int \overline p^{-2\mu}\sg dp\right)^{1/2}\sg dx\\
&\leq  \left(\iint F^2 G^2 \overline p^{2\mu}\sg dp\sg dx\right)^{1/2}\cdot\left(\underbrace{\iint \overline p^{-2\mu}\sg dp\sg dx}_{\leq C=C(\vol_g(M),\mu)}\right)^{1/2}.
}}
Depending on the additional momentum factors in $G$, which are of order one for $\rho$ and $\jmath$ and two for the other quantities, this explains the order of weights, necessary in the energies.
In the above computation $F$ represents the term where derivatives have acted on the distribution function and other quantities in the matter variables. We discuss how to estimate these terms in the following.\\
Covariant derivatives of matter quantities correspond to horizontal derivatives under the momentum-integral by the following identity,
\eq{\alg{
\nabla_a \int f \sg dp&=\int \partial_a f\sg dp +f \partial_a \sg dp\\
&=\int (\p a f+\Gamma_{ia}^i f)\sg dp\\
&=\int (\p a f-p^i\Gamma^e_{ia}\B_e f)\sg dp=\int \mbf A_a f \sg dp.
}}
Similar identities hold, for $f$ replaced by $f p^0$ etc. and for higher derivatives. For higher derivatives, we obtain not the full covariant derivative of the Sasaki metric. The additional terms arising from the Riemann tensor in \eqref{co-coeff} can however be added and substracted where the additional terms are lower order and due to the smallness condition, can be absorbed into the constants. Finally, if the horizontal derivative hits the momentum variables such as $\underline p$ or $p^0$ we use the formulae
\eq{\alg{
\A_a (\widehat X_i p^i)&=p^i\nabla_a\widehat X_i\\
\A_a(\widehat p)&=\frac{1}{2\widehat p}\left(\tau^2\widehat X_jp^jp^i\nabla_a \widehat X_i+\partial_a(\abg{\widehat X}^2)(1+\tau^2\abg{p}^2)\right)
}}
and estimate the arising shift vector terms using the smallness condition by the constants.
\end{proof}

\section{Control of the momentum support}
Using the characteristic system associated with the rescaled transport equation we derive an estimate on the supremum of the outer radius of the support of the distribution in momentum space. \\
The characteristic system corresponding to the rescaled transport equation \eqref{eq-transport} reads 
\eq{
\alg{
\frac{dX^a}{dT}&=-\tau \frac{p^a}{p^0} \\\qquad \frac{dP^a}{dT}&=\left(-\tau\p TX^a+\tau^{-1}\Gamma^a\right)p^0-2p^a+2\Gamma^a_ip^i,\\
&\quad\,+\left(\frac1N(\Si_{ij}+\frac13g_{ij})X^a+\tau\Gamma^a_{ij}\right)\frac{p^ip^j}{p^0}.
}}

We define the auxiliary quantity
\eq{
\mbf G(T,x,p)\equiv \abg{p}^2.
}
Using the characteristic system we compute the derivative of $\mbf G$ along a given characteristic. This yields
\eq{\alg{
\frac{d\mbf G}{dT}&=\ab{p}_{\dot g}^2+2\left(-\tau\langle p,\p TX\rangle_g+\tau^{-1}\langle\Gamma^*,p\rangle_g\right)p^0\\
&\quad\, +4p_i\Gamma^i_jp^j+\frac2N\left(\Si_{ij}\frac{p^ip^j}{p^0}\langle X,p\rangle_g+\frac13\frac{\abg p^2}{p^0}\langle X,p\rangle_g\right),
}} 
where it is important to recall that the rescaled momentum variables are time-dependent. Invoking the corresponding estimates for $p^0$ and $\abg{p}(p^{0})^{-1}$ in \eqref{p-est1}, \eqref{p-est2} we deduce the following estimate.

\begin{lem}
Under smallness assumptions, the following estimate holds for any characteristic.
\eq{\alg{
\Big|\frac{d\mbf G}{dT}\Big|&\leq C\left(\ab{\dot g}_g+\ab\tau^2\abg{\p TX}+\ab{\Gamma^*}+\ab{\Gamma^*_*}+\ab\tau^{-1}\abg\Si\abg X+\ab{\tau}^{-1}\abg X\right)\mbf G\\
&\quad\,+ C\left(\ab\tau\abg{\p T X}+\ab{\tau}^{-1}\abg{\Gamma^*}\right)\sqrt{\mbf G}
}}
\end{lem}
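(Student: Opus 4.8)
The plan is to take the expression for $\frac{d\mbf G}{dT}$ derived just above the statement and bound each of its terms by a scalar factor times $\mbf G$ or $\sqrt{\mbf G}$, using the pointwise estimates \eqref{p-est1} and \eqref{p-est2} for $p^0$ and $\abg p (p^0)^{-1}$ together with the smallness assumptions. First I would recall that $\mbf G = \abg p^2$, so $\sqrt{\mbf G} = \abg p$, and rewrite
\eq{\alg{
\frac{d\mbf G}{dT}&=\ab{p}_{\dot g}^2+2\left(-\tau\langle p,\p TX\rangle_g+\tau^{-1}\langle\Gamma^*,p\rangle_g\right)p^0\\
&\quad\, +4p_i\Gamma^i_jp^j+\frac2N\left(\Si_{ij}\frac{p^ip^j}{p^0}\langle X,p\rangle_g+\frac13\frac{\abg p^2}{p^0}\langle X,p\rangle_g\right).
}}
The first term $\ab p_{\dot g}^2$ is immediately $\le C\ab{\dot g}_g\,\mbf G$ by Cauchy--Schwarz applied to the bilinear form $\dot g$. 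The term $4p_i\Gamma^i_jp^j$ is bounded by $C\ab{\Gamma^*_*}\mbf G$. For the two terms in the last bracket I use the Cauchy--Schwarz bound $\ab{\langle X,p\rangle_g}\le \abg X\abg p$ and the factor $\frac{\abg p}{p^0}$ or $\frac{p^ip^j}{p^0}\sim\frac{\abg p^2}{p^0}$, and here the key input is \eqref{p-est1}, which gives $\frac{\abg p}{p^0}\le C\abg X\ab\tau^{-1}+N\sqrt{1-\abg{\hat X}^2}\ab\tau^{-1}\le C\ab\tau^{-1}$ under smallness; this turns $\frac1N\Si_{ij}\frac{p^ip^j}{p^0}\langle X,p\rangle_g$ into a contribution $\le C\ab\tau^{-1}\abg\Si\abg X\,\mbf G$ and $\frac1{3N}\frac{\abg p^2}{p^0}\langle X,p\rangle_g$ into $\le C\ab\tau^{-1}\abg X\,\mbf G$, matching the last two terms of the first line in the claimed estimate.

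The remaining terms are the ones multiplied by $p^0$, namely $-2\tau\langle p,\p TX\rangle_g\,p^0$ and $2\tau^{-1}\langle \Gamma^*,p\rangle_g\,p^0$. Here I would use \eqref{p-est2}, which under smallness (so that $1-\abg{\hat X}^2\ge c>0$ and $N$ is close to $3$) reads $p^0\le C\big(\ab\tau\abg p + \sqrt{1+\tau^2\abg p^2}\big)\le C\big(\ab\tau\abg p+1\big)$, since $\sqrt{1+\tau^2\abg p^2}\le 1+\ab\tau\abg p$. Combining with $\ab{\langle p,\p TX\rangle_g}\le \abg{\p TX}\,\abg p=\abg{\p TX}\sqrt{\mbf G}$, the term $2\ab\tau\,\abg{\p TX}\,\abg p\, p^0$ splits as $\le C\ab\tau\abg{\p TX}\big(\ab\tau\abg p^2+\abg p\big)=C\ab\tau^2\abg{\p TX}\,\mbf G + C\ab\tau\abg{\p TX}\sqrt{\mbf G}$, contributing the $\ab\tau^2\abg{\p TX}\,\mbf G$ term on the first line and the $\ab\tau\abg{\p TX}\sqrt{\mbf G}$ term on the second. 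Similarly $2\ab\tau^{-1}\ab{\Gamma^*}\,\abg p\,p^0\le C\ab\tau^{-1}\ab{\Gamma^*}\big(\ab\tau\abg p^2+\abg p\big)=C\ab{\Gamma^*}\,\mbf G+C\ab\tau^{-1}\ab{\Gamma^*}\sqrt{\mbf G}$, giving the $\ab{\Gamma^*}\,\mbf G$ term on the first line and the $\ab\tau^{-1}\abg{\Gamma^*}\sqrt{\mbf G}$ term on the second. Collecting all contributions and using the triangle inequality yields exactly the stated bound.

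The only real subtlety — and the step I expect to need the most care — is the treatment of the $p^0$-weighted terms: naively $p^0$ could be as large as $O(1)$ (its value on the background, cf.\ the remark after \eqref{msr-resc}), and it is precisely this $O(1)$ piece, rather than any decaying piece, that forces the appearance of the lower-order $\sqrt{\mbf G}$ terms in the estimate and of the undecorated $\ab{\Gamma^*}$ (rather than $\ab\tau^{-1}\ab{\Gamma^*}$) coefficient; one must be careful to extract the inhomogeneous constant from $\sqrt{1+\tau^2\abg p^2}$ correctly and not lose a power of $\ab\tau$. A second point to watch is that \eqref{p-est1} and \eqref{p-est2} involve $N$ and $\abg{\hat X}$, so one must invoke the smallness assumptions (via the corollary bounding $N-3$ and $X$) to replace these by uniform constants; this is routine but should be stated. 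Everything else is Cauchy--Schwarz and bookkeeping, and can be carried out term by term following the decomposition above.
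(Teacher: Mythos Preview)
Your proposal is correct and follows exactly the approach the paper indicates: the paper merely states ``Invoking the corresponding estimates for $p^0$ and $\abg{p}(p^{0})^{-1}$ in \eqref{p-est1}, \eqref{p-est2} we deduce the following estimate'' and gives no further detail, so your term-by-term bookkeeping with Cauchy--Schwarz, the split $p^0\le C(\ab\tau\abg p+1)$ from \eqref{p-est2}, and $\abg p/p^0\le C\ab\tau^{-1}$ from \eqref{p-est1} is precisely the intended argument, spelled out in full.
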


We define the supremum of the values of $\mbf G$ in the support of $f$ at a fixed time $T$ by 
\eq{
\boldsymbol{\mcr G}[T]\equiv \sup\{ \sqrt{\mbf G(T,x,p)}\, |\, (x,p)\in \supp f(T,.\,,.) \}.
}
From the estimate for individual characteristics above, we derive an estimate for $\mcr G$, which serves as a bound for momenta in the support of the distribution function.

\begin{prop}\label{prop-supp-est}
Under smallness assumptions we obtain

\eq{\alg{
\mcr G\Big|_{T}&\leq\left(\mcr G \Big|_{T_0}+C\int_{T_0}^T(e^{-s}\Abk{\p TX}2+e^s\Abk{\Gamma^*}2)ds\right)\\
&\qquad\times\exp\Big[C\int_{T_0}^T\big(e^s\Abk{X}3+\Abk{\Si}2+\Abk{N-3}3\\
&\qquad\qquad\qquad\qquad+e^{-2s}\Abk{\p TX}2+\Abk{\Gamma^*}2+\Abk{\Gamma^*_*}2\big)ds \Big].
}}

\end{prop}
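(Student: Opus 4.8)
The plan is to integrate the pointwise differential inequality of the preceding lemma along individual characteristics, convert it into a scalar Grönwall inequality, and then pass to the supremum over the momentum support. First I would fix a characteristic $T\mapsto (X(T),P(T))$ of the characteristic system associated with \eqref{eq-transport} whose initial value $(X(T_0),P(T_0))$ lies in $\supp f(T_0,\cdot,\cdot)$; since $f$ is constant along characteristics, $(X(T),P(T))$ stays in $\supp f(T,\cdot,\cdot)$, and conversely every point of $\supp f(T,\cdot,\cdot)$ is reached this way because the characteristic flow is a diffeomorphism that transports $\supp f$. Setting $y(T):=\sqrt{\mbf G(T,X(T),P(T))}=\abg{P(T)}$, and working with $y_\varepsilon:=\sqrt{\mbf G+\varepsilon}$ to avoid the mild nonsmoothness of the square root at zeros of $\mbf G$ (letting $\varepsilon\downarrow0$ at the end), one has $\frac{dy_\varepsilon}{dT}=\frac1{2y_\varepsilon}\frac{d\mbf G}{dT}$, so the lemma together with $\mbf G=y^2\le y_\varepsilon^2$ and $\sqrt{\mbf G}\le y_\varepsilon$ yields a scalar inequality
\eq{
\frac{dy_\varepsilon}{dT}\le a(T)\,y_\varepsilon+b(T),
}
with $a(T)=C\big(\ab{\dot g}_g+\ab\tau^2\abg{\p TX}+\ab{\Gamma^*}+\ab{\Gamma^*_*}+\ab\tau^{-1}\abg\Si\abg X+\ab\tau^{-1}\abg X\big)$ and $b(T)=C\big(\ab\tau\abg{\p TX}+\ab\tau^{-1}\abg{\Gamma^*}\big)$, all quantities evaluated at the spatial projection of the characteristic, and both $a,b\ge0$.

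Next I would bound the pointwise quantities in $a$ and $b$ by Sobolev norms. Since $\dim M=3$ we have $H^2(M)\hookrightarrow C^0(M)$, hence $\Abi h\le C\Abk h2$; applying this, rewriting $\dot g=\p Tg$ via the evolution equation \eqref{ev-g} (so that $\ab{\dot g}_g\lesssim\Abk\Si2+\Abk{N-3}2+\Abk X3$, the last term arising from $\mcr L_Xg$), using $\ab\tau=\ab{\tau_0}e^{-T}\lesssim e^{-T}$ and $\ab\tau^{-1}\lesssim e^{T}$, and absorbing the smallness-small factors coming from $(g,\Si,f)\in\mcr B^{6,5,5}_{\tau,\delta}$ (e.g. $\abg\Si\lesssim\delta$ in $\ab\tau^{-1}\abg\Si\abg X$) together with all fixed powers of $\tau_0$ into the uniform constant $C$, one obtains
\eq{\alg{
a(T)&\le C\big(e^{T}\Abk X3+\Abk\Si2+\Abk{N-3}3+e^{-2T}\Abk{\p TX}2+\Abk{\Gamma^*}2+\Abk{\Gamma^*_*}2\big)=:\tilde a(T),\\
b(T)&\le C\big(e^{-T}\Abk{\p TX}2+e^{T}\Abk{\Gamma^*}2\big)=:\tilde b(T).
}}

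Finally, Grönwall's lemma applied to $y_\varepsilon'\le\tilde a\,y_\varepsilon+\tilde b$ on $[T_0,T]$ gives $y_\varepsilon(T)\le\big(y_\varepsilon(T_0)+\int_{T_0}^T\tilde b(s)\,ds\big)\exp\big(\int_{T_0}^T\tilde a(s)\,ds\big)$, where I used $\tilde a\ge0$ to replace the Duhamel weight $e^{\int_s^T\tilde a}$ by $e^{\int_{T_0}^T\tilde a}$. Letting $\varepsilon\downarrow0$ and taking the supremum over all characteristics with $(X(T_0),P(T_0))\in\supp f(T_0,\cdot,\cdot)$ — which turns $y(T_0)$ into $\mcr G|_{T_0}$ and $y(T)$ into $\mcr G|_{T}$ — yields the asserted estimate.

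Since the argument is essentially a Grönwall estimate, there is no deep obstacle; the points requiring care are (i) the transition from the pointwise inequality to Sobolev norms, in particular recognizing that the apparently top-order $\Abk X3$ and $\Abk{N-3}3$ enter only through $\dot g$ and are harmless here, (ii) keeping exact track of the powers of $\tau$, which are precisely what produce the growing/decaying weights $e^{\pm s}$ and $e^{-2s}$ in the final bound, and (iii) justifying the passage to the supremum, which rests on the characteristic flow being a diffeomorphism preserving $\supp f$, i.e. on the basic well-posedness of the characteristic system under the standing smallness assumptions.
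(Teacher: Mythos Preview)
Your proposal is correct and follows essentially the same approach as the paper: derive a scalar differential inequality for $\sqrt{\mbf G}$ along each characteristic from the preceding lemma, replace pointwise quantities by Sobolev norms via $H^2(M)\hookrightarrow C^0(M)$ and the evolution equation for $g$, apply Gr\"onwall, and take the supremum over characteristics. The paper's proof is more terse, omitting the $\varepsilon$-regularization and the explicit justification of the supremum step, but the argument is identical.
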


\begin{proof}
For any characteristic in the support of $f$ we obtain an inequality of the form
\eq{\alg{
\frac{d}{dT}\sqrt{\mbf G}&\leq C\left(\ab\tau^{-1}\Abk{X}3+\Abk{\Si}2+\Abk{N-3}3+\ab{\tau}^2\Abk{\p TX}2+\Abk{\Gamma^*}2+\Abk{\Gamma^*_*}2\right) \sqrt{\mbf G}\\
&\quad+C(\ab\tau\Abk{\p TX}2+\ab{\tau}^{-1}\Abk{\Gamma^*}2).
}}
Gronwall's lemma implies
\eq{\alg{
\sqrt{\mbf G}\Big|_{T}&\leq\left(\sqrt{\mbf G}\Big|_{T_0}+C\int_{T_0}^T(e^{-s}\Abk{\p TX}2+e^s\Abk{\Gamma^*}2)ds\right)\\
&\qquad\times \exp\Big[C\int_{T_0}^T\Big(e^s\Abk{X}3+\Abk{\Si}2+\Abk{N-3}3\\
&\qquad\qquad\qquad\qquad+e^{-2s}\Abk{\p TX}2+\Abk{\Gamma^*}2+\Abk{\Gamma^*_*}2\Big)ds \Big].
}}
\end{proof}


\section{Energy estimates from the divergence identity}
The key quantity, which provides improved estimates for the energy density is the standard $L^2$-Sobolev energy for the rescaled energy density $\rho$ with respect to the dynamical metric $g$ on $(M,g)$. This energy reads
\eq{
\boldsymbol{\varrho}_\ell (\rho)\equiv \sqrt{\sum_{k\leq\ell}\int_M \abg{\na^k\rho}^2\mu_g}.
}

We derive the energy estimate for $\boldsymbol{\varrho}_\ell$ in the following. We denote
\eq{
\hp 0\equiv \p T + \mathcal{L}_X
}
as this combination of derivatives naturally appears when taking the time-derivative of norms, which are taken w.r.t.~the volume form $\mu_g$ (cf.~below).
A part of the divergence-identity for the energy momentum tensor with $0$-component reads
\eq{\label{sdfsd}
\hp 0 \rho = (3-N)\rho-\tau N^{-1}\na_a(N^2\jmath^a)+N\tau^2(\Si_{ab}+\frac13 g_{ab})T^{ab}
}
in its rescaled form (cf.~ \eqref{tdrhoeta}). Two identities relevant for the energy estimate in rescaled form are
\eq{
\hp 0 g^{ab}=-2N\Si^{ab}+2(1-N/3)g^{ab}.
}
and
\eq{
\p T \int_M u \mu_g= -\int_M (3-N)u\mu_g+ \int_M\widehat{\partial}_0(u)\mu_g 
}
for a function $u$ on $M$ (cf.~\cite{ChMo01}). Moreover,
\eq{\alg{
&[\widehat{\partial}_0, \nabla_{i}]\nabla_{j_1}\hdots\nabla_{j_m}u\\
&=- \sum_{a\leq m}\nabla_{j_1}\hdots\nabla_{j_{a-1}}\nabla_{b}\nabla_{j_{a+1}}\hdots\nabla_{j_m}u\cdot\left[\nabla_{i}(Nk^b_{j_a})+\nabla_{j_a}(Nk_i^b)-\nabla^b(Nk_{j_ai})\right]
}}
for a function $u$. This identity arises from the corresponding unrescaled one by multiplication with $-\tau$.

Next, we derive the standard energy estimate for this energy.

\begin{prop}\label{prop-rho-est}
Let $\ell\geq  4$
\eq{\alg{
\left|\p T\boldsymbol{\varrho}_{\ell}(f)\right|&\lesssim \left(\Abi{3-N}+{\Abi{N\Si}}+{\Abk{\na(Nk)}{\ell-2}}+{\Abk{3-N}{\ell}}\right)\cdot \boldsymbol{\varrho}_{\ell}(f)\\
&\qquad\quad+{\ab{\tau}^2\Abk{N(\Si_{ab}+\frac13g_{ab})T^{ab}}{\ell}+\ab{\tau}\Abk{N^{-1}\div(N^2\jmath)}{\ell}} 
}}
\end{prop}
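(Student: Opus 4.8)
The plan is to differentiate the energy $\boldsymbol\varrho_\ell(f)^2 = \sum_{k\le\ell}\int_M |\nabla^k\rho|_g^2\,\mu_g$ in $T$ and control each resulting term using the continuity equation \eqref{sdfsd} together with the commutator and transport identities collected just above the statement. First I would write $\p T\boldsymbol\varrho_\ell^2 = \sum_k \p T\int_M g^{i_1j_1}\cdots g^{i_kj_k}\nabla_{i_1\cdots i_k}\rho\,\nabla_{j_1\cdots j_k}\rho\,\mu_g$ and distribute the derivative over (i) the volume form, (ii) the inverse metric factors, and (iii) the two copies of $\nabla^k\rho$. Using $\p T\int_M u\,\mu_g = -\int_M(3-N)u\,\mu_g + \int_M\hp0 u\,\mu_g$ it is cleaner to work with $\hp0 = \p T + \mathcal L_X$ throughout, since $\mathcal L_X$ commutes with $\nabla$ up to curvature terms and the Lie derivative of $\mu_g$ is exactly accounted for. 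The term from the volume form and the Lie-derivative bookkeeping produces the $\Abi{3-N}\cdot\boldsymbol\varrho_\ell$ contribution; the term from differentiating the $g^{ij}$'s uses $\hp0 g^{ab} = -2N\Si^{ab} + 2(1-N/3)g^{ab}$, which yields contributions bounded by $(\Abi{N\Si} + \Abi{3-N})\boldsymbol\varrho_\ell$.

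The main work is term (iii). I would commute $\hp0$ past the $k$ covariant derivatives using the commutator identity $[\hp0,\nabla_i]\nabla_{j_1\cdots j_m}u = -\sum_{a\le m}(\nabla\cdots\nabla_b\cdots\nabla) u\cdot[\nabla_i(Nk^b_{j_a}) + \nabla_{j_a}(Nk^b_i) - \nabla^b(Nk_{j_ai})]$, applied iteratively. Each commutator brings down a factor $\nabla(Nk)$ times a $\le k$-th order derivative of $\rho$; summing over the $\le\ell$ derivatives and applying Cauchy--Schwarz, these are controlled by $\Abk{\nabla(Nk)}{\ell-2}\cdot\boldsymbol\varrho_\ell$ (the $\ell-2$ appears because the top-order commutator term can be paired against $\nabla^\ell\rho$ while only $\ell-1$ of the $\nabla(Nk)$-derivatives are differentiated, and a standard Moser/Gagliardo--Nirenberg product estimate with the smallness assumptions lets one put the low-regularity $L^\infty$ factors on $\nabla(Nk)$). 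After commuting all the way in, $\hp0\nabla^k\rho$ is replaced by $\nabla^k\hp0\rho = \nabla^k[(3-N)\rho - \tau N^{-1}\nabla_a(N^2\jmath^a) + N\tau^2(\Si_{ab}+\tfrac13 g_{ab})T^{ab}]$ using \eqref{sdfsd}.

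It then remains to estimate $\int_M \nabla^k\rho\cdot\nabla^k\hp0\rho\,\mu_g$ with this substitution. The $(3-N)\rho$ piece: Leibniz expansion gives $\Abk{3-N}\ell\cdot\boldsymbol\varrho_\ell$ for the top-order term (where all derivatives hit $3-N$, estimated in $H^\ell$ against $\rho$ in $L^\infty$, or the reverse — the $\delta$-smallness makes the low-order factors bounded), plus lower-order terms absorbed into $\Abi{3-N}\boldsymbol\varrho_\ell$. The current term $-\tau N^{-1}\nabla_a(N^2\jmath^a)$ and the $T^{ab}$ term carry explicit powers of $\tau$, so Cauchy--Schwarz directly yields $\ab\tau\Abk{N^{-1}\div(N^2\jmath)}\ell\cdot\boldsymbol\varrho_\ell$ and $\ab\tau^2\Abk{N(\Si_{ab}+\tfrac13 g_{ab})T^{ab}}\ell\cdot\boldsymbol\varrho_\ell$ respectively — here one uses that $N^{-1}$, $N$, and $g$ are uniformly bounded in $H^\ell$ under smallness, so the norms as written absorb the algebra. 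Dividing the resulting differential inequality $\p T\boldsymbol\varrho_\ell^2 \lesssim (\cdots)\boldsymbol\varrho_\ell^2 + (\cdots)\boldsymbol\varrho_\ell$ by $\boldsymbol\varrho_\ell$ gives the stated estimate for $\p T\boldsymbol\varrho_\ell$.

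I expect the main obstacle to be the bookkeeping in the commutator step: tracking which of the $\le\ell$ derivative levels produce the $\nabla(Nk)$ factors at which regularity, and verifying that the worst case really lands at $\Abk{\nabla(Nk)}{\ell-2}$ rather than $\ell-1$, which requires carefully using the elliptic gain for $N$ and distributing derivatives optimally via the product estimates. The explicit-$\tau$ terms, by contrast, are essentially immediate once one notes $N, N^{-1}, g, \Si$ are all uniformly controlled.
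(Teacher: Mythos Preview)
Your proposal is correct and follows essentially the same approach as the paper's proof: differentiate $\boldsymbol\varrho_\ell^2$, use the transport identity $\p T\int u\,\mu_g = -\int(3-N)u\,\mu_g + \int\hp0 u\,\mu_g$ together with $\hp0 g^{ab} = -2N\Si^{ab} + 2(1-N/3)g^{ab}$ for the volume and metric contributions, commute $\hp0$ through the $k$ covariant derivatives picking up the $K^a_{bc}=\nabla_b(Nk^a_c)+\nabla_c(Nk^a_b)-\nabla^a(Nk_{bc})$ terms, and then substitute the continuity equation \eqref{sdfsd} for $\hp0\rho$. Regarding your stated obstacle, the $\ell-2$ count on $K$ comes out directly from the combinatorics of the iterated commutator (at most $i-1\le k-2\le\ell-2$ outer derivatives act on the product containing $K$) and does not require any elliptic gain for $N$; the paper simply asserts the bound $C\boldsymbol\varrho_\ell^2\Abk{K}{\ell-2}$ without further comment.
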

\begin{proof}
We take the time derivative of one of the summands of the square of the energy, which takes the following form.
\eq{\alg{
&\p t \int_M\ab{\nabla^k(\rho)}^2_{g}\mu_{g}=- \int_M(3-N)\ab{\nabla^k(\rho)}^2_{g}\mu_{g}\\
&\quad+2\int_M(1-N/3)\ab{\nabla^k(\rho)}^2_{g}\mu_{g}\\
&\quad+2\sum_{i\leq k}\int_M N \Si_{cd}g^{ca_i}g^{db_i}g^{a_1b_1}\hdots g^{a_kb_k}\nabla_{a_1}\hdots\nabla_{a_k}(\rho) \cdot \nabla_{b_1}\hdots\nabla_{b_k}(\rho) \mu_{g}\\
&\quad+2\underbrace{\int_Mg^{a_1b_1}\hdots g^{a_kb_k}\nabla_{a_1}\hdots\nabla_{a_k}(\rho) \cdot \hp 0\left[\nabla_{b_1}\hdots\nabla_{b_k}(\rho)\right]\mu_{g}}_{\equiv\mathrm{I}}
}}
The first three terms on the right-hand side contribute to the first line of the estimate. We proceed with the evaluation of the term \textrm{I}.\\

Using the commutator formula above we obtain
\eq{\alg{
\mathrm{I}&= \underbrace{\int_Mg^{a_1b_1}\hdots g^{a_kb_k}\nabla_{a_1}\hdots\nabla_{a_k}(\rho) \cdot\nabla_{b_1}\hdots\nabla_{b_k}( \hp 0(\rho))\mu_{g}}_{\equiv\mathrm{II}}\\
&\quad+ \int_Mg^{a_1b_1}\hdots g^{a_kb_k}\nabla_{a_1}\hdots\nabla_{a_k}(\rho)\\
&\qquad \cdot 
\Big[\sum_{i\leq k-1}\sum_{i+1\leq j \leq k}\nabla_{b_1}\hdots \nabla_{b_{i-1}}\Big(\nabla_{b_{i+1}}\hdots \nabla_{b_{j-1}}\nabla_{c}\nabla_{b_{j+1}}\hdots\nabla_{b_{k}}(\rho)\cdot K^c_{b_{j}b_i}\Big)\Big]\mu_{g},
}}
where we use the notation
\eq{\alg{
K^a_{bc}&=\left[\nabla_{b}(Nk^a_{c})+\nabla_{c}(Nk_b^a)-\nabla^a(Nk_{cb})\right].
}}

The second term on the right-hand side can be estimated by terms of the form
\eq{
C \boldsymbol{\varrho}^2_{\ell}(f)\cdot\Abk{K}{\ell-2},
}
which yield the third term in the estimate. We continue with estimating the final term $\mathrm{II}$ using \eqref{sdfsd}.
\eq{\alg{
\ab{\mathrm{II}}&=\Big|\int_Mg^{a_1b_1}\hdots g^{a_kb_k}\nabla_{a_1}\hdots\nabla_{a_k}(\rho)\\ &\qquad\cdot\nabla_{b_1}\hdots\nabla_{b_k}\left( (3-N)\rho-\tau N^{-1}\na_a(N^2\jmath^a)+N\tau^2(\Si_{ab}+\frac13 g_{ab})T^{ab}\right)\mu_{g}\Big|\\
&\leq \boldsymbol{\varrho}_{\ell}(f)\Big( \Abk{3-N}{\ell}\cdot \boldsymbol{\varrho}_{\ell}(f)+\ab{\tau}\Abk{N^{-1}\div(N^2\jmath)}{\ell}\\
&\qquad\qquad\quad\,+\ab{\tau}^2\Abk{N(\Si_{ab}+\frac13g_{ab})T^{ab}}{\ell}\Big)
}}
\end{proof}

\section{Elliptic estimates}\label{sec : ell}
We derive in this section elliptic estimates on the lapse function, the shift vector and their respective time derivatives.
\begin{prop}\label{prop-ell-est}
Under smallness conditions, for the lapse function, a pointwise estimate of the form $0<N\leq 3$ holds and moreover the following two estimates.
\eq{\alg{
\Abk{N-3}\ell&\leq C\left(\Abk{\Si}{\ell-2}^2+\ab\tau\Abk{\rho}{\ell-2}+\tau^3 \Abk{\underline{\eta}}{\ell-2}\right)\\
\Abk{X}\ell&\leq C\left(\Abk{\Si}{\ell-2}^2+\Abk{g-\gamma}{\ell-1}^2+\ab\tau\Abk{\rho}{\ell-3}+\tau^3 \Abk{\underline{\eta}}{\ell-3}+\tau^2 \Abk{N\jmath}{\ell-2}\right)
}}
\end{prop}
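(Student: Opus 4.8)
The plan is to derive these estimates from the elliptic equations \eqref{lapse} and \eqref{shift} by treating them as perturbations of model linear elliptic operators whose invertibility on $(M,\ga)$ is guaranteed by the spectral input of Corollary~\ref{cor-kernel} and standard elliptic theory, exactly as in the vacuum treatment of \cite{AnMo11}. First I would address the lapse. Rewrite \eqref{lapse} as $\bigl(\De_g-\tfrac13\bigr)(N-3)=N\abg{\Si}^2+N\tau\eta+\bigl(\De_g-\tfrac13\bigr)3=N\abg{\Si}^2+N\tau\eta-1$, and then use the Hamiltonian constraint \eqref{Co-Ha}, $R(g)-\abg{\Si}^2+\tfrac23=4\tau\rho$, to replace the isolated constant; more simply, since $N=3$ solves the equation with all right-hand-side terms set to their background values, one expands about $N=3$ and obtains $\bigl(\De_g-\tfrac13\bigr)(N-3)=$ (quadratic-in-$\Si$ terms) $+\,\tau\cdot(\text{matter})$, where the matter piece is $N\tau\eta=N\tau(\rho+\tau^2\underline\eta)$ by \eqref{dec-eta}. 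Because $-\tfrac13$ is not in the spectrum of $\De_g$ (indeed $\De_g-\tfrac13$ is invertible), elliptic regularity on the closed manifold gives $\Abk{N-3}{\ell}\leq C\Abk{\text{RHS}}{\ell-2}$, and the RHS is controlled by $\Abk{\Si}{\ell-2}^2+\ab\tau\Abk{\rho}{\ell-2}+\ab\tau^3\Abk{\underline\eta}{\ell-2}$ after using the smallness assumptions to absorb the factors of $N$ and the higher-order-in-perturbation terms into $C$; the product estimates use that $H^{\ell-2}$ is an algebra for $\ell-2\geq 2$, i.e. $\ell\geq 4$. The pointwise bound $0<N\leq 3$ follows from the maximum principle applied to \eqref{lapse}: at an interior maximum of $N$ one has $\De_g N\leq 0$, forcing $N\bigl(\abg\Si^2+\tau\eta\bigr)\geq \tfrac13 N$ to fail unless $N\leq 3$ up to lower-order corrections, and positivity of $N$ follows similarly (this is the standard CMC lapse argument, and under smallness the matter contribution $\tau\eta$ is a controlled perturbation since $\rho\geq0$ and $\eta\geq0$ for Vlasov matter).

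For the shift, I would view \eqref{shift} as an equation for $X^i$ governed by the operator $X\mapsto \De_g X^i+R^i_{\,m}X^m$. The key structural fact, again from \cite{AnMo11} and ultimately from $\ker\De_E=\{0\}$ together with the absence of Killing fields / the relevant Bochner argument on negative Einstein $3$-manifolds, is that this vector Laplacian is invertible on $(M,g)$ for $g$ near $\ga$, so elliptic estimates give $\Abk{X}{\ell}\leq C\Abk{\text{RHS of }\eqref{shift}}{\ell-2}$. The right-hand side of \eqref{shift} consists of: $2\na_jN\Si^{ji}$, which by the lapse estimate and $H^{\ell-2}$ being an algebra is bounded by $\Abk{N-3}{\ell-1}\Abk{\Si}{\ell-2}\lesssim\bigl(\Abk{\Si}{\ell-3}^2+\ab\tau\Abk{\rho}{\ell-3}+\ab\tau^3\Abk{\underline\eta}{\ell-3}\bigr)\Abk{\Si}{\ell-2}$, hence by smallness absorbed into $C\bigl(\Abk{\Si}{\ell-2}^2+\dots\bigr)$ with the matter terms downgraded appropriately; $\na^i(\tfrac N3-1)$, bounded by $\Abk{N-3}{\ell-1}$ and thus by the lapse estimate, which is where the $\ab\tau\Abk{\rho}{\ell-3}+\ab\tau^3\Abk{\underline\eta}{\ell-3}$ terms enter (note the loss of one derivative on $\rho$ relative to the lapse estimate, consistent with the stated $\ell-3$); $2N\tau^2\jmath^i$, giving $\ab\tau^2\Abk{N\jmath}{\ell-2}$ — wait, the statement has $\ab\tau^2\Abk{N\jmath}{\ell-2}$ which matches $\tau^2\jmath$ up to the harmless $N$ factor; and the last term $(2N\Si^{mn}-\na^mX^n)(\Chr imn-\hChr imn)$, which is quadratic: $\Chr imn-\hChr imn$ is schematically $g^{-1}\na(g-\ga)$, so this term is bounded by $C\bigl(\Abk{\Si}{\ell-2}+\Abk{X}{\ell-1}\bigr)\Abk{g-\ga}{\ell-1}$; the $\Abk{X}{\ell-1}\Abk{g-\ga}{\ell-1}$ piece is dangerous at first sight since $X$ appears on both sides, but under smallness $\Abk{g-\ga}{\ell-1}$ is a small factor, so this contribution is absorbed into the left-hand side $\Abk{X}{\ell}$ via a standard rearrangement, leaving $C\Abk{g-\ga}{\ell-1}^2$ (using $\Abk{\Si}{\ell-2}\lesssim\Abk{\Si}{\ell-2}$ and, where a pure $g-\ga$ factor is needed, the constraint-type bound or simply the smallness ball definition which ties $\Abk{g-\ga}6$ to $\Abk{\Si}5$).

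Assembling: $\Abk{X}{\ell}\leq C\bigl(\Abk{\Si}{\ell-2}^2+\Abk{g-\gamma}{\ell-1}^2+\ab\tau\Abk{\rho}{\ell-3}+\ab\tau^3\Abk{\underline\eta}{\ell-3}+\ab\tau^2\Abk{N\jmath}{\ell-2}\bigr)$, as claimed, with all absorptions justified by the smallness conditions (which make every perturbative factor small and every $N$ comparable to $3$) and by the Sobolev algebra property for the relevant orders $\geq 2$, i.e. $\ell\geq 4$. The main obstacle I anticipate is bookkeeping the derivative counts and the $\tau$-powers so that they come out exactly as stated — in particular making sure the shift estimate genuinely only needs $\rho$ and $\underline\eta$ at order $\ell-3$ (one order below the lapse), which relies on the fact that $X$ enters \eqref{shift} through $\na N$ and $\na(\tfrac N3-1)$ at one derivative, so elliptic gain of two derivatives in the shift equation combined with two-derivative gain in the lapse equation nets the one-order discount — and making sure the quadratic shift self-interaction term is genuinely absorbable, which it is precisely because $\ker\bigl(\De_g+\Ric\bigr)=\{0\}$ gives a uniform (in the smallness ball) elliptic constant. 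A secondary technical point is that all these estimates are stated for $g$ close to $\ga$, so one must first note that the relevant elliptic operators on $g$ are small perturbations of those on $\ga$ and hence invertible with uniformly bounded inverse — this is exactly the role played by Corollary~\ref{cor-kernel} and the discussion around it, and is handled identically to \cite{AnMo11}.
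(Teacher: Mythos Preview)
Your approach is essentially the same as the paper's: maximum principle for the pointwise lapse bound, and standard elliptic regularity applied to the lapse and shift equations for the Sobolev estimates. The paper's own proof is two sentences stating exactly this, so you have supplied a detailed execution of what the authors leave implicit.

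One small point: you attribute the invertibility of the shift operator $X\mapsto \De_g X+\Ric(X,\cdot)$ to Corollary~\ref{cor-kernel} ($\ker\De_E=\{0\}$), but that corollary concerns the Einstein operator on symmetric $2$-tensors, not vector fields. The invertibility you need here is more elementary: on the background, $\Ric[\ga]=-\tfrac{2}{9}\ga$, so the operator is $\De_\ga-\tfrac{2}{9}$ acting on vector fields, which is negative definite and hence invertible; this persists for $g$ close to $\ga$ by perturbation. This is how \cite{AnMo11} handles it as well, and it is what the paper has in mind by ``straightforward elliptic regularity.'' Otherwise your bookkeeping of derivative orders and $\tau$-powers is correct, including the observation that the $\rho,\underline\eta$ terms enter the shift estimate at order $\ell-3$ because they come through $\nabla N$.
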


\begin{proof}
The pointwise estimate for the lapse follows from the lapse equation and the maximum principle. The two following estimates are a straightforward consequence from elliptic regularity applied to the elliptic system for lapse and shift.
\end{proof}

Furthermore we require estimates for the time derivatives of of the lapse function and shift vector. These are given in the following lemma.

\begin{lem}
The following estimates hold under smallness conditions, for $T$ sufficiently large and  $\ell\geq 4$.
\eq{\alg{
\Abk{\p TN}\ell&\leq C\Big[\Abk{\widehat N}\ell+\Abk{X}{\ell+1}+\Abk{\Si}{\ell-1}^2+\Abk{g-\gamma}\ell^2+\tau\Abk{S}{\ell-2}\\
&\qquad\quad+\ab{\tau}\Abk{\rho}{\ell-1}+\ab\tau^3\Abk{\underline\eta}{\ell-2} +\ab\tau^2\Abk{\jmath}{\ell-1}+\ab\tau^3\Abk{\underline T}{\ell-1}\\
&\qquad\quad+\ab\tau^3\ev {\ell-1}{4}(f)\Big]\\
\Abk{\p TX}\ell&\leq C\Big[\Abk{X}{\ell+1}+\Abk{\Si}{\ell-1}^2+\Abk{g-\gamma}{\ell}^2+\Abk{\widehat N}{\ell}\\
&\qquad\quad+\ab{\tau}\Abk{\rho}{\ell-1}+\ab\tau^3\Abk{\underline\eta}{\ell-2}+\ab\tau^2\Abk{\jmath}{\ell-1}+\ab\tau^3\Abk{\underline T}{\ell-1}\\
&\qquad\quad+\ab\tau\Abk{S}{\ell-2}+\ab\tau^3\ev {\ell-1}{4}(f)\Big]
}}
\end{lem}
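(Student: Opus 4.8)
The estimates for $\p TN$ and $\p TX$ should be obtained by differentiating the elliptic system for lapse and shift in the logarithmic time $T$, commuting $\p T$ through the elliptic operators, and then applying elliptic regularity exactly as in \pref{prop-ell-est}. The key point is that $\p TN$ solves an elliptic equation of the same principal type as $N-3$, with a right-hand side built from $\p T$ of the right-hand side of \eqref{lapse} plus commutator terms coming from $\p T(\De - \tfrac13)$; and similarly $\p TX$ solves an elliptic equation of the same type as $X$ with right-hand side obtained by differentiating \eqref{shift}. So the scheme is: (i) write $(\De - \tfrac13)\p TN = \p T\big[N(|\Si|^2 + \tau\eta)\big] + [\p T, \De]N$; (ii) replace every occurrence of $\p T g$, $\p T\Si$, $\p T g^{ab}$ appearing after differentiation by the evolution equations \eqref{ev-g}, \eqref{ev-k} and the identity $\hp 0 g^{ab} = -2N\Si^{ab} + 2(1-N/3)g^{ab}$, so that no explicit $T$-derivatives of the geometry remain except $\p TN$ and $\p TX$ themselves; (iii) do the same for the shift equation; (iv) invoke elliptic regularity (coercivity of $\De - \tfrac13$, and of $\De + \Ric$ on vector fields, guaranteed by \cref{cor-kernel} and the smallness assumptions which keep the operators close to the $\gamma$-operators) together with the bound $0 < N \leq 3$ and \pref{prop-ell-est} to bound $\p TN$, $\p TX$ in $H^\ell$ by $H^{\ell-2}$ of the new right-hand sides.

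\textbf{Carrying it out.} First I would differentiate \eqref{lapse}. The commutator $[\p T,\De]N$ produces terms schematically of the form $(\p T g)\cdot \na^2 N + (\na \p T g)\cdot \na N$, i.e.\ after substituting \eqref{ev-g}, terms like $N\Si\,\na^2N$, $(\tfrac N3-1)\na^2N$, $(\mcr L_X g)\na^2 N$ and their first derivatives; all of these are controlled, using $0<N\leq3$ and the already-established bound $\Abk{N-3}{\ell}\lesssim \Abk{\Si}{\ell-2}^2 + |\tau|\Abk{\rho}{\ell-2} + \tau^3\Abk{\underline\eta}{\ell-2}$ from \pref{prop-ell-est}, by $\Abk{X}{\ell+1} + \Abk{\Si}{\ell-1}^2 + \Abk{g-\gamma}{\ell}^2$ plus the matter terms. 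On the right-hand side, $\p T(N|\Si|^2)$ gives $\dot N|\Si|^2 + 2N\Si\cdot\p T\Si$; the $\dot N|\Si|^2$ term is absorbed on the left by smallness (it carries an extra $\|\Si\|^2$ factor), while $\p T\Si$ is replaced using \eqref{ev-k}, producing $\na^2N$ again (absorbed by smallness and elliptic regularity), curvature terms $\Abk{g-\gamma}{\ell}^2$, and the $(\star)$ term $N\tau S_{ab}$ which yields $\tau\Abk S{\ell-2}$. The term $\p T(N\tau\eta)$ is the delicate one: using \eqref{dec-eta}, $N\tau\eta = N\tau\rho + N\tau^3\underline\eta$, so $\p T(N\tau\rho)$ contributes $\dot N\tau\rho + N\tau\rho + N\tau\,\p T\rho$; the first two are $\lesssim |\tau|\Abk{\rho}{\ell-1}$, and for $\p T\rho$ one uses the continuity equation \eqref{ev-eq-intro}/\eqref{sdfsd}, whose right-hand side involves $\rho$, $\na(N^2\jmath)$ with a $\tau$, and $\tau^2 T^{ab}$ — giving exactly the terms $|\tau|\Abk{\rho}{\ell-1}$, $|\tau|^2\Abk{\jmath}{\ell-1}$ (after the elliptic gain of one derivative absorbs the $\na$), $|\tau|^3\Abk{\underline T}{\ell-1}$. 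Similarly $\p T(N\tau^3\underline\eta)$ gives $|\tau|^3\Abk{\underline\eta}{\ell-2}$ and, when $\p T$ hits $\underline\eta(f) = \Ip{f|p+\tau^{-1}p^0X|^2/\underline p\,\sg}$ via the transport equation, a term controlled by $|\tau|^3\ev{\ell-1}4(f)$ using \lref{lem-T-est}. For the shift, differentiating \eqref{shift}, the commutator $[\p T,\De + \Ric]$ on $X$ produces $\Abk X{\ell+1}$-type terms after \eqref{ev-g}, while $\p T$ of the right-hand side of \eqref{shift} produces $\na(\dot N\Si)$, $\na(N\p T\Si)$, $\na^2\dot N$ (handled via the lapse estimate just proven and absorbed by elliptic regularity), $\na\dot N$, the matter term $\p T(N\tau^2\jmath)$ giving $|\tau|^2\Abk\jmath{\ell-1}$ together with the contributions from the continuity-type equation for $\jmath$, and the connection-difference term whose $\p T$ yields $\Abk{g-\gamma}{\ell}^2$. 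Collecting everything and reading off powers of $|\tau|$ and orders of regularity reproduces the two displayed estimates, with the degree of regularity on the geometric quantities being one higher than for $N-3$, $X$ in \pref{prop-ell-est} because of the extra $\p T$.

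\textbf{Main obstacle.} The main subtlety is the treatment of $\p T\rho$ and $\p T\underline\eta$ on the right-hand side of the differentiated lapse and shift equations: one must \emph{not} estimate $\p T\rho$ by differentiating $\rho(f)$ and using the transport equation (which would cost the top-order $L^2$-Sobolev energy of $f$ at one order too high and would reintroduce the slow-decay problem described in the introduction), but rather use the continuity equation \eqref{sdfsd} so that $\p T\rho$ stays at regularity $\ell-1$ and the dangerous terms come with extra $\tau$-factors. Getting the bookkeeping of regularity right — that elliptic regularity for $\De-\tfrac13$ and $\De+\Ric$ gains exactly the two derivatives needed, that the continuity equation is used at order $\ell$ inside $\|\cdot\|_{H^{\ell-2}}$ of the RHS of \eqref{lapse} hence at order $\ell-2$ in the final output wait, at order $\ell-1$ after the elliptic gain on the $\na(N^2\jmath)$ piece — and that every substituted $\p T$(geometry) term either closes by smallness or feeds into an already-available estimate, is where the real work lies; the individual pointwise manipulations are otherwise routine.
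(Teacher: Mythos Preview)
Your approach is essentially the paper's: differentiate the elliptic system \eqref{lapse}--\eqref{shift} in $T$, substitute the evolution equations for $\p T g$ and $\p T\Si$, use the continuity equation \eqref{tdrhoeta} for $\p T\rho$, and apply elliptic regularity. The identification of the $\p T\rho$ issue in your ``Main obstacle'' paragraph is exactly right.

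There is, however, one genuine gap. You assert that $\p T(N\tau^3\underline\eta)$ is controlled by $\ab\tau^3\Abk{\underline\eta}{\ell-2}$ plus $\ab\tau^3\ev{\ell-1}4(f)$, and treat the lapse estimate as ``proven first'' before turning to the shift. But $\underline\eta$ depends explicitly on $X$ and on $N$ (through $p^0$ and $\underline p$ in the integrand $\abg{p+\tau^{-1}p^0X}^2/\underline p$), and $\p T f$ from the transport equation carries the term $\tau\,\p T X^a$. Hence $\p T\underline\eta$ contains both $\p T N$ and, more dangerously, $\ab\tau^{-1}\p T X$ on the right-hand side (cf.\ the paper's formula \eqref{sdfjdsaa}). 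Likewise the shift equation, once differentiated, contains $\p T N$ at order $\ell-1$ and $\p T\jmath$; the latter, via the second equation in \eqref{tdrhoeta}, brings in $\ab\tau^{-1}\rho\,\na N$ but no further time derivatives, while the former couples back to the lapse estimate. So the two preliminary estimates are \emph{coupled}: $\Abk{\p T N}\ell$ depends on $\Abk{\p T X}{\ell-2}$ (through $\ab\tau^3\p T\underline\eta$) and $\Abk{\p T X}\ell$ depends on $\Abk{\p T N}{\ell-1}$.

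The paper resolves this by a two-step absorption: after writing the preliminary estimate for $\p T X$ (which contains $\p T N$), one substitutes it into the $\p T\underline\eta$ bound and then into the preliminary $\p T N$ estimate; every surviving $\p T N$ on the right now carries a prefactor $\ab\tau^3$ (or $\ab\tau^2$), and \emph{this is where the hypothesis ``$T$ sufficiently large'' is used} to absorb those terms into the constant. Only then is the $\p T N$ estimate closed, independent of $\p T X$, and can be fed back to finish the $\p T X$ estimate. Your write-up skips this decoupling step; without it the argument is circular.
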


\begin{proof}
Both estimates follow from standard elliptic regularity estimates and the elliptic system for $(\p TN,\p TX)$, which is deduced from the elliptic system for $(N,X)$ by taking the derivative with respect to $\p T$. This system reads
\eq{\label{tdlapse}\alg{
\/\left(\Delta-\frac13\right)\p TN&= 2N\langle\nabla\nabla N,\Si\rangle-2\widehat N\Delta N+\langle\nabla\nabla N,\mcl L_Xg\rangle\\
&\quad+\left(2\nabla^k(N\Si_k^i)+\nabla^i(\widehat N)-\frac12\Delta X^i-\frac12\nabla^k\nabla^iX_k\right)\nabla_i N\\
&\quad+2N\Big(-2N\abg{\Si}^3+2\widehat N\abg{\Si}^2-2\langle\nabla X,\Si,\Si\rangle-2\abg{\Si}^2\\
&\qquad\qquad\,\,-N\langle\Si,\frac12\mcl L_{g,\ga}(g-\gamma)+J\rangle+\langle\Si,\nabla\nabla N\rangle+2N\abg\Si^3-\widehat N\abg\Si^2\\
&\qquad\qquad\,\,-2\langle\Si,\mcl L_Xg\rangle+8\pi\ab\tau\langle\Si,S\rangle\Big)\\
&\quad+N\Big(\p T(\ab{\tau}\rho)+\p T(\ab{\tau}^3\underline\eta)\Big)+\left(\abg{\Si}^2+\ab{\tau}\rho+\ab{\tau}^3\underline{\eta}\right)\p TN
}}

\eq{\alg{
\Delta (\p T X^i)+R_m^i(\p T X^m)&=-(\p TR^i_m)X^m-[\p T,\Delta]X^i\\
&\quad+2\nabla_j(\p T N)\Si^{ij}+2\nabla_j N(\p T\Si^{ij})-(\p Tg^{ik})\nabla_k\widehat N-\frac13g^{ik}\na_k(\p TN)\\
&\quad+2(\p TN)\ab\tau^{2}\jmath^b+2N\p T(\ab\tau^{2}\jmath^b)\\
&\quad- 2(\p TN)\Si^{mn}(\Gamma_{mn}^i-\widehat{\Gamma}_{mn}^i)- 2N(\p T\Si^{mn})(\Gamma_{mn}^i-\widehat{\Gamma}_{mn}^i)\\
&\quad- 2N\Si^{mn}\p T\Gamma_{mn}^i+ (\p Tg^{mk}g^{nl})\nabla_kX_l(\Gamma_{mn}^i-\widehat{\Gamma}_{mn}^i)\\
&\quad+ \nabla^m(\p TX^n)(\Gamma_{mn}^i-\widehat{\Gamma}_{mn}^i)
+ \nabla^mX^n\p T\Gamma_{mn}^i.
}}
Here we use $\langle.,.,.\rangle$ to denote any suitable contraction of a number of tensor fields, where the specific structure of indices does not matter. Due to the time derivative of $\underline{\eta}$ and the terms containing $\p TN$ explicitly in the equation for $\p T X$ we do the estimates in two steps. Note furthermore, that we do not aim at the sharpest possible estimates and allow rather rough but brief expressions where we absorb many terms into the constants.\\

From elliptic regularity and equation \eqref{tdlapse} we obtain
\eq{\label{lasd}\alg{
\Abk {\p T N}\ell&\leq C\Big[\Abk{\widehat N}\ell+\Abk{X}\ell+\Abk{\Si}{\ell-1}^2+\Abk{g-\gamma}\ell^2+\tau\Abk{S}{\ell-2}\\
&\qquad\quad+\ab{\tau}\Abk{\rho}{\ell-2}+\ab\tau^3\Abk{\underline\eta}{\ell-2}\\
&\qquad\quad +\ab\tau \Abk{\p T\rho}{\ell-2}+\ab\tau^3\Abk{\p T\underline\eta}{\ell-2}\\
&\qquad\quad+\left(\Abk{\Si}{\ell-2}^2+\ab\tau\Abk{\rho}{\ell-2}+\ab\tau^3\Abk{\underline\eta}{\ell-2}\right)\Abk{\p TN}{\ell-2}\Big]
}}
Using the smallness we can absorb the last line of the previous equation into the left-hand side and obtain a formally identical estimate where the last line is not present. The term including the time derivative of $\rho$ is treated using the evolution equation \eqref{tdrhoeta}. This yields
\eq{
\ab{\tau}\Abk{\p T\rho}{\ell-2}\leq C\ab\tau\left(\Abk{\rho}{\ell-1}+\ab\tau\Abk{\jmath}{\ell-1}+\ab\tau^2\Abk{\underline T}{\ell-2}\right).
}
Now, we estimate the remaining term using the corresponding formula \eqref{sdfjdsaa}. Invoking the smallness assumption and the fact that when taking derivatives of the explicit function of the momentum only yields terms with an additional smallness factor,  then reduces the number of relevant terms to an estimate of the following form.
\eq{\label{eraguiu}\alg{
\ab\tau^3\Abk{\p T\underline\eta}{\ell-2}&\leq \ab\tau^3C\Big[\ab\tau\ev {\ell-1}{\mu+3}(f)+\ev{\ell-2}{\mu+3}(f)\\
&\quad\,\qquad\quad+\Abk{\p TN}{\ell-2}\ab\cdot\ev{\ell-2}{\mu+3}(f)\\
&\quad\,\qquad\quad+\Abk{\p T X}{\ell-2}\ab{\tau}^{-1}\cdot\ev{\ell-2}{\mu+3}(f)\Big]
}}
The term in the second line can be absorbed in the constant in estimate \eqref{lasd} by the largeness of $T$. Before concluding the estimate for $\p TN$ we require the estimate for $\p TX$ to replace the corresponding terms in the previous estimate. We therefore turn to the equation for $\p TX$ and apply elliptic regularity which yields the following first estimate, where we again absorb several terms in the constant due to the smallness criterion.
\eq{\label{eragui}\alg{
\Abk{\p TX}\ell&\leq C\Big[\Abk{X}{\ell+1}+\Abk{\p TN}{\ell-1}+\Abk{\widehat N}{\ell-2}
+\ab\tau^2\Abk{\jmath}{\ell-2}+\ab\tau^2\Abk{\p T\jmath}{\ell-2}\\
&\qquad\quad+\Abk{g-\gamma}{\ell-1}\Abk{\dot \Si}{\ell-2}+\Abk{\Si}{\ell-2}\Abk{\dot \Gamma}{\ell-2}+\Abk{X}{\ell-1}\Abk{g-\gamma}{\ell-1}\\
&\qquad\quad+\Abk{X}{\ell-1}\Abk{\dot\Gamma}{\ell-2}+\Abk{\p TX}{\ell-1}\Abk{g-\gamma}{\ell-1}\Big]
}}
The last term on the right-hand side can be absorbed into the constant by the smallness assumption. The term containing the time derivative of $\jmath$ can be estimated using \eqref{tdrhoeta} by
\eq{\alg{
\ab\tau^2\Abk{\jmath}{\ell-2}&\leq \ab\tau^2C\Big[\Abk{\jmath}{\ell-1}+\ab\tau\Abk{\underline T}{\ell-1}+\ab{
\tau}^{-1}\Abk{\rho}{\ell-2}\Abk{\widehat N}{\ell-1}\Big].
}}
At this point the estimate for $\p TX$ is not complete, since there are still $\p TN$ terms on the right-hand side. We return to the estimate for $\p TN$ and absorb the corresponding terms in the estimate and then finish the estimate for $\p TX$.\\
Plugging \eqref{eragui} without the last term on the right-hand side into \eqref{eraguiu} and the resulting estimate into \eqref{lasd}, without the last line on the right-hand side, we observe that every $\p TN$ term on the right-hand side comes with a $\ab{\tau}^{3}$ 
and consequently can be absorbed into the constant. This proves the estimate for $\p TN$, which in particular is independent of $\p TX$. Then, in turn, plugging the final estimate for $\p TN$ into the estimate for $\p TX$ and simplifying the estimates with respect to the smallness criteria finishes the proof.
\end{proof}


\section{Energy estimate -- geometry}
\subsection{Decomposing the evolution equations}
We decompose the evolution equations into their principle parts and higher order terms, which are eventually treated as bulk terms.\\
\\
The evolution equations can be rewritten to the following system.
\eq{\alg{
\p T(g-\ga)&=2N\Si + \mcl F_{g-\gamma}\\
\p T6\Si&=-2\cdot 6\Si-9\frac N3\mcl L_{g,\ga}(g-\gamma) + 6 N\ab{\tau} S -X^i\nabla[\gamma]_i6\Si
+\mcl F_{\Si},
}}
where the bulk terms obey estimates of the form
\eq{\alg{
\Abk{\mcl F_{g-\gamma}}s&\leq C\left( \Abk{N-3}s+\Abk{X}{s+1}\right)\\
\Abk{\mcl F_\Si}{s-1}&\leq C\left(\Abk{g-\gamma}s^2+\Abk{N-3}{s+1}+\Abk{\Si}{s-1}^2+\Abk{X}{s}\right)
}}
under the assumption that $\Abk{\Si}{s-1}^2+\Abk{g-\ga}s^2<\varepsilon$ for $\varepsilon$ sufficiently small. Using the elliptic estimates for lapse and shift we obtain the following estimates for the bulk terms.
\eq{\alg{
\Abk{\mcl F_{g-\gamma}}s&\leq C\left( \Abk{\Si}{s-1}^2+\ab{\tau}\Abk{\rho}{s-2}+\ab{\tau}^3\Abk{\underline\eta}{s-2}+   \Abk{g-\gamma}{s}^2 +\ab{\tau}^2\Abk{N\jmath}{s-1}\right)\\
\Abk{\mcl F_\Si}{s-1}&\leq C\left(\Abk{g-\gamma}s^2+\ab{\tau}\Abk{\rho}{s-1}+\ab{\tau}^3\Abk{\underline\eta}{s-1}+\Abk{\Si}{s-1}^2+\ab{\tau}^2\Abk{N\jmath}{s-2}\right)
}}

\subsection{Energy}
We define the energy for the tracefree part of the second fundamental form and the metric perturbation below. The choice is identical to the vacuum case considered in \cite{AnMo11} and we briefly recall the relevant aspects and point out the improvements in 3+1 dimensions compared to the higher dimensional case. The definition of the energies, which include a correction factor to obtain a suitable decay estimate, depends on the lowest eigenvalue of the Einstein operator corresponding to the specific Einstein metric, $\la_0$. Due to the lower bound \eqref{ev-est} we only distinguish between two cases here. We define the correction constant $\al=\al(\la_0,\delta_\alpha)$ by
\eq{
\al=
\begin{cases}
1& \la_0>1/9\\
1-\delta_\alpha& \la_0=1/9,
\end{cases}
}
where $\delta_\alpha=\sqrt{1-9(\la_0-\varepsilon')}$ with $1>>\varepsilon'>0$ remains a variable to be determined in the course of the argument to follow. By fixing $\varepsilon'$ once and for all, $\delta_\alpha$ can be made suitable small when necessary.\\
The corresponding correction constant, relevant for defining the corrected energies is defined by
\eq{
c_E=\begin{cases}
1& \la_0>1/9\\
9(\la_0-\varepsilon')& \la_0=1/9.
\end{cases}
}
We are now ready to define the energy for the geometric perturbation. For $m\geq 1$ let
\eq{\alg{
\mathcal{E}_{(m)}&=\frac12\int_M\langle 6\Si,\mcl L_{g,\ga}^{m-1} 6\Si\rangle\mu_g+\frac92\int_M \langle (g-\gamma),\mcl L_{g,\ga}^{m}(g-\gamma)\rangle\mu_g\\
\Gamma_{(m)}&=\int_M \langle 6\Si,\mcl L_{g,\ga}^{m-1}(g-\gamma)\rangle\mu_g.
}}

Then, the corrected energy for the geometric perturbation is defined by
\eq{
E_s=\sum_{1\leq m\leq s} \mcl E_{(m)}+c_E\Gamma_{(m)}.
}
Under the imposed conditions, the energy is coercive.

\begin{lem}
There exists a $\delta>0$ and a constant $C>0$ such that for $\delta$-small data $(g,\Si,f)$ the inequality
\eq{
\Abk{g-\gamma}6^2+\Abk{\Si}5^2\leq C E_s(g,\Si)
}
holds.
\end{lem}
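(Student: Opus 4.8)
The plan is to deduce coercivity of $E_s$ from the spectral positivity of the Einstein operator. Recall from the spatial harmonic gauge decomposition $R_{ab}+\tfrac29 g_{ab}=\tfrac12\mcl L_{g,\ga}(g-\gamma)_{ab}+J_{ab}$ that $\mcl L_{g,\ga}$ is, exactly as in \cite{AnMo11}, a second order elliptic operator, formally self-adjoint for the $L^2$-pairing built from $g$ and $\mu_g$, which at $g=\gamma$ reduces to the Einstein operator, $\mcl L_{\ga,\ga}=\Delta_E$. By \eqref{ev-est} the spectrum of $\Delta_E$ lies in $[\la_0,\infty)$ with $\la_0\ge1/9>0$, and by \cref{cor-kernel} its kernel is trivial, so $\Delta_E$ is positive and bounded below by $\la_0$. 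Hence, for each integer $m\ge1$ and each symmetric $2$-tensor $h$ on $M$, the spectral theorem (using that $\lambda^m\sim(1+\lambda)^m$ on $[\la_0,\infty)$) together with G{\aa}rding's inequality for the elliptic operator $(1+\Delta_E)^m$ gives
\eq{
\int_M\langle h,\Delta_E^{\,m}h\rangle\,\mu_\gamma\ \sim\ \Abk{h}{m}^2 ,
}
and, as quadratic forms on symmetric $2$-tensors, $0\le\Delta_E^{\,m-1}\le\la_0^{-1}\Delta_E^{\,m}$.

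First I would transfer these facts to the dynamical metric. For $\delta$-small $(g,\Si,f)\in\mcr B^{6,5,5}_{\delta,\tau}$ one has $\Abk{g-\gamma}6\le C\delta$ (the weight $\sqrt{|\tau|}$ in the definition of the ball being bounded by $\sqrt{|\tau_0|}$ along the evolution), so $g$, $g^{-1}$, $\mu_g$ and the $g$-Sobolev norms are uniformly comparable to their $\gamma$-counterparts, while $\mcl L_{g,\ga}-\Delta_E$ has coefficients small in $H^4$. Standard elliptic perturbation theory — as carried out in \cite{AnMo11} — then shows that $\mcl L_{g,\ga}$ stays positive and (formally) self-adjoint with bottom of spectrum $\la_0'$ obeying $|\la_0'-\la_0|\le C\delta$, that
\eq{
\int_M\langle h,\mcl L_{g,\ga}^{\,m}h\rangle\,\mu_g\ \sim\ \Abk{h}{m}^2\qquad(1\le m\le s)
}
with constants independent of the data, and that $0\le\mcl L_{g,\ga}^{\,m-1}\le(\la_0')^{-1}\mcl L_{g,\ga}^{\,m}$ as quadratic forms.

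Now set $a_m^2:=\int_M\langle6\Si,\mcl L_{g,\ga}^{\,m-1}6\Si\rangle\,\mu_g$ and $b_m^2:=\int_M\langle g-\gamma,\mcl L_{g,\ga}^{\,m}(g-\gamma)\rangle\,\mu_g$, so $\mcl E_{(m)}=\tfrac12 a_m^2+\tfrac92 b_m^2$. Cauchy--Schwarz for the nonnegative bilinear form $\langle\cdot,\mcl L_{g,\ga}^{\,m-1}\cdot\rangle$, followed by $\mcl L_{g,\ga}^{\,m-1}\le(\la_0')^{-1}\mcl L_{g,\ga}^{\,m}$, gives $|\Gamma_{(m)}|\le(\la_0')^{-1/2}a_m b_m$, whence
\eq{
\mcl E_{(m)}+c_E\,\Gamma_{(m)}\ \ge\ \tfrac12 a_m^2+\tfrac92 b_m^2-c_E(\la_0')^{-1/2}a_m b_m .
}
The scalar quadratic form on the right is positive definite precisely when $c_E^2<9\la_0'$. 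In the case $\la_0>1/9$ one has $c_E=1$ and $9\la_0>1$, and in the case $\la_0=1/9$ one has $c_E=9(\la_0-\varepsilon')=1-9\varepsilon'$ and $9\la_0=1>(1-9\varepsilon')^2$, so the strict inequality $c_E^2<9\la_0$ holds in both cases; since $\la_0'\to\la_0$ as $\delta\to0$, shrinking $\delta$ yields $c_E^2<9\la_0'$ as well. Therefore $\mcl E_{(m)}+c_E\Gamma_{(m)}\ge c\,(a_m^2+b_m^2)\ge0$ for every $m$, and summing over $1\le m\le s$ while keeping only the top term $m=s$,
\eq{
E_s\ \ge\ \mcl E_{(s)}+c_E\Gamma_{(s)}\ \ge\ c\,(a_s^2+b_s^2)\ \gtrsim\ \Abk{\Si}{s-1}^2+\Abk{g-\gamma}{s}^2 ,
}
which is the claim with $s=6$.

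The only point requiring care is the uniformity of the perturbation step for the higher powers: $\mcl L_{g,\ga}$ already carries two derivatives of $g$, so a literal expansion of $\mcl L_{g,\ga}^{\,m}$ for $m$ up to $6$ would demand more regularity of $g$ than the $H^6$ furnished by $\delta$-smallness. As in \cite{AnMo11} this is resolved by integrating by parts in $\int_M\langle h,\mcl L_{g,\ga}^{\,m}h\rangle\mu_g$ so that at most $\lceil m/2\rceil$ factors of $\mcl L_{g,\ga}$ act, the derivatives are distributed evenly onto the two copies of $h$, and the residual $g$-dependence enters only through curvature quantities controlled in $H^4$; one then checks that the resulting bilinear forms differ from the $\Delta_E$-forms by $O(\Abk{g-\gamma}6)$. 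Everything else is elementary linear algebra together with the spectral calculus for $\Delta_E$.
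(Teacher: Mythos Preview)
Your argument is correct and follows the same route the paper indicates: it is precisely the coercivity argument of Lemma~7.2 in \cite{AnMo11}, with the simplification that $\ker\Delta_E=\{0\}$ (Corollary~\ref{cor-kernel}) removes the need for the projection onto the orthogonal complement of the kernel used there. Your explicit verification that $c_E^2<9\la_0$ in both cases and the Cauchy--Schwarz bound $|\Gamma_{(m)}|\le(\la_0')^{-1/2}a_mb_m$ are exactly the ingredients behind the positivity of $E_s$, and your closing remark on integrating by parts to keep the $g$-regularity requirements within $H^6$ is the standard caveat from \cite{AnMo11}.
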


\begin{proof}
The proof is analogous to the corresponding Lemma 7.2 in \cite{AnMo11}. The difference consists in the fact that in the 3+1 dimensional setting here, the kernel of the Einstein operator consists only of the zero-tensor (cf.~Corollary \ref{cor-kernel}). This implies that the projection operator necessary in Lemma 7.2 \cite{AnMo11} is not necessary in the present case.
\end{proof}

The energy estimate for the corrected energy is given in the following.
\begin{lem}\label{lem-geom-en-est}
Under a smallness assumption on $E_s$ we have
\eq{\label{en-est-geom}\alg{
\p T E_s &\leq -2\al E_s+6 E_s^{1/2}\ab{\tau}\Abk{NS}{s-1}\\
&\quad+ CE_s^{3/2} +C E_s^{1/2}\left(\ab{\tau}\Abk{\rho}{s-1}+\ab{\tau}^3\Abk{\underline\eta}{s-1}+\ab{\tau}^2\Abk{N\jmath}{s-2}\right)
}}
\end{lem}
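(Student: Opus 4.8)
The plan is to differentiate the corrected energy $E_s = \sum_{1\le m\le s}\mcl E_{(m)} + c_E \Gamma_{(m)}$ term by term along the flow, exactly as in the vacuum analysis of \cite{AnMo11}, and then track the extra contributions coming from the matter source terms $N\tau S$ appearing in the evolution equation for $6\Si$. First I would compute $\p T \mcl E_{(m)}$ and $\p T \Gamma_{(m)}$ using the two key facts collected above: the identity $\p T\int_M u\,\mu_g = -\int_M(3-N)u\,\mu_g + \int_M \hp 0(u)\,\mu_g$ for the variation of the volume form, and the evolution equation $\hp 0 g^{ab} = -2N\Si^{ab} + 2(1-N/3)g^{ab}$ for the inverse metric, together with the decomposed evolution equations
\eq{\alg{
\p T(g-\ga)&=2N\Si + \mcl F_{g-\gamma}\\
\p T 6\Si&=-2\cdot 6\Si - 9\tfrac N3\mcl L_{g,\ga}(g-\gamma) + 6N\ab\tau S - X^i\nabla[\gamma]_i 6\Si + \mcl F_{\Si}.
}}
The principal-part bookkeeping — the terms marked $(\star\star)$ — is identical to the vacuum case: the cross terms between $\mcl E_{(m)}$ and $c_E\Gamma_{(m)}$ are arranged so that the would-be indefinite term $\langle 6\Si, \mcl L_{g,\ga}^m(g-\gamma)\rangle$ cancels, and the eigenvalue bound $\la_0\ge 1/9$ from \eqref{ev-est} is what converts the remaining quadratic form into the coercive decay $-2\al E_s$, with the constants $\al$ and $c_E$ chosen precisely to make this work in the borderline case $\la_0 = 1/9$. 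Here one uses that $\mcl L_{g,\ga}$ is, up to lower-order terms controlled by $\Abk{g-\gamma}{s}$, a perturbation of the Einstein operator $\Delta_E$, so its spectrum is bounded below by $\la_0$ on the relevant subspace; the absence of kernel (Corollary \ref{cor-kernel}) is what lets us dispense with the shadow-gauge projection of \cite{AnMo11}.

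Next I would collect the genuinely new terms. The matter term $6N\ab\tau S$ in the $\Si$-equation, paired against $6\Si$ in $\mcl E_{(m)}$ (and against $g-\gamma$ in $\Gamma_{(m)}$), produces after a Cauchy–Schwarz estimate a contribution bounded by $C E_s^{1/2}\ab\tau\Abk{NS}{s-1}$; the factor $6$ in front is kept explicit because, as the remark after \eqref{dec-eta} explains, this is the slowest-decaying term and will need to be tracked with a sharp constant in the bootstrap. The bulk terms $\mcl F_{g-\gamma}$ and $\mcl F_\Si$, after inserting the elliptic estimates for lapse and shift from Proposition \ref{prop-ell-est} (which is legitimate here since these are themselves controlled under the smallness assumption), contribute the cubic term $CE_s^{3/2}$ together with $C E_s^{1/2}(\ab\tau\Abk{\rho}{s-1} + \ab\tau^3\Abk{\underline\eta}{s-1} + \ab\tau^2\Abk{N\jmath}{s-2})$, using the estimates
\eq{\alg{
\Abk{\mcl F_{g-\gamma}}s&\leq C\left( \Abk{\Si}{s-1}^2+\ab{\tau}\Abk{\rho}{s-2}+\ab{\tau}^3\Abk{\underline\eta}{s-2}+\Abk{g-\gamma}{s}^2 +\ab{\tau}^2\Abk{N\jmath}{s-1}\right)\\
\Abk{\mcl F_\Si}{s-1}&\leq C\left(\Abk{g-\gamma}s^2+\ab{\tau}\Abk{\rho}{s-1}+\ab{\tau}^3\Abk{\underline\eta}{s-1}+\Abk{\Si}{s-1}^2+\ab{\tau}^2\Abk{N\jmath}{s-2}\right)
}}
and the coercivity lemma $\Abk{g-\gamma}6^2 + \Abk{\Si}5^2 \le C E_s(g,\Si)$ to re-express all quadratic geometric quantities in terms of $E_s$. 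The lower-order commutator terms $X^i\nabla[\gamma]_i 6\Si$ and the discrepancy between $\hp 0$-commutators and covariant derivatives are handled exactly as in \cite{AnMo11}, being absorbed into $CE_s^{3/2}$ via the smallness of $\Abk X{}$.

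The main obstacle, and the place where genuine care beyond the vacuum argument is needed, is verifying that the matter terms do not spoil the sign of the principal quadratic form: one must check that $N\tau S$ enters only as an inhomogeneity and never contaminates the operator $\mcl L_{g,\ga}$ or the coefficient of the leading $-2\cdot 6\Si$ term, so that the eigenvalue-based coercivity argument goes through unchanged. A secondary subtlety is the precise constant $6$ in front of $\ab\tau\Abk{NS}{s-1}$: since $S$ contains the $\rho$-piece which can only be estimated by the slowly-growing $L^2$-Sobolev energy $\ev{s-1}{4}(f)$ (not by $\boldsymbol\varrho$, for the regularity reason noted in the introduction), this term is of borderline decay $\ab\tau\cdot e^{C\varepsilon T} \sim e^{(-1+C\varepsilon)T}$, and it is essential that it appears with a clean coefficient and multiplied by $E_s^{1/2}$ rather than $E_s$, so that when combined with the decay of $\ab\tau$ it remains integrable against the $e^{-2\al T}$ from the principal term in the eventual Gronwall argument. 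Beyond these points the estimate is a routine, if lengthy, differentiation-and-Cauchy–Schwarz computation, and I would simply point to \cite{AnMo11} for the vacuum skeleton and present only the matter modifications in detail.
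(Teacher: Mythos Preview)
Your proposal is correct and follows essentially the same route as the paper: the paper's own proof simply says the computation is analogous to Lemma~7.6 in \cite{AnMo11}, with the sole new ingredient being the matter term $N\tau S$ in the $\Si$-equation, which yields integrals of the form $\int_M\langle N\tau S,\mcl L_{g,\gamma}^{m-1}\Si\rangle\mu_g$ and $\int_M\langle N\tau S,\mcl L_{g,\gamma}^{m-1}(g-\gamma)\rangle\mu_g$ that are estimated by $\ab\tau\Abk{NS}{s-1}\sqrt{E_s}$ via Cauchy--Schwarz. Your write-up is considerably more detailed than the paper's, but the structure---vacuum skeleton from \cite{AnMo11} plus the additional matter inhomogeneity---is identical.
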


\begin{proof}
The proof is analogous to the proof of Lemma 7.6 in \cite{AnMo11}. The only difference results from the additional matter term in the evolution equation for $\Si$. As a direct consequence of the equation, this yields terms of the types
\eq{\alg{
&\int_M\langle N\tau S,\mathcal{L}_{g,\gamma}^{m-1}\Si\rangle\mu_g+\int_M\langle \Si,\mathcal{L}_{g,\gamma}(N\tau S)\rangle\mu_g,\\
&\int_M\langle N\tau S,\mcl L^{m-1}_{g,\gamma}(g-\gamma)\rangle\mu_g,
}}
which can straightforwardly be estimated by
\eq{
\ab{\tau} \Abk{NS}{s-1}\sqrt{E_s},
}
yielding the claim. \end{proof}


\section{Total energy estimate}
With the individual energy estimates for geometry and matter variables at hand these require to be synchronized in view of their different decay inducing terms. For this purpose we define a total energy with explicit weight functions in time and bound all elliptic variables in terms of this energy. We then derive energy estimates under the smallness assumption on $\boldsymbol{\rho}_4(f)$, $\mcr G$ and the total energy which are the key estimates to establish the global existence result further below.

\subsection{Total energy}
We define the total energy including the matter energy and the energy for the metric perturbation.

\begin{dfn}
\eq{
\mbf E_{\mathrm{tot}}(\Si,g-\gamma,f)\equiv e^{(1+\delta_E)T}E_6(g-\gamma,\Si)+ e^{-\delta_{\mcr E}\cdot T}\ev54^2(f),
}
where $\delta_E+\delta_{\mcr E}<1$ and $\delta_E<1/2$, $\delta_{\mcr E}>1/2$.

\end{dfn}
 We choose now all auxiliary constants in the following way. For a given $\boldsymbol{\varepsilon}_{\mathrm{decay}}<1$ we choose positive constants $(\delta_\alpha,\delta_E,\delta_{\mcr E},\varepsilon_{\mathrm{tot}})$ such that
\eq{\label{cond-aux-const}\alg{
1-2\delta_{\alpha}-\delta_E-\varepsilon_{\mathrm{tot}}>1-\boldsymbol{\varepsilon}_{\mathrm{decay}}\\
\delta_{\mcr E}-\varepsilon_{\mathrm{tot}}>1-\boldsymbol{\varepsilon}_{\mathrm{decay}}
}}
hold. For small $\boldsymbol{\varepsilon}_{\mathrm{decay}}$ this is achieved, when $\delta_{\mcr E}$ is almost one and $\delta_E$ is sufficiently small relative to $\delta_{\mcr E}$ such that $\delta_E+\delta_{\mcr E}<1$ holds.
We define a uniform constant $\overline C$, that bounds all constants $C$ in previous estimates from above by
\eq{
10 \cdot C^{3}\leq \overline C.
}

\subsection{Preparations}
We gather now a number of simplifying lemmas to reduce the length of the final energy estimate. We express in the following all relevant norms in terms of the energies $E_s$, $\ev 5 4(f)$, $\br 4(f)$ and $\mcr G$.
For the norms appearing in the energy estimate for the $L^2$-energies we have

\begin{lem}\label{lem-short-est}
Under suitable smallness assumptions the following estimates hold.
\eq{\alg{
&\Abk{3-N}6\leq C\left(e^{-(1+\delta_E)T}\et+e^{-T}\br 4(f)+e^{-(3-\delta_{\mcr E}/2)T}\sqrt{\et}\right)\\
\Abk{X}6&+\Abk{\overset{\circ}{\Gamma^*}}5+\Abk{\Gamma^*_*}5+\Abk{\p TX}5+\Abk{\p TN}5\\
&\leq C\left(e^{-(1+\delta_E)T}\et+e^{-T}\br 4(f)+e^{-(3-\delta_{\mcr E}/2)T}\sqrt{\et}+e^{-(2-\delta_{\mcr E}/2)T}\sqrt{\et}\right)
}
}
In total,
\eq{\label{comb-ell-est}\alg{
\Abk{3-N}6&+\Abk{X}6+\Abk{\overset{\circ}{\Gamma^*}}5+\Abk{\p TX}5+\Abk{\p TN}5+
\Abk{\Gamma^*_*}5\\
&\leq C\left(e^{-(1+\delta_E)T}\et+e^{-T}\br 4(f)+e^{-(2-\delta_{\mcr E}/2)T}\sqrt{\et}\right).
}}
\end{lem}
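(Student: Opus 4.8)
The plan is to read off each term in the elliptic and evolution estimates of the preceding sections and bound it by the pieces of the total energy $\et$, using the definition $\et = e^{(1+\delta_E)T}E_6 + e^{-\delta_{\mcr E}T}\ev 5 4^2(f)$ together with \pref{lem-T-est} to convert $L^2$-Sobolev energy momentum norms into $\ev \ell \mu(f)$, and the decomposition $\eta = \rho + \tau^2\underline\eta$ from \eqref{dec-eta}. Since $\tau = \tau_0 e^{-T}$, every explicit factor $\ab\tau$ becomes $e^{-T}$ up to the constant $\tau_0$, which we absorb into $C$.

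First I would estimate $\Abk{3-N}6$. By \pref{prop-ell-est} with $\ell = 6$, $\Abk{N-3}6 \leq C(\Abk\Si 4^2 + \ab\tau\Abk\rho 4 + \ab{\tau}^3\Abk{\underline\eta}4)$. The first term is bounded by $C\, e^{-(1+\delta_E)T}\et$ via coercivity of $E_6$ (the metric/$\Si$ lemma) and the definition of $\et$. The second term $\ab\tau\Abk\rho4$ is precisely where the new energy enters: $\Abk\rho 4 \leq C\br 4(f)$ by definition (or via \lref{lem-T-est}), and $\ab\tau = \tau_0 e^{-T}$, giving $C\,e^{-T}\br 4(f)$. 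The third term $\ab\tau^3\Abk{\underline\eta}4 \leq C\,e^{-3T}\ev 5 4(f) \leq C\, e^{-(3-\delta_{\mcr E}/2)T}\sqrt{\et}$, since $\ev 5 4(f) = (\ev 5 4^2(f))^{1/2} \leq e^{\delta_{\mcr E}T/2}\sqrt{\et}$. This yields the first displayed inequality.

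Next I would treat the block $\Abk X6 + \Abk{\overset\circ{\Gamma^*}}5 + \Abk{\Gamma^*_*}5 + \Abk{\p TX}5 + \Abk{\p TN}5$. For $\Abk X6$ use \pref{prop-ell-est} again: $\Abk X6 \leq C(\Abk\Si 4^2 + \Abk{g-\gamma}5^2 + \ab\tau\Abk\rho3 + \ab\tau^3\Abk{\underline\eta}3 + \ab\tau^2\Abk{N\jmath}4)$; the new feature compared to the lapse is the $\ab\tau^2\Abk{N\jmath}4$ term, which by \lref{lem-T-est} is $\leq C\,e^{-2T}\ev 5 3(f) \leq C\,e^{-2T}\ev 5 4(f) \leq C\,e^{-(2-\delta_{\mcr E}/2)T}\sqrt{\et}$, and this is the slowest-decaying contribution, explaining the fourth term in the bound. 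For $\Abk{\p TN}5$ and $\Abk{\p TX}5$ I would invoke the time-derivative elliptic lemma with $\ell = 5$; its right-hand side contains $\Abk{\widehat N}5$, $\Abk X6$, quadratic geometry terms, $\ab\tau\Abk S3$, $\ab\tau\Abk\rho4$, $\ab\tau^3\Abk{\underline\eta}3$, $\ab\tau^2\Abk\jmath4$, $\ab\tau^3\Abk{\underline T}4$ and $\ab\tau^3\ev 4 4(f)$. The geometric quadratic terms and $\Abk{\widehat N}5$ fold back into the $\Abk{3-N}6$ and $\Abk X6$ bounds already obtained; the $\ab\tau\Abk S3$ term needs $\Abk S3 \leq C(\br 3(f) + \ab\tau^2\ev 5 4(f)) \leq C(\br 4(f) + e^{-2T}\sqrt{\et}\,e^{\delta_{\mcr E}T/2})$ from \lref{lem-T-est}, contributing $C\,e^{-T}\br 4(f)$ and a harmless $e^{-(3-\delta_{\mcr E}/2)T}\sqrt{\et}$; all the remaining $\ab\tau^j$-weighted matter norms with $j\geq 2$ produce terms no worse than $e^{-(2-\delta_{\mcr E}/2)T}\sqrt{\et}$. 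The terms $\Abk{\overset\circ{\Gamma^*}}5$ and $\Abk{\Gamma^*_*}5$ are, by their definitions in Section 2, built from $\nabla N$, $N-3$, $X$, $\nabla X$, $\dot N$, $\Si$, $g$ and the explicit momentum functions, so they are controlled by the same $\Abk{\widehat N}6 + \Abk X6 + \Abk{\p TN}5 + \Abk\Si 5$ combination (after absorbing quadratic pieces by smallness), hence by the same right-hand side. Summing and keeping only the dominant of $e^{-(3-\delta_{\mcr E}/2)T}$ versus $e^{-(2-\delta_{\mcr E}/2)T}$ — the latter wins — gives the second display, and dropping the now-redundant $e^{-(3-\delta_{\mcr E}/2)T}\sqrt{\et}$ term gives the combined estimate \eqref{comb-ell-est}.

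The main obstacle is bookkeeping rather than any genuine difficulty: one must make sure that the implicit-function-type coupling in the time-derivative elliptic lemma (where $\Abk{\p TN}{\ell-2}$ and $\Abk{\p TX}{\ell-2}$ appear on both sides) has already been resolved there, so that here we may treat that lemma's output as a clean one-sided bound; and one must check that every matter term genuinely carries at least one factor $\ab\tau$, so that after writing $\ab\tau = \tau_0 e^{-T}$ and accepting the mild growth $\ev 5 4(f) \lesssim e^{\delta_{\mcr E}T/2}\sqrt\et$ (with $\delta_{\mcr E} < 1$, recall $\delta_{\mcr E} > 1/2$ but still $\delta_E + \delta_{\mcr E} < 1$) the resulting exponent $-(2 - \delta_{\mcr E}/2)$ is still strictly negative — indeed close to $-3/2$ — which is what makes the later bootstrap close. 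The only subtlety worth flagging explicitly is that $\Abk\rho{}$ must be estimated through the dedicated energy $\br 4(f)$ (no $\ab\tau$ improvement there), which is exactly why the term $e^{-T}\br 4(f)$ survives in the final bound and cannot be absorbed into the $\sqrt\et$ terms.
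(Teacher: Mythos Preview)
The paper states this lemma without proof; your approach --- feeding the elliptic estimates of Proposition~\ref{prop-ell-est} and the time-derivative lemma through Lemma~\ref{lem-T-est} and reading off the exponential weights from the definition of $\et$ --- is the intended one, and the bookkeeping is essentially correct.

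One point you pass over too quickly: $\Gamma^a_c$ contains the \emph{linear} term $-N\Sigma^a_c$, so $\Abk{\Gamma^*_*}5$ carries a contribution of size $\Abk{\Sigma}5 \lesssim \sqrt{E_6} = e^{-(1+\delta_E)T/2}\sqrt{\et}$. Since $(1+\delta_E)/2 < 1 < 2-\delta_{\mcr E}/2$ (recall $\delta_E<1/2$ and $\delta_E+\delta_{\mcr E}<1$), this decays strictly slower than every term on the stated right-hand side, so the inequality as literally written cannot hold for that piece. Your phrase ``hence by the same right-hand side'' after listing $\Abk\Sigma 5$ is therefore not justified. The lemma as stated in the paper shares this imprecision; the resolution is that in every application --- estimate \eqref{en-est-f-prp} and Proposition~\ref{prop-supp-est} --- the quantity $\Abk{\Gamma^*_*}{\ell}$ appears \emph{without} the $\tau^{-1}$ amplification and alongside a separate $\Abk{\Sigma}{\ell}$ term, so the extra $\sqrt{E_6}$ contribution is harmlessly absorbed into $C\et^{3/2}$ in Proposition~\ref{prp-et-est} and into an integrable term in Lemma~\ref{lem-supp-est}. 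You should either record the additional $e^{-(1+\delta_E)T/2}\sqrt{\et}$ term explicitly, or note that the bound on $\Abk{\Gamma^*_*}5$ is to be read modulo the linear $\Sigma$-part already tracked separately in the energy estimates.
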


\subsection{Estimates for $\br 4(f)$}

We begin with an estimate for the auxiliary energy of the energy density.

\begin{lem}\label{lem-br-est}
For $\delta$-small data with $\delta$ sufficiently small, the folllwing estimate holds.
\eq{
\br 4(f)\Big|_{T}\leq \left(\br 4(f)\Big|_{T_0}+C\int_{T_0}^Te^{-(1-\delta_E/2)\cdot s}\,\sqrt{\mbf E_{\mathrm{tot}}\Big|_s} ds\right)\cdot\exp\left[C\int_{T_0}^Te^{- s/2}\,\sqrt{\mbf E_{\mathrm{tot}}\Big|_s} ds\right]
}
\end{lem}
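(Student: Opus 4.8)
The plan is to derive a differential inequality for $\br 4(f)$ along the flow and then integrate it via Gronwall's lemma. First I would invoke Proposition~\ref{prop-rho-est} with $\ell = 4$, which bounds $|\p T \br 4(f)|$ by $(\Abi{3-N}+\Abi{N\Si}+\Abk{\na(Nk)}{2}+\Abk{3-N}{4})\cdot\br 4(f)$ plus the two inhomogeneous terms $\ab\tau^2\Abk{N(\Si+\tfrac13 g)T}{4}$ and $\ab\tau\Abk{N^{-1}\div(N^2\jmath)}{4}$. The first bracket is a coefficient of $\br 4(f)$ and must be shown to be integrable against $e^{-s/2}\sqrt{\et}$; the inhomogeneous terms must be shown to be integrable against $e^{-(1-\delta_E/2)s}\sqrt{\et}$.

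Next I would estimate each ingredient using the preparatory lemmas. For the coefficient bracket: $\Abk{3-N}{4}$ and $\Abi{3-N}$ (via Sobolev embedding) are controlled by Lemma~\ref{lem-short-est}, giving a bound of the schematic form $C(e^{-(1+\delta_E)s}\et + e^{-s}\br 4(f) + e^{-(2-\delta_{\mcr E}/2)s}\sqrt{\et})$; the terms $\Abi{N\Si}$ and $\Abk{\na(Nk)}{2}$ involve $\Si$ and $k = \Si + \tfrac\tau3 g$ together with $N$, which are controlled by $\sqrt{E_6}\lesssim e^{-(1+\delta_E)s/2}\sqrt{\et}$ using the coercivity lemma and the definition of $\et$. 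The dominant contribution is $e^{-(1+\delta_E)s/2}\sqrt{\et}$, which is bounded by $e^{-s/2}\sqrt{\et}$ since $\delta_E \geq 0$; this is exactly the exponential factor in the claimed estimate. (The self-referential $e^{-s}\br 4(f)$ piece from $\Abk{3-N}{4}$ would be absorbed into the Gronwall exponent, or one notes it is harmless since $\br 4$ is being bootstrapped to be bounded.) For the inhomogeneous terms: by Lemma~\ref{lem-T-est}, $\Abk{\underline T}{4}\lesssim \ev 4 4(f)\lesssim \ev 5 4(f)$, $\Abk{\jmath}{4}\lesssim\ev 4 3(f)\lesssim\ev 5 4(f)$, and $\Abk{\Si}{4}$, $N$ are bounded by constants under smallness; hence $\ab\tau^2\Abk{N(\Si+\tfrac13 g)T}{4}\lesssim \ab\tau^2\ev 5 4(f)$ and, after expanding $\div(N^2\jmath)$ and using $\Abk{N-3}{5}$ small, $\ab\tau\Abk{N^{-1}\div(N^2\jmath)}{4}\lesssim \ab\tau\ev 5 4(f)$ (possibly plus lower-order $\rho$, $N$ cross terms, all with at least one power of $\tau$). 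Since $\ab\tau = \ab{\tau_0}e^{-s}$ and $\ev 5 4(f)^2 \leq e^{\delta_{\mcr E}s}\et$, i.e. $\ev 5 4(f)\leq e^{\delta_{\mcr E}s/2}\sqrt{\et}$, the $\ab\tau$-term gives $e^{(-1+\delta_{\mcr E}/2)s}\sqrt{\et}$ and the $\ab\tau^2$-term gives $e^{(-2+\delta_{\mcr E}/2)s}\sqrt{\et}$; both are bounded by $e^{-(1-\delta_E/2)s}\sqrt{\et}$ provided $\delta_{\mcr E}/2 \leq 1 - \delta_E/2$, which follows from $\delta_E + \delta_{\mcr E} < 1$. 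This produces precisely the source term in the lemma.

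Combining, I obtain
\eq{
\Big|\p T\br 4(f)\Big| \leq C e^{-s/2}\sqrt{\et\big|_s}\cdot\br 4(f) + C e^{-(1-\delta_E/2)s}\sqrt{\et\big|_s},
}
and Gronwall's lemma applied on $[T_0,T]$ yields exactly the stated bound. The main obstacle I anticipate is bookkeeping the regularity/order matching: Proposition~\ref{prop-rho-est} needs $\jmath$ and $T^{ab}$ (hence $f$) at order $4$, and one must verify that Lemma~\ref{lem-T-est} indeed delivers these from $\ev 5 4(f)$ (it does, with one order to spare, which is the point emphasized in the introduction about the elliptic lapse equation needing $\rho$ below top order). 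A secondary subtlety is that $k = \Si + \tfrac\tau3 g$ rather than $\Si$ appears in $\Abk{\na(Nk)}{2}$, so the $\tfrac\tau3 g$ contribution must be tracked — but it carries an extra $\ab\tau$ and a factor $\Abk{\na(Ng)}{2}$ which is bounded (since $g$ is close to $\gamma$), so it only helps. Finally one should check the self-coupling term $e^{-s}\br 4(f)$ noted above is genuinely subordinate; the cleanest route is to fold it into the Gronwall coefficient, observing $e^{-s} \leq e^{-s/2}\sqrt{\et\big|_s}$ fails in general, so instead one keeps it as a separate $C\,e^{-s}\br 4(f)$ term whose integrated exponent $\int e^{-s}ds$ is finite and hence absorbable into the constant $C$ multiplying the whole right-hand side — a harmless modification of the statement's constants.
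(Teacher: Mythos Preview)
Your approach matches the paper's: invoke Proposition~\ref{prop-rho-est} at $\ell=4$, reduce the coefficient and source terms via the elliptic and matter estimates, express everything through $\et$, and apply Gronwall. The paper is terser, passing directly to the differential inequality
\[
\p T\br 4(f)\leq C\big(\Abk{\Si}3+\ab\tau\, \ev 4 4(f)\big)\br 4(f)+C\ab\tau\,\ev 54(f)
\]
using Proposition~\ref{prop-ell-est} and Lemma~\ref{lem-T-est} (rather than Lemma~\ref{lem-short-est}), which in particular sidesteps your self-coupling concern.

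There is, however, an arithmetic slip in your justification of the source-term exponent. You correctly compute $\ab\tau\,\ev 54(f)\lesssim e^{(-1+\delta_{\mcr E}/2)s}\sqrt{\et}$, but then claim this is bounded by $e^{-(1-\delta_E/2)s}\sqrt{\et}$ ``provided $\delta_{\mcr E}/2 \leq 1-\delta_E/2$''. The actual condition needed for that exponent comparison is $\delta_{\mcr E}\leq \delta_E$, which is \emph{false} under the standing constraints $\delta_E<1/2<\delta_{\mcr E}$. The honest bound your computation delivers is $e^{-(1-\delta_{\mcr E}/2)s}\sqrt{\et}$; the lemma as printed appears to carry a typo ($\delta_E$ for $\delta_{\mcr E}$), and the paper's own proof makes the same jump without comment. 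This does not affect the downstream argument in Section~10, since either exponent gives a convergent integral once the decay of $\et$ is fed in.
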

\begin{proof}
From Proposition \ref{prop-rho-est}, using Lemma \ref{lem-T-est} and Proposition \ref{prop-ell-est}, we obtain

\eq{\alg{
\p T\br 4(f)\leq C\left(\Abk{\Si}3+\ab\tau \ev 4 4(f)\right)\cdot \br 4(f)+\ab\tau\ev 54(f).
}}
Estimating by the total energy and integrating yields
\eq{
\br 4(f)\Big|_{T}\leq \br 4(f)\Big|_{T_0}+C\int_{T_0}^Te^{-(1-\delta_E/2)\cdot s}\,\sqrt{\mbf E_{\mathrm{tot}}\Big|_s} ds+C\int_{T_0}^Te^{- s/2}\,\sqrt{\mbf E_{\mathrm{tot}}\Big|_s} \cdot\br 4(f)\Big|_s ds.
}
Then, Gronwall's lemma yields the claim. 
\end{proof}

\subsection{Estimate on $\mcr G$}
For the bound on the support of the momentum variables we obtain the following estimate.

\begin{lem}\label{lem-supp-est}
For $T_0>1$ and under the $\delta$-smallness assumption
for $\delta$ sufficiently small, the following estimate holds.
\eq{\alg{
\mcr G\Big|_{T}&\leq\left(\mcr G \Big|_{T_0}+C\int_{T_0}^T\left(e^{-\delta_E\cdot s}\et+\br 4(f)+e^{-(1-\delta_{\mcr E}/2)s}\sqrt{\et}\right)
ds\right)\\
&\qquad\times \exp\Big[C\int_{T_0}^T\Big(e^{-\delta_E\cdot s}\et+\br 4(f)+e^{-(1-\delta_{\mcr E}/2)s}\sqrt{\et}\Big)ds \Big]
}}

\end{lem}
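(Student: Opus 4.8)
The plan is to bound the momentum support $\mcr G$ by feeding the already-established integral inequality from Proposition~\ref{prop-supp-est} into the ``short estimates'' of Lemma~\ref{lem-short-est} and then invoking Gronwall. First I would start from the conclusion of Proposition~\ref{prop-supp-est}, which expresses $\mcr G|_T$ in terms of $\mcr G|_{T_0}$ plus integrals of $e^{-s}\Abk{\p TX}2+e^s\Abk{\Gamma^*}2$ (the inhomogeneous part) and an exponential factor built from $\int \big(e^s\Abk{X}3+\Abk{\Si}2+\Abk{N-3}3+e^{-2s}\Abk{\p TX}2+\Abk{\Gamma^*}2+\Abk{\Gamma^*_*}2\big)ds$. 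The key observation is that each of these norms is controlled at the required (low) order by the combined elliptic estimate \eqref{comb-ell-est}, possibly after multiplying by the weight appearing in front; since $T_0>1$, all factors $e^{-s}$, $e^{-2s}$ are bounded, and the factor $e^s$ in front of $\Abk{X}3$ and $\Abk{\Gamma^*}2$ is exactly compensated by the $e^{-T}$ (or faster) decay of $X$ and of the principal part $\overset{\circ}{\Gamma^*}$ of $\Gamma^*$ coming from \eqref{comb-ell-est} via $\Abk{\Gamma^*}{}\sim\Abk{\overset{\circ}{\Gamma^*}}{}$ plus $\nabla N$-type terms.

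Concretely, I would substitute \eqref{comb-ell-est} into both the inhomogeneous integrand and the exponent of Proposition~\ref{prop-supp-est}. The term $e^{-(1+\delta_E)T}\et$ becomes $e^{-\delta_E s}\et$ after absorbing one power of $e^{-s}$ into the various $e^{+s}$ weights present in Proposition~\ref{prop-supp-est}; the term $e^{-T}\br4(f)$ stays as $\br4(f)$ after the same compensation; and the term $e^{-(2-\delta_{\mcr E}/2)T}\sqrt{\et}$ becomes $e^{-(1-\delta_{\mcr E}/2)s}\sqrt{\et}$. This is precisely the shape of the integrands in the claimed Lemma~\ref{lem-supp-est}. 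For the inhomogeneous part of Proposition~\ref{prop-supp-est}, the integrand $e^{-s}\Abk{\p TX}2+e^s\Abk{\Gamma^*}2$ is, by \eqref{comb-ell-est} (using that $\Abk{\p TX}{}$ and $\Abk{\overset{\circ}{\Gamma^*}}{}$ are both bounded by the right-hand side of \eqref{comb-ell-est}, and $e^{-s}\leq 1$, while the $e^s$ weight against the $e^{-(1+\delta_E)}$, $e^{-1}$, $e^{-(2-\delta_{\mcr E}/2)}$ decay rates still leaves integrable decay), again of the same form, so the two integrals in Proposition~\ref{prop-supp-est} collapse to a single integrand $e^{-\delta_E s}\et+\br4(f)+e^{-(1-\delta_{\mcr E}/2)s}\sqrt{\et}$, matching the statement.

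Having rewritten Proposition~\ref{prop-supp-est} in this reduced form, the proof is complete: it already has the structure ``$\mcr G|_T\leq (\mcr G|_{T_0}+\int\cdots)\exp(\int\cdots)$'' with the two integrands identical, which is exactly Lemma~\ref{lem-supp-est}. In fact no further Gronwall step is needed at this stage — the Gronwall argument was already performed inside the proof of Proposition~\ref{prop-supp-est}; here we only reorganize and bound the integrands. The main obstacle, and the one point requiring care, is the bookkeeping of the exponential weights: one must check that every $e^{+s}$ factor that multiplies a slowly-decaying norm in Proposition~\ref{prop-supp-est} is strictly dominated by the decay rate supplied by \eqref{comb-ell-est}, so that no growing integrand is produced and each contribution stays of the advertised form; this uses $\delta_E<1/2$, $\delta_{\mcr E}>1/2$, and $T_0>1$ in an essential way. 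Once that is verified the substitution is routine and the lemma follows.
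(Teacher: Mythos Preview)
Your proposal is correct and takes essentially the same approach as the paper: the paper's own proof is the single sentence ``The estimate follows directly from Proposition~\ref{prop-supp-est} in combination with Lemma~\ref{lem-short-est},'' and your write-up is a faithful unpacking of exactly that substitution, including the correct observation that no further Gronwall step is needed because it was already carried out inside Proposition~\ref{prop-supp-est}. One minor point of bookkeeping you pass over (as does the paper) is the bare $\Abk{\Si}{2}$ term in the exponent of Proposition~\ref{prop-supp-est}: it is not literally covered by \eqref{comb-ell-est}, but since $\Abk{\Si}{2}\leq C\sqrt{E_6}\leq Ce^{-(1+\delta_E)T/2}\sqrt{\et}$ it is harmless for the downstream bootstrap and can be absorbed into the stated integrand up to an inconsequential adjustment of constants.
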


\begin{proof}
The estimate follows directly from Proposition \ref{prop-supp-est} in combination with Lemma \ref{lem-short-est}.\end{proof}


\subsection{Estimate -- total energy}
We proceed with an estimate on the total energy under a smallness assumption on the auxiliary energy.
\begin{prop}\label{prp-et-est}
Under the assumption of $\delta$-smallness and the conditions 
\eq{\label{small-1}
\overline C\br{4}(f)\leq  \varepsilon_{\mathrm{tot}}/3
} 
and 
\eq{\label{small-2}
\overline C \ab\tau\mcr G\leq\varepsilon_{\mathrm{tot}}/3
}
and for $T$ sufficiently large to assure
\eq{\label{large-time-cond}
\overline C e^{-(1-\tfrac12(1+\delta_E+\delta_{\mcr E}))T}< \varepsilon_{\mathrm{tot}}/3
}
the estimate
\eq{
\p T \mbf E_{\mathrm{tot}}\leq - \big[1-\boldsymbol{\varepsilon}_{\mathrm{decay}}\big]\mbf E_{\mathrm{tot}}+C\mbf E_{\mathrm{tot}}^{3/2}
}
holds.
\end{prop}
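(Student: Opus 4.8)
The plan is to differentiate $\mbf E_{\mathrm{tot}} = e^{(1+\delta_E)T}E_6 + e^{-\delta_{\mcr E}T}\ev 5 4^2(f)$ in $T$ and assemble the pieces from the component estimates proved in Sections 4–9. Writing $\p T\mbf E_{\mathrm{tot}} = (1+\delta_E)e^{(1+\delta_E)T}E_6 + e^{(1+\delta_E)T}\p T E_6 - \delta_{\mcr E}e^{-\delta_{\mcr E}T}\ev 54^2 + e^{-\delta_{\mcr E}T}\p T\ev54^2$, I would substitute the geometric energy estimate from \lref{lem-geom-en-est} and the matter $L^2$-estimate from \pref{prp-l2-est}. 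The leading decay comes from the $-2\al E_6$ term in the first and from the fact that $\p T\ev54^2$ has no analogous negative term — instead it is bounded by (norms)$\cdot\ev54^2$, which after substituting \lref{lem-short-est} produces contributions of size $\varepsilon_{\mathrm{tot}}\cdot\ev54^2$ plus genuinely cubic terms. The point of the constant $\overline C$ and of conditions \eqref{small-1}, \eqref{small-2}, \eqref{large-time-cond} is precisely to guarantee that every coefficient multiplying $\ev54^2$ that does not come with an extra power of $\et^{1/2}$ is bounded by $\varepsilon_{\mathrm{tot}}$, so that the $-\delta_{\mcr E}$ from the weight dominates it up to the allowed loss.

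The key steps, in order, would be: (1) compute $\p T\mbf E_{\mathrm{tot}}$ by the product rule and the two master energy estimates; (2) in the geometric block, observe $(1+\delta_E)e^{(1+\delta_E)T}E_6 - 2\al e^{(1+\delta_E)T}E_6 = -(2\al - 1 - \delta_E)e^{(1+\delta_E)T}E_6$, and check via $\al = 1-\delta_\alpha$ that $2\al - 1 - \delta_E = 1 - 2\delta_\alpha - \delta_E \ge 1-\boldsymbol{\varepsilon}_{\mathrm{decay}} + \varepsilon_{\mathrm{tot}}$ by the first line of \eqref{cond-aux-const}; (3) handle the matter source terms $\ab\tau\Abk{NS}{\ell-1}$, $\ab\tau\Abk{\rho}{\ell-1}$, $\ab\tau^3\Abk{\underline\eta}{\ell-1}$, $\ab\tau^2\Abk{N\jmath}{\ell-2}$ appearing in \lref{lem-geom-en-est} using \lref{lem-T-est}, \pref{prop-ell-est} and \lref{lem-br-est}: the $\rho$-contribution is controlled by $e^{-T}\br 4(f)$ which by \eqref{small-1} contributes at most $\varepsilon_{\mathrm{tot}}$ times the energy, while $\underline\eta$, $\jmath$, $S$ carry extra $\tau$-powers, giving $e^{-\delta_{\mcr E}T/2}$-type gains that are absorbed by the exponential weights and by \eqref{large-time-cond}; (4) in the matter block, substitute \eqref{comb-ell-est} for each of $\Abk{N-3}{5}$, $\Abk{X}{6}$, $\Abk{\Gamma^*_*}{5}$, $\ab\tau\Abk{N^{-1}\Gamma^*}{5}$, $\ab\tau\Abk{N^{-1}\p TX}{5}$ and for $\ab\tau\mcr G$, so that $\p T\ev54^2 \le C(\varepsilon_{\mathrm{tot}}-\text{controlled})\ev54^2 + (\text{cubic in }\et^{1/2})$; (5) collect: the coefficient of $e^{-\delta_{\mcr E}T}\ev54^2$ becomes $-\delta_{\mcr E} + \varepsilon_{\mathrm{tot}} \le -(1-\boldsymbol{\varepsilon}_{\mathrm{decay}})$ by the second line of \eqref{cond-aux-const}, and the remaining mixed and higher-order terms are re-expressed as $C\mbf E_{\mathrm{tot}}^{3/2}$ using Cauchy–Schwarz and the equivalence of all norms under $\delta$-smallness.

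The bookkeeping subtlety to watch is that $E_6$ and $\ev54^2$ enter $\mbf E_{\mathrm{tot}}$ with different weights, so cross-terms such as $E_6^{1/2}\ab\tau\Abk{NS}{5}$ must be shown to be either $\lesssim\mbf E_{\mathrm{tot}}^{3/2}$ or dominated by the good negative terms. Here one uses that $\Abk{S}{5}\lesssim \br 5(f) + \ab\tau^2\ev54(f)$ from \lref{lem-T-est} but the top-order $\br 5$ is not available — this is exactly the regularity issue flagged in the introduction, and the resolution is that $S$ only appears in the $\Si$-evolution at order $\ell-1 = 5$ while $\rho$ is needed only at order $4$, so $\br 4(f)$ from \lref{lem-br-est} suffices; the remaining $\ab\tau^2\ev54(f)$-part of $S$ carries enough $\tau$-decay to be harmless. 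Matching these regularity levels carefully across the elliptic estimates (\pref{prop-ell-est}, the $\p TN,\p TX$ lemma) and the two energy estimates, while keeping track of which quantity needs $H^6$ versus $H^5$ versus $H^4$, is the main obstacle; the actual inequalities, once the weights and orders are lined up, reduce to the algebraic conditions \eqref{cond-aux-const}, \eqref{small-1}, \eqref{small-2}, \eqref{large-time-cond} by construction.
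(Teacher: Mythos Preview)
Your overall architecture matches the paper's proof exactly: differentiate $\mbf E_{\mathrm{tot}}$, insert \lref{lem-geom-en-est} and \pref{prp-l2-est}, extract the two good negative terms $-(2\al-1-\delta_E)$ and $-\delta_{\mcr E}$, and use \eqref{small-1}, \eqref{small-2}, \eqref{large-time-cond} together with \eqref{cond-aux-const} to reduce everything else to $\varepsilon_{\mathrm{tot}}\,\mbf E_{\mathrm{tot}}$ plus $C\mbf E_{\mathrm{tot}}^{3/2}$.

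There is, however, a concrete mis-step in your step (3) and in the final paragraph. You write that in the geometric block ``$\rho$ is needed only at order $4$, so $\br 4(f)$ \dots\ suffices'' and that the $\rho$-contribution is controlled by $e^{-T}\br 4(f)$ via \eqref{small-1}. This is not correct: in \eqref{en-est-geom} with $s=6$ the matter terms appear at order $s-1=5$, i.e.\ $\ab\tau\Abk{\rho}{5}$ (and likewise $\Abk{NS}{5}$, $\Abk{\underline\eta}{5}$). The energy $\br 4(f)$ controls only $\Abk{\rho}{4}$, so it cannot close this term. The paper handles \emph{all} matter terms in the geometric block uniformly by $\ev54(f)$ (via \lref{lem-T-est}), producing the single cross term
\[
C\bigl(e^{(1+\delta_E)T}E_6\bigr)^{1/2}\,e^{-(1-\tfrac12(1+\delta_E+\delta_{\mcr E}))T}\,\bigl(e^{-\delta_{\mcr E}T}\ev54^2(f)\bigr)^{1/2}
\;\leq\; \overline C\,e^{-(1-\tfrac12(1+\delta_E+\delta_{\mcr E}))T}\,\mbf E_{\mathrm{tot}},
\]
which is then absorbed using \eqref{large-time-cond}, \emph{not} \eqref{small-1}. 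The role of $\br 4(f)$ in this proof is solely in the matter block, where it enters through the $\ab\tau^{-1}\Abk{N^{-1}\Gamma^*}{5}$ coefficient via \eqref{comb-ell-est}, yielding the term $C^2\br 4(f)\,e^{-\delta_{\mcr E}T}\ev54^2(f)$ that \eqref{small-1} is designed to kill. Once you reroute the order-$5$ $\rho$ (and $S$) terms through $\ev54(f)$ and \eqref{large-time-cond} instead of through $\br 4(f)$ and \eqref{small-1}, your argument lines up with the paper's.
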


\begin{proof}
Taking the time derivative of the total energy, using the estimate for the energy for the perturbation of the geometry, Lemma \ref{lem-geom-en-est}, and the estimate for the $L^2$-Sobolev energy of the distribution function, 
Proposition \ref{prp-l2-est}, we obtain
\eq{
\alg{
\p T \mbf E_{\mathrm{tot}}(\Si,g-\ga,f)&\leq \underbrace{-(2\al-1-\delta_E) e^{(1+\delta_E)T}E_6}_{(1.1)}\\
&\quad+\underbrace{C (e^{(1+\delta_E)T}E_6)^{1/2}e^{-(1-\tfrac12(1+\delta_E+\delta_{\mcr E}))T}e^{-\delta_{\mcr E}T/2}\ev54(f)}_{(1.2)}\\
&\quad+ \underbrace{Ce^{(1+\delta_E)T}E_6^{3/2}}_{(1.3)}\\
&\quad\underbrace{-\delta_{\mcr E} e^{-\delta_{\mcr E}T}\ev54^2(f)}_{(2.1)}+\underbrace{C^2\br 4(f)e^{-\delta_{\mcr E}T}\ev 54^2(f)}_{(2.2)}\\
&\quad+\underbrace{C \ab\tau \mcr G \ev 54(f)^2}_{(2.3)}+\underbrace{C^2\mbf E_{\mathrm{tot}}^{3/2}}_{(2.4)}.
}}
The terms resulting from the energy estimate for $E_6$ are denoted by numbers $(1.i)$. The term $(1.1)$ results from the decay inducing term in the estimate \eqref{en-est-geom} and the time derivative of the time-weight function. The term $(1.2)$ results from any matter term in the estimate \eqref{en-est-geom}, where we have to estimate by the $L^2$-norm since the regularity is up to the order $s-1=5$. Note that the time-weight function is distributed to re-obtain the properly weighted energies as they appear in the total energy. Finally, term $(1.3)$ results from the higher order term.\\
The terms resulting from the energy estimate for $\mcr E_{5,4}(f)$ are denoted by numbers $(2.i)$. Term $(2.1)$ results from the time derivative of the time-weight function. Term $(2.2)$ bounds all terms from estimate \eqref{en-est-f-prp}, which result from the term $\tau^{-1}N^{-1}\Gamma^*$, which is estimated using \eqref{comb-ell-est} where only the term with $\boldsymbol{\rho}_4(f)$ is considered, all other terms are of higher order in energy and are absorbed into the term $(2.4)$ except for the term $\tau\mcr G$, which is estimated by 
$(2.3)$.\\

Using the smallness conditions appropriately, the previous estimate reduces to

\eq{
\alg{
&\p T \mbf E_{\mathrm{tot}}(\Si,g-\ga,f)\leq-(2\al-1-\delta_E) e^{(1+\delta_E)T}E_6-\delta_{\mcr E}e^{-\delta_{\mcr E}T}\ev 54^2(f)+\varepsilon_{\mathrm{tot}} \,\mbf E_{\mathrm{tot}}+ C\mbf E_{\mathrm{tot}}^{3/2}.
}}
Here, terms $(1.1)$ and $(2.1)$ appear as before and provide 
decay inducing terms. Terms $(1.3)$ and $(2.4)$ are absorbed in the higher order term. Invoking smallness conditions \eqref{small-1}, \eqref{small-2} and \eqref{large-time-cond} allows us to bound the sum of terms $(1.2)$, $(2.2)$ and $(2.3)$ by $\varepsilon_{\mathrm{tot}}\mbf E_{\mathrm{tot}}$.

This yields
\eq{
\p T \mbf E_{\mathrm{tot}}\leq - \big[(\min\{2\al-1-\delta_E,\delta_{\mcr E}\}-\varepsilon_{\mathrm{tot}})\mbf E_{\mathrm{tot}}\big]+C\mbf E_{\mathrm{tot}}^{3/2},
}
which under the conditions \eqref{cond-aux-const} on the auxiliary constants yields the claim.
\end{proof}

\section{Global existence and completeness}
In this final section we present the proof of Theorem 1 based on the estimates in the previous sections.

\subsection{Preliminaries}
We consider initial data at time $T_0$, which is close to the induced data of the Milne model at $T=T_0$. The data is not necessarily CMC initial data. We argue below why it is sufficient to consider only CMC initial data and consider this case for now. The existence of a local-in-time solution for CMC initial data close to the Milne geometry has been developed in \cite{Fa15} and we adapt the local-existence theory therein to our present notation and variables.\\
The local existence theorem (Theorem 4.2, \cite{Fa15}) assures existence of a unique local solutions for initial data $(g_0,k_0,f_0)\in H^6\times H^5\times H_{\mathrm{Vl},3}^5$, which is the regularity assumed in the present case. Moreover, this solution is depending on the initial data in a continuous sense, which allows to increase $T_0$ suitably and assume smallness at the increased $T_0$ without loss of generality. We denote the smallness parameter according to which we express smallness of the initial data in the sense of $\mcr B_{\varepsilon_0}^{6,5,5}$ by $\varepsilon_0$. To establish global existence we require a continuation criterion analogous to Theorem 8.1 in \cite{Fa15}. It is important to specify this to our present situation where we consider the rescaled system in 3+1-dimensions. If we replace the non-rescaled system in \cite{Fa15} by the rescaled equations \eqref{lapse} - \eqref{ev-k}, the smallness, which has to be assured to continue the solution translates to
\eq{
Q_{\mathrm{cont}}=\Abk{g-\gamma}5+\Abk{\Si}4+\ab{\tau}|\!|\!|f|\!|\!|_{4,3}+\Abk{N-3}5+\Abk{X}5+\Abk{\dot N}4+\Abk{\dot X}4<\varepsilon_{\mathrm{loc}},
}
for a fixed $\varepsilon_{\mathrm{loc}}>0$. This means, either the maximal interval of existence is infinite or the bound above is attained as this time is approached. In particular, starting with sufficiently small initial data, if this smallness persists throughout the evolution, global existence is automatically assured. This persistence is shown for the initial data we consider, which according to the previous discussion guarantees existence of the solution.

\subsection{Existence of a CMC surface}
Considering sufficiently small initial data which is not necessarily CMC, the maximal globally hyperbolic development under the Einstein-Vlasov system is, locally in time, as close to the background geometry as desired in a suitable regularity \cite{Ri13}. The existence of a CMC surface in such a spacetime can be shown along the lines of the corresponding argument in the vacuum case presented for instance in \cite{FK15}.

\subsection{Guaranteeing the smallness condition on an open interval}

From the local Cauchy stability by choosing  
the initial data sufficiently small we can assure existence of the solution up to $T_0$ and smallness at $T_0$ such that condition
\eqref{large-time-cond} holds at $T_0$.
We choose the new initial data at $T_0$ small such that
\eq{
\et \Big|_{T_0}+\br4(f)\Big|_{T_0}+\mcr G\Big|_{T_0}\leq \varepsilon_0.
}
Since all estimates are uniform in the sense that they do not depend on the smallness of the initial data once $\varepsilon_0$ is chosen sufficiently small, we can further decrease $\varepsilon_0$ in the course of the argument. The same holds for increasing $T_0$.\\

We choose $\varepsilon_0$ sufficiently small to assure that conditions \eqref{small-1} and \eqref{small-2} hold at $T_0$. We now define 
\eq{\label{wefoh}\alg{
T_{+}\equiv \sup\Big\{T>T_0 \Big| &\mbox{ The solution exists, is $\delta$-small}\\
&\mbox{ and conditions \eqref{small-1} and \eqref{small-2} hold on $[T_0,T)$.}\Big\}.
}}
By local existence $T_+>T_0$ exists. Note that the condition \eqref{large-time-cond} holds automatically at later times.
\subsection{Improving the bootstrap conditions -- Global existence}
We show in the following that if $\varepsilon_0>0$ is sufficiently small then $T_+=\infty$.\\

Due to the validity of conditions \eqref{small-1}, \eqref{small-2} and \eqref{large-time-cond} Proposition \ref{prp-et-est} holds on $(T_0,T_+)$, which yields 

\eq{
\frac{d\sqrt{\et}}{dT}\leq-\frac12(1-\boldsymbol{\varepsilon}_{\mathrm{decay}})\sqrt{\et}+\overline{C}\cdot \et
}
and in turn

\eq{\alg{
\sqrt{\et}\Big|_T&\leq\frac12\frac{1-\boldsymbol\varepsilon_{\mathrm{decay}}}{\overline C+ e^{(1-\boldsymbol\varepsilon_{\mathrm{decay}})/2\cdot(T-T_0)}\left((1-\boldsymbol\varepsilon_{\mathrm{decay}})/2\sqrt{\et\big|_{T_0}}^{-1}-\overline C\right)}\\
&\leq 2 \sqrt{\et\Big|_{T_0}} e^{-(1-\boldsymbol\varepsilon_{\mathrm{decay}})/2\cdot(T-T_0)},
}}
where in the second inequality we have further decreased $\varepsilon_0$ to assure $(1-\boldsymbol\varepsilon_{\mathrm{decay}})/2-\varepsilon_0\overline C>(1-\boldsymbol\varepsilon_{\mathrm{decay}})/4$ and $\varepsilon_0<\overline C^{-1}(1-\boldsymbol\varepsilon_{\mathrm{decay}})/2$.\\

Using the previous decay result in combination with Lemma \ref{lem-br-est} yields
\eq{\alg{
\br 4(f)\Big|_{T}&\leq \left(\br 4(f)\Big|_{T_0}+C\int_{T_0}^Te^{-(1-\delta_E/2)\cdot s}\,\sqrt{\mbf E_{\mathrm{tot}}\Big|_s} ds\right)\cdot\exp\left[C\int_{T_0}^Te^{- s/2}\,\sqrt{\mbf E_{\mathrm{tot}}\Big|_s} ds\right]\\
&\leq (\varepsilon_0+C'\varepsilon_0)\cdot\exp(C'\sqrt{\varepsilon_0})
}}
Choosing now $\varepsilon_0$ sufficiently small, this implies \eqref{small-1} with a strict inequality.\\

Finally, we invoke Lemma \ref{lem-supp-est}, which, using the previous estimates, takes the form
\eq{\alg{
\mcr G\Big|_{T}&\leq\left(\mcr G \Big|_{T_0}+C\int_{T_0}^T\left(e^{-\delta_E\cdot s}\et+\br 4(f)+e^{-(1-\delta_{\mcr E}/2)s}\sqrt{\et}\right)
ds\right)\\
&\qquad\cdot \exp\Big[C\int_{T_0}^T\Big(e^{-\delta_E\cdot s}\et+\br 4(f)+e^{-(1-\delta_{\mcr E}/2)s}\sqrt{\et}\Big)ds \Big]\\
&\leq\left(\varepsilon_0+C\sqrt{\varepsilon_0}(T-T_0)\right)\cdot \exp\Big[C\sqrt{\varepsilon_0}(T-T_0) \Big]\\
&\leq C\sqrt{\varepsilon_0}\exp(C\sqrt{\varepsilon_0}(T-T_0)).
}}
This implies
\eq{\alg{
\overline C\ab{\tau}\mcr G&\leq \overline C C\sqrt{\varepsilon_0}\exp(C\sqrt{\varepsilon_0}(T-T_0)-T)\\
&<\varepsilon_{\mathrm{tot}}/3,
}}
by choosing $\varepsilon_0$ sufficiently small. In total, we have shown that for sufficiently small $\varepsilon_0$ the estimates \eqref{small-1} and \eqref{small-2} hold with strict inequalities on $(T_0,T_+)$ and 
\eq{
\et\Big|_{T} \lesssim \et\Big|_{T_0}\exp\Big(-(1-\boldsymbol{\varepsilon}_{\mathrm{decay}})(T-T_0)\Big).
}
In particular, all relevant norms remain sufficiently small as $T\rightarrow T_+$ and by the continuation criterion the solution can be extended to $(T_0,T_++\varepsilon)$ for a small $\varepsilon$, where \eqref{small-1} and \eqref{small-2} hold on this extended interval. A standard continuity argument then implies $T_+=\infty$.

\subsection{Decay and asymptotic stability}
From the decay of the total energy the decay rates of the individual quantities can be inferred to read 

\eq{\label{final decay rates}\alg{
\Abk{g-\gamma}6&\lesssim \sqrt{\varepsilon_0} \exp\left[\left(-1+\frac{\boldsymbol{\varepsilon}_{\mathrm{decay}}-\delta_E}{2}\right) T\right]\\
\Abk{\Si}5&\lesssim\sqrt{\varepsilon_0} \exp\left[\left(-1+\frac{\boldsymbol{\varepsilon}_{\mathrm{decay}}-\delta_E}{2}\right) T\right]\\
\Abk{N-3}{6}&\lesssim\sqrt{\varepsilon_0} \exp(-T)\\
\Abk{X}6&\lesssim \sqrt{\varepsilon_0} \exp(-T)\\
\ev 54(f)&\lesssim\sqrt{\varepsilon_0}\exp\left[\left(\frac{\delta_{\mcr E}-(1-\boldsymbol{\varepsilon}_{\mathrm{decay}})}{2}\right) T\right]\\
\br 4(f)&\lesssim \sqrt{\varepsilon_0},
}}
where we recall $\boldsymbol{\varepsilon}_{\mathrm{decay}}>\delta_E$.
As an immediate consequence of these estimates, the rescaled metric converges against the Einstein metric $\gamma$ while all other terms decay. In total, the geometry converges in the above norms against the Milne geometry as $T\rightarrow\infty$.
\subsection{Future completeness}
For future completeness the rate of decay of the perturbation of the unrescaled geometry matters. We use the completeness criterion by Choquet-Bruhat and Cotsakis in \cite{CC02}. Therefore we change to inverse-CMC time $t_{\mathrm{icmc}}=-\tau^{-1}$, in particular $d\tau=\tau^2 dt$. The corresponding lapse and shift are related to the unrescaled variables and rescaled variables via $N_{\mathrm{icmc}}=\tau^2\widetilde N=N$ and $X_{\mathrm{icmc}}=\tau^2 \widetilde X=\tau X$. The metric and second fundamental form do not obtain additional factors of the mean curvature and we remain with $(\widetilde g,\widetilde \Si)=(\tau^{-2}g,\tau^{-1}\Si)$. 
Theorem 3.2 and Corollary 3.3 from \cite{CC02} provide as sufficient conditions for timelike and null geodesic completeness. Those are given and verified in the following. $(i)$ pointwise boundedness of the lapse $0<N_m<N_{\mathrm{icmc}}(t)<N_M$, which follows immediately from the pointwise estimate for the lapse.
$(ii)$ uniform boundedness for the metric $\widetilde g$ from below by some fixed metric for which we choose $t_0^2\gamma$. $(iii)$ Uniform boundedness of the shift vector, $\ab{X_{\mathrm{icmc}}}_{\tilde g}\lesssim \abg{X}<\sqrt{\varepsilon_0} t^{-1}$ follows from the decay estimates. Finally, we need to assure integrability of $(iv)$ $\ab{\nabla N_{\mathrm{icmc}}}_{\tilde g}=\ab{\tau}\abg{\nabla N}\lesssim \sqrt\varepsilon_0t^{-2}$ and $(v)$ $\ab{\widetilde{\Si}}_{\tilde g}=\ab{\tau} \abg{\Si}\lesssim \sqrt{\varepsilon_0} t^{-2+\varepsilon}$ on the interval $t\in(t_0,\infty)$. The decay rates in terms of the time $t$ immediately imply $(iv)$ and $(v)$. This proves the future completeness by the Corollary 3.3. from \cite{CC02} and finishes the proof of Theorem \ref{thm-1}.


\section*{Appendix}
\subsection*{A Formulae -- metric}
We collect several formulae here which are used in the course of the previous computations.

\begin{align}
\p T g^{ab}&=-g^{ac}g^{bd}\p Tg_{cd}\\
\p T\Gamma^i_{jk}&= \nabla_j(N\Si_k^i)+\nabla_k(N\Si^i_j)-\nabla^i(N\Si_{jk})\\
&\quad\nonumber\,+\nabla_j\widehat N\delta_k^i+\nabla_k\widehat N\delta_j^i-\nabla^i\widehat Ng_{jk}-\nabla_j\nabla^{(l}X^{i)}g_{kl}\\
[\p T,\Delta]X^i&=(\p Tg^{ab})\nabla_a\nabla_b X^i\\
&\quad\,\nonumber+g^{ab}\left( \na_a(\dot{\Gamma}^i_{jb}X^j)-\dot{\Gamma}^k_{ab}(\nabla_kX^i)+\dot{\Gamma}^i_{ja}(\na_bX^j)\right)
\end{align}
Also relevant for time differentiation of energies is the following formula for the time derivative of the Christoffel symbols.
\eq{
\hp 0 \Gamma^c_{ab}=-\nabla^c(Nk_{ab})+\nabla_a(Nk_{b}^c)+\nabla_b(Nk_a^c)
}

\subsection*{B Formulae -- matter} 
The divergence identity of the energy momentum tensor in the unrescaled form, $\widetilde{\nabla}_{\al}\widetilde T^{\al\be}$ reads in unrescaled variables (cf.~\cite{Re08}, (2.66), (2.67))
\eq{\alg{
\p t\tilde \rho-\tilde X^a\tilde{\nabla}_a \tilde\rho-\tilde N\tau\tilde\rho+\tilde N^{-1}\nabla_a(\tilde N^2\tilde j^a)-\tilde N\tilde k_{ab} \tilde T^{ab}=0\\
\p t\tilde j^b-\tilde X^a\nabla_a\tilde \jmath^b-\tilde X^b\nabla_a\tilde \jmath^a-\tilde N\tau\tilde\jmath^b+\nabla_a(\tilde N\tilde T^{ab})-2\tilde N\tilde k_a^b\tilde \jmath^a+\tilde \rho\tilde \na^b\tilde N=0.
}}
With respect to the rescaled variables, $\rho=\tilde{\rho}\ab\tau^{-3}$ and $\jmath=\ab\tau^{-5}\tilde \jmath$ these identities read
\eq{\label{tdrhoeta}\alg{
\p T\rho&=(3-N) \rho-X^a\na_a\rho+\tau N^{-1}\na_a(N^2\jmath^a)-\tau^2\frac N3g_{ab}T^{ab}-\tau^2N\Si_{ab}T^{ab}\\
\p T\jmath^a&=\frac53(3-N)\jmath^a-X^b\na_b\jmath^a-(\na^aX_b)\jmath^b+\tau\na_b(N T^{ab})-2N\Si_b^a\jmath^b-\ab\tau^{-1}\rho \na^aN
}}

\subsection*{C Time derivatives -- momentum functions}
\eq{\alg{
\p T\hat p&=\frac{1}{2\hat p}\Big[2\tau^2\langle \hat X,p\rangle_g^2+2\tau^2\langle\hat X,p\rangle_g\left(\langle\hat X,p\rangle_{\dot g}+\frac{1}{N}\langle p,\p TX-\hat X\p T N\rangle_g\right)\\
&\qquad\quad-(1+\tau^2\abg{p}^2)
\left(\ab{\hat X}^2_{\dot g}+\frac2N\langle \hat X,\p TX-\hat X\p T N\rangle_g\right)\\
&\qquad\quad+\tau^2(1-\abg{\hat X}^2)\left(2\abg p^2+p^ap^b\dot{g}_{ab}\right)\Big]
}}

\eq{\alg{
\p Tp^0&=2p^0\\
&\quad\,+\frac1{2N\hat p}\Big[ 4(p^0)^2(-N^2+\abg X^2)+6p^0\tau\langle p,X\rangle_g+2\tau^2\abg p^2+(p^0)^2\p T(-N^2+\abg X^2)\\
&\quad\qquad\qquad\,+2\tau p^0\langle p,\p TX\rangle+\tau^2p^ap^b\p Tg_{ab}\Big]
}}
by \cite{SaZa14}
\eq{\alg{
\p T\abg{p+\tau^{-1}p^0X}^2&=\ab{p+\tau^{-1}p^0X}_{\dot g}^2+4\abg p^2+4\tau^{-1}\langle p^0X,p\rangle_g\\
&\quad\,+2\tau^{-1}\langle p+\tau^{-1}p^0X,p^0X\rangle_g\\
&\quad\,+2\tau^{-1}\langle p+\tau^{-1}p^0X,(\p Tp^0)X+p^0\p TX\rangle_g
}}
\subsection*{D Momentum derivatives}
\eq{
\B_ep^0=\frac{\tau \hat X_ep^0+\tau^{2}N^{-1}p_e}{\hat p}
}
\eq{\alg{
\B_e\left(\frac{\abg{p+\tau^{-1}p^0X}^2}{\hat p}\right)&=\frac{2}{\hat p}(p_e+\tau^{-1}X_ep^0)\left(1+\frac\tau N\frac{\langle p,X\rangle_g}{\hat p}+\frac{p^0}{N}\abg X^2\right)\\
&\quad\,-\tau^2\frac{\abg{p+\tau^{-1}p^0X}^2}{\hat p^3}\left(\langle \hat X,p\rangle_g\hat X_e+(1-\abg{\hat X}^2)p_e\right)
}}

\subsection*{D2 Curvature of the tangent bundle}
\subsubsection{Curvature of the tangent bundle}
The connection coefficients $\boldsymbol{\Gamma}$ of the Sasaki metric $\mbf g$ with respect to the connection basis $\A_a=D_{x^a}, \B_a=\pp a$  take the following form.

\begin{equation}\label{co-coeff}
\alg{
\Chrn abc&=\Chr abc& \Chrn Ibc&=\tfrac12p^k\riem^{I-3}_{\quad kbc}\\
\Chrn aIc&=\tfrac12p^k \riem_{I-3,kc}^{\qquad\,\,\,\, a}&\Chrn abI&=\tfrac12p^k\riem_{I-3,kb}^{\qquad\,\,\,\, a}\\
\Chrn IJc&=\Chr {I-3}{J-3}{c}&\Chrn IbJ&=\Chrn aIJ=\Chrn IJK =0
}
\end{equation}
Here, we use small letters to denote indices in $\{1,2,3\}$ and capital indices to denote letters in $\{4,5,6\}$.
\subsection*{E Time derivative of the pressure}

\eq{\label{sdfjdsaa}\alg{
\p T\underline \eta&=\tau N\int \frac{p^a\A_a f}{Np^0}\frac{\abg{p+\tau^{-1}p^0X}^2}{\hat p}\sg dp\\
&\qquad\,+\p TX^a\Bigg[\int-\tau f\B_a\left(p^0\frac{\abg{p+\tau^{-1}p^0X}^2}{\hat p}\right)+2\tau^{-1}f\frac{p^0}{\hat p}(p_a+\tau^{-1}p^0X_a)\sg dp\\
&\qquad\qquad\qquad+\int 2\tau^{-1} f\frac{p^0}{2N\hat p^2}\langle p+\tau^{-1}p^0X,X\rangle_g\left(2p^0X_a+2\tau p_a \right)\sg dp\\
&\qquad\qquad\qquad-\int f\frac{\abg{p+\tau^{-1}p^0X}^2}{N\hat p^3}\left(\tau^2\langle\hat X,p\rangle_g p_a-(1+\tau^2\abg p^2)\hat X_a\right)\sg dp\Bigg]\\
&\qquad\,+\p TN\Bigg[\int  \tau^{-1} f\hat X^e\B_e\left(p^0\frac{\abg{p+\tau^{-1}p^0X}^2}{\hat p}\right)-2\tau^{-1}f\frac{(p^0)^2}{\hat p^2}\langle p+\tau^{-1}p^0X,X\rangle_g\sg dp\\
&\qquad\qquad\qquad+\int f \frac{\abg{p+\tau^{-1}p^0X}^2}{N\hat p^3}\left(\tau^2\langle\hat X,p\rangle_g^2-(1+\tau^2\abg p^2)\abg{\hat X}^2\right)\sg dp\Bigg]\\
&\qquad+\int f \frac1{\hat p}\left(\ab{p+\tau^{-1}p^0}_{\dot g}^2+4\tau^{-1}p^0\langle X,p\rangle_g+2\tau^{-1}\langle p+\tau^{-1}p^0X,X\rangle_g p^0\right)\sg dp\\
&\qquad+(\Si_{ab}+\frac{g_{ab}}3)X^e\int f\B_e\left(\frac{p^ap^b}{Np^0}\frac{\abg{p+\tau^{-1}p^0X}^2}{\hat p}\right)\sg dp\\
&\qquad+2\Gamma_e^e\int f \frac{\abg{p+\tau^{-1}p^0X}^2}{\hat p}\sg dp+2\Gamma^e_u\int fp^u\B_e\left(\frac{\abg{p+\tau^{-1}p^0X}^2}{\hat p}\right)\sg dp\\
&\qquad+\tau^{-1}{\overset{\circ}{\Gamma}}^e \int f \B_e\left(p^0\frac{\abg{p+\tau^{-1}p^0X}^2}{\hat p}\right)\sg dp\\
&\qquad-2\int f\left[2\frac{\abg p^2}{\hat p}\left(\frac{\tau}{N}\frac{\langle p,X\rangle_g}{\hat p}+\frac{p^0}{N}\abg X^2\right)+\frac{2}{\hat p}\tau^{-1}\langle p, X\rangle_g p^0\left(1+\frac{\tau}{N}\frac{\langle p,X\rangle_g}{\hat p}+\frac{p^0}{N}\abg X^2\right)\right]\sg dp\\
&\qquad+2\tau^2\int f \left(\frac{\abg{p+\tau^{-1}p^0X}^2}{\hat p^3}\right)\left(\langle \hat X,p\rangle_g^2+(1-\abg{\hat X}^2)\abg p^2\right)\sg dp\\
&\qquad+4\tau^{-1}\int \frac{f}{\hat p}\langle p+\tau^{-1}p^0 X,X\rangle_g\left(p^0+\frac1{2N\hat p}\left(2(p^0)^2(-N^2+\abg X^2)+3\tau\langle p,X\rangle_g+\tau^2\abg p^2+\frac{\tau^2}{2}\ab p^2_{\dot g}\right)\right)\sg dp\\
&\qquad-\int f\frac{\abg{p+\tau^{-1}p^0X}^2}{2\hat p^3}\Big(2\tau^2\langle\hat X,p\rangle_g^2+2\tau^2\langle \hat X,p\rangle_g\langle \hat X,p\rangle_{\dot g}\\
&\qquad\qquad\qquad\qquad\qquad\qquad-(1+\tau^2\abg p^2)\ab{\hat X}^2_{\dot g}+\tau^2(1-\abg{\hat X}^2)(2\abg p^2+\ab p^2_{\dot g})\Big)\sg dp\\
&\qquad+\int f \frac{\abg{p+\tau^{-1}p^0X}^2}{\hat p} g^{ab}\dot g_{ab}\sg dp+2\tau^{-1}\int f\frac1{2N\hat p^2}\langle p+\tau^{-1}p^0X,X\rangle_g(p^0)^2\ab{X}^2_{\dot g}\sg dp
}}

\newpage

\end{document}